\documentclass[12pt,a4paper]{article}
\usepackage[T1]{fontenc}
\usepackage[latin1]{inputenc}
\usepackage{color}
\usepackage[dvipsnames]{xcolor}	
\usepackage{enumitem}
\usepackage{graphicx}
\usepackage{booktabs}
\usepackage{multicol}
\usepackage{multirow}
\usepackage{float}
\usepackage{amsfonts}
\usepackage{amsmath}
\usepackage{amssymb}
\usepackage{theorem}
\usepackage{mathtools}
\usepackage{enumerate}
\usepackage{enumitem}
\usepackage{upgreek}

\theorembodyfont{\slshape}
\newtheorem{theorem}{Theorem}[section]
\newtheorem{lemma}[theorem]{Lemma}
\newtheorem{proposition}[theorem]{Proposition}
\newtheorem{corollary}[theorem]{Corollary}
\newtheorem{definition}[theorem]{Definition}

\theorembodyfont{\upshape\small}

\newtheorem{remark}[theorem]{Remark}

\theorembodyfont{\ttfamily}

\newenvironment{proof}{\list{}{\itemindent-\leftmargin}%
	\item\textbf{Proof: }\small}{\hbox{}\hfill$\blacksquare$\newline\endlist\vspace{-2mm}}
\newcounter{num}
\renewcommand{\thenum}{\Alph{num}}
\newenvironment{parlist}
{%
	\begin{list}{(\thenum)}{%
			\usecounter{num}%
		}%
	}{\end{list}}

\numberwithin{equation}{section}
\newcommand{\ba}{\begin{equation}}
	\newcommand{\ea}{\end{equation}}
\newcommand{\bano}{\begin{equation*}}
	\newcommand{\eano}{\end{equation*}}

\newcommand{\landau}{\mbox{$\scriptstyle{\mathcal{O}}$}}

\newcommand{\bmu}{\mbox{\boldmath $\mu$}}
\newcommand{\bnu}{\mbox{\boldmath $\nu$}}
\newcommand{\D}{\mbox{\boldmath $D$}}
\newcommand{\Hes}{\mbox{\boldmath $H$}}
\newcommand{\X}{\mbox{\boldmath $X$}}
\newcommand{\x}{\mbox{\boldmath $x$}}
\newcommand{\Y}{\mbox{\boldmath $Y$}}

\newcommand{\f}{\mbox{\boldmath $f$}}

\newcommand{\bin}{\textup{Bin}}
\newcommand{\poi}{\textup{Poi}}
\newcommand{\skw}{\textup{Skew}}
\newcommand{\norm}{\mathcal{N}}

\newcommand{\bbn}{\mathbb{N}}
\newcommand{\bbz}{\mathbb{Z}}
\newcommand{\e}{\mathbb{E}}

\newcommand{\Pro}{\mathbb{P}}


\newcommand{\iid}{i.i.d.\,}
\newcommand{\iidno}{i.i.d.}
\newcommand{\ie}{i.\,e., }
\newcommand{\eg}{e.\,g., }

\usepackage{dsfont}

\usepackage{natbib}


\usepackage{hyperref}
\hypersetup{
	colorlinks,
	linkcolor={red!90!black}, 
	citecolor={green!50!black}, 
	urlcolor={black}
}


\begin{document}
	
	
	
	\parindent 0cm
	
	\title{Marginal Analysis of Count Time Series in the Presence of Missing Observations}
	\author{
		Simon Nik\thanks{
			Helmut Schmidt University, Department of Mathematics and Statistics, Hamburg, Germany.}
	}

	\maketitle

\begin{abstract}
	\noindent
	Time series in real-world applications often have missing observations, making typical analytical methods unsuitable. One method for dealing with missing data is the concept of amplitude modulation. While this principle works with any data, here, missing data for unbounded and bounded count time series are investigated, where tailor-made dispersion and skewness statistics are used for model diagnostics. General closed-form asymptotic formulas are derived for such statistics with only weak assumptions on the underlying process. Moreover, closed-form formulas are derived for the popular special cases of Poisson and binomial autoregressive processes, always under the assumption that missingness occurs. The finite-sample performances of the considered asymptotic approximations are analyzed with simulations. The practical application of the corresponding dispersion and skewness tests under missing data is demonstrated with three real-data examples.
	
	\medskip
	\noindent
	\textsc{Key words:}
	amplitude modulation; dispersion index; skewness index; missing data; Poisson autoregessive model; binomial autoregessive model
\end{abstract}

%
\section{Introduction}
%
\numberwithin{theorem}{section}
Count time series have a discrete sample space, \ie the time series consists of non-negative integers (counts) from $ \bbn_0=\{0,1,\ldots\} $. These counts could either be unbounded, \ie the range is given by the full set $ \bbn_0 $, or the range could be the finite subset $ \{0,\ldots,n\} $ with a given upper bound $ n\in\bbn=\{1,2,\ldots\} $. Many models for count time series have been proposed during the last decades, see \citet{Weiss18} for an overview. For example, for unbounded counts, the first-order integer-valued autoregressive (INAR(1)) model by \cite{McK85} is quite popular, which is defined by the model recursion 
\begin{align}\label{INAR1}
	X_t=\rho\circ X_{t-1}+\epsilon_t 
\end{align} 
with $ \rho\in [0,1) $. Here, the multiplication of the ordinary AR(1) model's recursion is replaced by ``$ \circ $'', which denotes the binomial thinning operation \citep{SteuHarn79}, and $ (\epsilon_t)_{\bbz=\{\ldots,-1,0,1,\ldots\}} $ represents an independent and identically distributed (\iidno) count process of innovations. The binomial thinning is defined by requiring that $ \rho\circ X\vert X \sim \bin(X,\rho)$, and we assume that these thinnings are executed independently of other thinnings and innovations. The most popular instance of the INAR(1) family is the Poisson INAR(1) model, abbreviated as Poi-INAR(1). This model has Poisson-distributed innovations, which lead to Poisson-distributed observations, \ie $ \epsilon_t\sim \poi(\lambda) $ and $ X_t\sim \poi(\mu) $ where $ \lambda>0$ and $ \mu=\e[X_t]=\lambda/(1-\rho) $. A detailed survey about the properties of the  INAR(1) model can be found in \citet[Chapter 2.1]{Weiss18}.

Together with the INAR(1) model, \cite{McK85} also introduced the binomial AR(1) model, abbreviated as BAR(1). This model is used if one is concerned with count data that has a fixed upper limit $ n\in\bbn $. In this case, the INAR(1)'s innovation term is replaced by a further thinning, $ \beta\circ(n-X_{t-1}) $,  such that this term cannot be larger than $ n-X_{t-1} $. Thus, for a fixed upper bound $ n\in\bbn $, the BAR(1) model is defined by the recursion
\begin{align}\label{BAR1}
	X_t=\alpha\circ X_{t-1}+\beta\circ(n-X_{t-1}),
\end{align} 
for $ \pi\in(0,1)$, $ \rho\in \big(\max\bigl\{\tfrac{-\pi}{1-\pi},\tfrac{1-\pi}{-\pi} \bigr\};\ 1 \big) $ and $ \beta\coloneqq\pi(1-\rho) $, $ \alpha\coloneqq\beta+\rho $. Here, all thinnings are performed independently of each other, and the thinnings at time $ t $ are independent of $ (X_s)_{s<t} $. In addition, the condition on $ \rho $ guarantees that the thinning parameters $ \alpha,\beta\in(0,1) $. For more details about the properties of the  BAR(1) model, we refer to \citet[Chapter 3.3]{Weiss18}.

It is known that the Poi-INAR(1) and BAR(1) processes establish a stationary and ergodic Markov chain with marginal distribution \poi$ (\lambda) $ and \bin$ (n,\pi) $, respectively. Furthermore, the autocorrelation function (ACF) is \\$ \rho(h)\coloneqq \mathbb{C}orr[X_t,X_{t-h}] =\rho^h $ for $ h\geq0 $. Moreover, the Poi-INAR(1) process as well as the BAR(1) process are $\alpha$-mixing with exponentially decreasing weights \cite[p.24 \& p.60]{Weiss18}. Hence, the requirements for the central limit theorem (CLT) according to \cite{Ibram62} are satisfied and will be used later on. For further properties, the reader is referred to \cite{Alzaid88, Weiss18}.

A wide variety of models and techniques for time series analysis can be found in \citep{Box15, Weiss18}. However, when faced with missing data in real-world applications -- stemming from issues such as measuring device malfunctions \citep{Neave70}, outlier removal in statistical process monitoring \citep{WeissTest15}, or sporadic failures \citep{Scheinok65, bloom70} -- many existing models and tools often cannot be used. The challenge of incomplete data in count time series has already been acknowledged in literature \citep{AndKarl10, WangZang14, YanWang22, Zhang23}. But while these existing works predominantly focus on parameter estimation for model fitting using imputation methods, our emphasis lies in exploring common types of diagnostic statistics. Model diagnostics are crucial within the classical Box--Jenkins program, where they are used for model identification and checks of model adequacy. This perspective has not been explored for count time series so far.

Existing methods adapting standard analytical tools for real-valued time series to missing data, such as the sample ACF or spectral estimators \citep{Parzen63, Scheinok65, bloom70, DunRob81, YajNish99}, are influenced by the work of \cite{Parzen63}. They treat real-valued time series with missing observations as a special case of amplitude-modulated time series, where the amplitude-modulating process is binary and independent of the actual process. Subsequently, tools for time series are applied to the amplitude-modulated time series, with variations in their asymptotic behaviors based on the type of missing data. Here, \cite{Scheinok65} assumed missing data to be \iidno, while \cite{bloom70} allowed for serial dependence.

In this article, we derive the general asymptotic formulas for unbounded and bounded dispersion and skewness indices in the context of missing data. Moreover, we are able to derive closed-form formulas for popular special cases of  Poisson and binomial autoregressive processes, and use these expressions for model diagnostics. This article is organized as follows. Section \ref{Ch_models} provides a brief background on common univariate and bivariate count models. Section \ref{Ch_MissData} introduces the concept of amplitude modulation, setting the foundation for our subsequent derivations in Sections \ref{Ch_PInd}--\ref{Ch_SkewInd}. These sections present crucial results on the asymptotics of various indices, with full derivations available in the Appendix. Section \ref{Ch_SimStudy} investigates the finite-sample performance of our asymptotic approximations through simulations. In Section \ref{Ch_RealData}, we apply our asymptotic results to three real-data sets. Finally, Section \ref{Conclusions} draws conclusions and outlines potential future research directions.
%
\section{On univariate and bivariate count models}\label{Ch_models} 
%
We consider two types of distribution for a univariate count random variable $ X $: the Poisson distribution, abbreviated as $ \poi(\mu) $ with $ \mu\in(0,\, \infty) $ if we are concerned with unbounded counts, and the binomial distribution, abbreviated as $ \bin(n,\pi) $ with $ \pi\in(0,1) $, if we are concerned with a given upper bound $ n\in\bbn $. The Poisson distribution's mean and variance coincide and equal $ \mu $, whereas the binomial distribution's mean and variance are $ \mu=n\pi$ and $ \sigma^2= n\pi(1-\pi)$, see \citet[Chapter 3--4]{Johnson05} for more details on these distributions. We pay particular attention to the factorial moments of both distributions and therefore make use of the following definition.
\begin{definition}
	Let $ (X_t) $ be a stationary count process with existing moments. We
	define  the $k$th factorial moment as  $\mu_{(k)} = \e[(X_t)_{(k)}]$ with $k\in\bbn_0$ and $\mu_{(1)} = \mu$, where $ x_{(k)}\coloneqq x\cdots(x-k+1) $ for $k\in\mathbb{N}$ and $x_{(0)}=1$ denote the falling factorial. Furthermore, we define the mixed factorial moments as 
	\begin{align}\label{PistolPete}
		\mu_{(k,l)}(h)\coloneqq\e[(X_t)_{_{(k)}}\cdot (X_{t-h})_{_{(l)}}],
	\end{align} 
	with $ k,l\in\bbn_0 $ and $ h\in \bbz $. Note that
	
	\vspace{0.3cm} 
	\begin{tabular}{llll}
		$(i)$ & $  \mu_{(0,0)}(h) = 1, $ & $(ii)$	& $  \mu_{(k,0)}(h) = \mu_{(k)},  $\\
		$(iii)$ & $ \mu_{(0,l)}(h) = \mu_{(l)}, $ & $(iv)$ & $ \mu_{(k,l)}(h) = \mu_{(l,k)}(-h). $ 
	\end{tabular}\\
\end{definition}
For count data, factorial moments often take simple expressions, such as  $ \mu_{(k)} =\mu^k $ for the Poisson distribution and $ \mu_{(k)} = n_{(k)}\pi^k$ for the binomial distribution. When it comes to mixed factorial moments, we make use of two specific bivariate distributions: If we consider the pair $ (X_t,X_{t-h}) $ from a Poi-INAR(1) process, with lag $h\in\mathbb{N}$, then this pair is bivariate Poisson distributed, see \cite{Alzaid88}, \ie
\begin{align}\label{Xavi}
	(X_t,X_{t-h}) \sim \text{BPoi}\Big((1-\rho^h)\mu,\ (1-\rho^h)\mu,\ \rho^h\mu\Big).
\end{align}
When considering the pair $(X_t,X_{t-h})  $ from a BAR(1) process, this pair is bivariate binomial distributed, see \cite{AlexWeiss22}, \ie
\begin{align}\label{Alves}
	(X_t,X_{t-h}) \sim \text{BBin}(n,\pi,\pi,\rho^h).
\end{align}
\citet[Section 3.2 and 4]{Koch14} provide a comprehensive overview of bivariate Poisson and binomial distributions. Closed-form expressions for general mixed factorial moments can be found there. In particular, we obtain for \eqref{Xavi} and \eqref{Alves} the following propositions.
\begin{proposition}\label{Kobe}
	For the BPoi-distributed pair $ (X_t,X_{t-h}) $ from a Poi-INAR(1) process, we have a closed-form expression for the joint factorial moments of lag $h\in\mathbb{N}$, that is, 
	\begin{align*}
		\mu_{(k,s)}(h)
		=\mu_{(k)}\mu_{(s)}\sum_{i=0}^{\min{\{k,s\}}}\binom{k}{i}\binom{s}{i}i!\Bigg(\frac{\rho^h}{\mu} \Bigg)^{i}.
	\end{align*}
\end{proposition}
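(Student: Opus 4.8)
The plan is to work through the probability generating function (PGF) of the bivariate Poisson law in \eqref{Xavi} and to read off the factorial moments as Taylor coefficients. Recall that for a bivariate count vector $(X,Y)$ with PGF $G(z_1,z_2)=\e[z_1^X z_2^Y]$, the mixed factorial moments are recovered by differentiation, namely $\e[(X)_{(k)}(Y)_{(s)}]=\partial_{z_1}^{k}\partial_{z_2}^{s}G(z_1,z_2)\big|_{z_1=z_2=1}$. For $\text{BPoi}(\lambda_1,\lambda_2,\lambda_3)$ the PGF is the clean exponential $G(z_1,z_2)=\exp\!\big(\lambda_1(z_1-1)+\lambda_2(z_2-1)+\lambda_3(z_1z_2-1)\big)$, which follows at once from the trivariate-reduction representation $X=Y_1+Y_3$, $Y=Y_2+Y_3$ with independent $Y_j\sim\poi(\lambda_j)$.

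First I would substitute $u=z_1-1$ and $v=z_2-1$, so that evaluating the derivatives at $z_1=z_2=1$ becomes extraction of the Taylor coefficient at $u=v=0$. A short computation gives $z_1z_2-1=u+v+uv$, hence $G=\exp\!\big(\mu_1 u+\mu_2 v+\lambda_3 uv\big)$, where $\mu_1=\lambda_1+\lambda_3$ and $\mu_2=\lambda_2+\lambda_3$ are the marginal means. Next I would factor $G=e^{\mu_1 u}\,e^{\mu_2 v}\,e^{\lambda_3 uv}$ and expand each exponential as a power series, obtaining $G=\sum_{a,b,i\ge0}\frac{\mu_1^{a}\mu_2^{b}\lambda_3^{i}}{a!\,b!\,i!}\,u^{a+i}v^{b+i}$.

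The factorial moment $\mu_{(k,s)}(h)$ then equals $k!\,s!$ times the coefficient of $u^{k}v^{s}$ in this series. Collecting the terms with $a+i=k$ and $b+i=s$, so that $i$ ranges from $0$ to $\min\{k,s\}$, and simplifying $\frac{k!\,s!}{(k-i)!\,(s-i)!\,i!}=\binom{k}{i}\binom{s}{i}\,i!$, yields the general bivariate-Poisson formula $\mu_{(k,s)}=\sum_{i=0}^{\min\{k,s\}}\binom{k}{i}\binom{s}{i}\,i!\,\mu_1^{k-i}\mu_2^{s-i}\lambda_3^{i}$, which notably depends on $\lambda_1,\lambda_2$ only through the marginal means. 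Finally I would specialize to the process at hand by inserting $\mu_1=\mu_2=\mu$ and $\lambda_3=\rho^h\mu$ from \eqref{Xavi}, and use the Poisson identity $\mu_{(k)}=\mu^k$ so that $\mu^{k-i}\mu^{s-i}(\rho^h\mu)^i=\mu_{(k)}\mu_{(s)}(\rho^h/\mu)^i$, which reproduces the claimed expression.

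The argument is essentially bookkeeping of a power-series coefficient, so there is no deep obstacle; the only step needing care is the combinatorial simplification of the factorial prefactor into the product of two binomial coefficients times $i!$, together with the observation that the coupling enters precisely as $(\rho^h/\mu)^i$. An equivalent route avoiding PGFs would start from the reduction $X_t=Y_1+Y_3$, $X_{t-h}=Y_2+Y_3$, apply the Vandermonde-type identity $(a+b)_{(k)}=\sum_{j}\binom{k}{j}(a)_{(j)}(b)_{(k-j)}$ to each falling factorial, and then use the product rule $(x)_{(p)}(x)_{(q)}=\sum_{i}\binom{p}{i}\binom{q}{i}\,i!\,(x)_{(p+q-i)}$ for the shared $Y_3$; this leads to the same answer but requires more careful index juggling, which is why I would favour the generating-function argument.
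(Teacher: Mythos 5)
Your derivation is correct, and it is genuinely different from what the paper does: the paper offers no proof of Proposition \ref{Kobe} at all, instead invoking \citet{Koch14} for the general closed-form mixed factorial moments of the bivariate Poisson distribution and then inserting the parameters from \eqref{Xavi}. Your PGF argument is a self-contained derivation of precisely that cited general formula. The key steps all check out: the trivariate-reduction representation gives $G(z_1,z_2)=\exp\bigl(\lambda_1(z_1-1)+\lambda_2(z_2-1)+\lambda_3(z_1z_2-1)\bigr)$; the substitution $u=z_1-1$, $v=z_2-1$ with $z_1z_2-1=u+v+uv$ yields $G=\exp(\mu_1u+\mu_2v+\lambda_3uv)$; and extracting the coefficient of $u^kv^s$ with prefactor $k!\,s!$ produces $\mu_{(k,s)}=\sum_{i=0}^{\min\{k,s\}}\tfrac{k!\,s!}{(k-i)!\,(s-i)!\,i!}\,\mu_1^{k-i}\mu_2^{s-i}\lambda_3^{i}$, where the coefficient indeed equals $\binom{k}{i}\binom{s}{i}\,i!$. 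The specialization is also right: \eqref{Xavi} gives $\mu_1=\mu_2=(1-\rho^h)\mu+\rho^h\mu=\mu$ and $\lambda_3=\rho^h\mu$, and with $\mu_{(k)}=\mu^k$ one recovers the claimed expression; as a sanity check, your formula reproduces the entries of Lemma \ref{AnsuFati}, e.g. $\mu_{(2,2)}(h)=\mu^4+4\mu^3\rho^h+2\mu^2\rho^{2h}$. What your route buys is transparency and independence from the reference: it exhibits that the coupling enters only through $\lambda_3$ and the marginal means, which is exactly the structural fact the paper's Propositions \ref{Kobe} and \ref{Hulk} exploit; what the paper's route buys is brevity and a uniform treatment of the Poisson and binomial cases from one source. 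One analytic point you leave tacit --- that term-by-term differentiation of the series and evaluation at $z_1=z_2=1$ is legitimate --- is immediate here because the bivariate Poisson PGF is entire, so no additional justification is needed; your sketched alternative via the Vandermonde-type identity and the product rule for falling factorials of the shared component would also work and would sidestep even that remark, at the cost of the index bookkeeping you mention.
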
   
\begin{proposition}\label{Hulk}
	For the BBin-distributed pair $ (X_t,X_{t-h}) $ from a BAR(1) process, we have a closed-form expression for the joint factorial moments of lag $h\in\mathbb{N}$, that is, 
	\begin{align*}
		\mu_{(k,s)}(h) 
		=n_{(k)}n_{(s)}\pi^{k+s}\sum_{i=0}^{\min{\{k,s\}}}\frac{\binom{k}{i}\binom{n-k}{s-i}}{\binom{n}{s}}\Big(1+\tfrac{1-\pi}{\pi}\rho^h \Big)^{i}.
	\end{align*}
\end{proposition}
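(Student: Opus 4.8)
The plan is to exploit the representation of the bivariate binomial law \eqref{Alves} as a sum of $n$ independent and identically distributed bivariate Bernoulli pairs, which is exactly the decomposition of the BAR(1) process into $n$ independent two-state $\{0,1\}$-valued Markov chains (one per ``unit''). Writing $X_t=\sum_{j=1}^n Y_{t,j}$ and $X_{t-h}=\sum_{j=1}^n Y_{t-h,j}$, where the pairs $(Y_{t,j},Y_{t-h,j})$ are \iid across $j$, the whole computation reduces to the single-unit joint probabilities. First I would record these: since each unit has stationary marginal $\bin(1,\pi)$ and lag-$h$ correlation $\rho^h$, the key probability is $p_{11}(h)\coloneqq\Pro(Y_{t,j}=1,\,Y_{t-h,j}=1)=\pi^2+\rho^h\pi(1-\pi)=\pi^2\big(1+\tfrac{1-\pi}{\pi}\rho^h\big)$, while $\Pro(Y_{t,j}=1)=\Pro(Y_{t-h,j}=1)=\pi$. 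The factor $1+\tfrac{1-\pi}{\pi}\rho^h$ appearing in the claim is thus already explained at the level of a single unit.

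Next I would pass to the joint factorial moment generating function. By independence across units, $\e\big[(1+s_1)^{X_t}(1+s_2)^{X_{t-h}}\big]=g(1+s_1,1+s_2)^n$, where $g$ is the probability generating function of one bivariate Bernoulli pair. Using $\sum p_{ab}=1$ together with the marginal constraints $p_{10}+p_{11}=p_{01}+p_{11}=\pi$, this collapses to the clean form
\begin{align*}
	M(s_1,s_2)\coloneqq\e\big[(1+s_1)^{X_t}(1+s_2)^{X_{t-h}}\big]=\big(1+\pi s_1+\pi s_2+p_{11}(h)\,s_1 s_2\big)^n.
\end{align*}
Since $\partial_{s_1}^{k}\partial_{s_2}^{s}M\big|_{s_1=s_2=0}=\e[(X_t)_{(k)}(X_{t-h})_{(s)}]=\mu_{(k,s)}(h)$, it remains to read off the coefficient of $s_1^k s_2^s$ in $M$ and multiply by $k!\,s!$.

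The core step is then a multinomial expansion: grouping the terms $1,\ \pi s_1,\ \pi s_2,\ p_{11}(h)s_1 s_2$ with multiplicities $a,b,c,d$ (where $a+b+c+d=n$), the constraints $b+d=k$ and $c+d=s$ force $d=i$, $b=k-i$, $c=s-i$, $a=n-k-s+i$, with $i$ ranging over $0,\dots,\min\{k,s\}$. This yields
\begin{align*}
	\mu_{(k,s)}(h)=k!\,s!\sum_{i=0}^{\min\{k,s\}}\frac{n!}{(n-k-s+i)!\,(k-i)!\,(s-i)!\,i!}\;\pi^{k-i}\pi^{s-i}\,p_{11}(h)^{i}.
\end{align*}
Substituting $p_{11}(h)^i=\pi^{2i}\big(1+\tfrac{1-\pi}{\pi}\rho^h\big)^i$ clears all the $\pi$-powers to the common factor $\pi^{k+s}$, and the remaining factorial bookkeeping is routine: one checks that the multinomial coefficient equals $\binom{n}{i}\binom{n-i}{k-i}\binom{n-k}{s-i}$, that $k!\,\binom{n}{i}\binom{n-i}{k-i}=n_{(k)}\binom{k}{i}$, and that $n_{(s)}/\binom{n}{s}=s!$, which recast the coefficient as $n_{(k)}n_{(s)}\binom{k}{i}\binom{n-k}{s-i}/\binom{n}{s}$ and deliver the stated expression.

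The main obstacle is organizational rather than conceptual: getting the index ranges and the overlap parameter $i$ exactly right (ordered versus unordered tuples, and the shift $a=n-k-s+i$ in the multinomial), and then steering the factorial simplification toward the specific $\binom{k}{i}\binom{n-k}{s-i}/\binom{n}{s}$ normalization used in the statement rather than some equivalent but differently packaged form. Alternatively, one could bypass the generating-function step by expanding the falling factorials $(X_t)_{(k)}=\sum Y_{t,j_1}\cdots Y_{t,j_k}$ over ordered tuples of distinct units and counting configurations with a prescribed overlap $i$ directly; this gives the same sum, and it also matches the route one would take by specializing the general bivariate-binomial factorial-moment formula of \citet{Koch14}.
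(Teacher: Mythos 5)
Your derivation is correct, but it is worth noting that the paper itself does not prove Proposition \ref{Hulk}: it obtains the formula by specializing the general closed-form mixed factorial moments of bivariate binomial distributions cited from \citet[Sections 3.2 and 4]{Koch14}. Your argument supplies exactly the derivation that the paper outsources, and it is the standard one for the Type~I bivariate binomial: writing the joint factorial moment generating function as $\e\big[(1+s_1)^{X_t}(1+s_2)^{X_{t-h}}\big]=\big(1+\pi s_1+\pi s_2+p_{11}(h)s_1s_2\big)^n$ with $p_{11}(h)=\pi^2+\rho^h\pi(1-\pi)=\pi^2\big(1+\tfrac{1-\pi}{\pi}\rho^h\big)$, extracting the coefficient of $s_1^ks_2^s$, and simplifying. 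I checked your bookkeeping: the multinomial step gives $\mu_{(k,s)}(h)=k!\,s!\sum_i \tfrac{n!}{(n-k-s+i)!(k-i)!(s-i)!\,i!}\,\pi^{k+s-2i}p_{11}(h)^i$, and since $n_{(s)}/\binom{n}{s}=s!$ one indeed has $k!\,s!\,\tfrac{n!}{(n-k-s+i)!(k-i)!(s-i)!\,i!}=n_{(k)}n_{(s)}\binom{k}{i}\binom{n-k}{s-i}\big/\binom{n}{s}$, so the stated expression follows; the case $k=s=1$ also reproduces Lemma \ref{Neymar}$(i)$. One caveat: you open by asserting a process-level decomposition of the BAR(1) chain into $n$ independent binary Markov chains. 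That representation happens to be valid (unit-wise transitions $1\to1$ with probability $\alpha$ and $0\to1$ with probability $\beta$ reproduce the BAR(1) transition kernel and the $\bin(n,\pi)$ stationary law), but it is stronger than what you use and is not established in the paper; all your computation actually requires is the bivariate law \eqref{Alves}, namely that $(X_t,X_{t-h})\sim\text{BBin}(n,\pi,\pi,\rho^h)$, together with the defining property that this Type~I bivariate binomial is the sum of $n$ \iid bivariate Bernoulli pairs with marginal success probability $\pi$ and correlation $\rho^h$. Anchoring the first step directly to \eqref{Alves} rather than to the chain decomposition makes the proof airtight, and as a self-contained verification it is a useful complement to the paper's bare citation of \citet{Koch14}.
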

We have a symmetric parameterization in both propositions, hence $ \mu_{(k,s)}(h)=\mu_{(s,k)}(h)$.
%
\section{Count time series with missing data}\label{Ch_MissData}
%
We adopt the idea of amplitude modulation introduced by \citet{Parzen63} to handle missing data in count time series. For a count process $ (X_t) $, we define the amplitude-modulation process $ (O_t) $ as 
\begin{align*}
	O_t=
	\begin{cases}
		1 & \text{if }X_t\text{ is observed},\\
		0 & \text{otherwise}.\\
	\end{cases}
\end{align*}
The concept of amplitude modulation can be applied to any data, but we shall focus on unbounded and bounded counts. \cite{Rubin76} concludes that the type of missingness, whether it is Missing Completely at Random (MCAR), Missing at Random (MAR), or Missing Not at Random (MNAR), may affect statistical inference. We follow \cite{bloom70, DunRob81, YajNish99} by assuming that $ (O_t) $ is independent of $ (X_t) $, which corresponds to an MCAR-type of missingness. \citet[Example 2]{Weiss21} illustrates the challenges that arise if $ (O_t) $ is not independent of $ (X_t) $. The derivation of the subsequent asymptotics under different types of missingness should be further investigated in future research.  

We shall consider diagnostic procedures for marginal distributions that use factorial moment-based statistics. Thus, for $ (X_t) $, we consider the vector-valued process $ (\X_t )$ given by
\begin{align}
	\X_t\coloneqq(X_{t,1},\ldots,X_{t,m})^\top=((X_t)_{_{(1)}},(X_t)_{_{(2)}},\ldots,(X_t)_{_{(m)}})^\top,
\end{align}
which comprises the first $ m $ falling factorials. Furthermore, let us define the amplitude modulation of $ (\X_t) $ as $ (O_t\cdot\X_t) $, and $ \overline{O\,\X}\coloneqq \tfrac{1}{n}\sum_{t=1}^{n}O_t\X_t $. Thus, the mean of $\overline{O\,\X}  $ is given by
\begin{align}\label{Magic}
	\e[\overline{O\,\X}] = \tfrac{1}{n}\sum_{t=1}^{n}\e[O_t]\e[\X_t]=\Big(\tfrac{1}{n}\sum_{t=1}^{n}\e[O_t]\Big)\bmu=\e[\overline{O}]\bmu,
\end{align}
with $\bmu\coloneqq(\mu_{(1)},\ldots,\mu_{(m)})^\top=(\e[(X_t)_{_{(1)}}],\ldots,\e[(X_t)_{_{(m)}}])^\top  $. This, however, implies to estimate $ \bmu $ by
\begin{align}\label{Kareem}
	\hat{\bmu}\coloneqq \frac{\tfrac{1}{n}\sum_{t=1}^{n}O_t\X_t}{\tfrac{1}{n}\sum_{t=1}^{n}O_t}\eqqcolon \frac{\overline{O\,\X}}{\overline{O}},
\end{align}
analogous to \cite{Weiss21}. From now on, we assume that $(O_t) $ is stationary with $ \e[O_t] =\uptau$ and autocovariance function $ \gamma_O(h)\coloneqq CoV[O_h,O_0]$, so $ \uptau(h)\coloneqq\e[O_hO_0]=\gamma_O(h)+\uptau^2 $.  Let us define the vector $ (\X_t^\ast )$ as
\begin{equation}
	\X_t^\ast\coloneqq(X_{t,0}^\ast,\ldots,X_{t,m}^\ast)^\top=O_t(1,(X_t)_{_{(1)}},(X_t)_{_{(2)}},\ldots,(X_t)_{_{(m)}})^\top.
\end{equation}
For the mean of  $ (\X_t^\ast) $, we obtain 
\begin{equation}
	\bmu^\ast\coloneqq \e[\X_t^\ast]= \uptau(1,\bmu^\top)^\top.\label{Malone}
\end{equation}
Additionally, let us assume that appropriate mixing assumptions on $ (X_t) $ and $ (O_t) $ hold, \ie such that \citet[Theorem 1.7]{Ibram62} is applicable:
\begin{parlist}
	\item moments $\e\vert X_t\vert^{2m+\delta}<\infty$  of order $2m+\delta$ with some $\delta > 0 $ exist, and the process is $\alpha$-mixing with exponentially decreasing weights. \label{A}
\end{parlist}
Here, the stationary process $(X_t) $ with probability space $(\Omega,\mathfrak{A},\Pro)$ is said to be $\alpha$-mixing with exponentially decreasing weights if there exists a non-negative sequence $(\alpha_k)_\bbn$ of weights such that $\alpha_k\leq e^{-\vartheta k}$ for some $\vartheta>0$ and each $k\in\bbn$ and such that for each $t\in\bbz$, $k\in\bbn$ and all events $E_1\in\mathfrak{A}\big(X_t,X_{t-1},\ldots\big)$, $E_2\in\mathfrak{A}\big(X_{t+k},X_{t+k+1},\ldots\big)$ the following holds:
\begin{align*}
	\vert \Pro(E_1\cap E_2)-\Pro(E_1) \Pro(E_2)\vert\leq e^{-\vartheta k}.
\end{align*}
Condition \eqref{A} is satisfied for both the Poi-INAR(1) and the BAR(1) process, see \cite{SW14, Weiss18}. The main task here is to derive a CLT for $ (\overline{\X_t^\ast})$ (see Appendix \ref{A_CLT}) and then apply the Delta method to it, resulting in the following theorem.
\begin{theorem}\label{CLT_Fac}
	Under the aforementioned assumption \eqref{A}, it holds that
	\begin{align*}
		\sqrt{T}\Big(\hat{\bmu}-\bmu\Big) \ \xrightarrow{\text{d}}\ \norm\Big(\bold{0}, \bold{\Sigma}\Big)\quad \text{with}\quad \bold{\Sigma}=(\sigma_{ij})_{i,j=1,\ldots,m},
	\end{align*}
	where $ \norm(\bold{0}, \bold{\Sigma}) $ denotes the multivariate normal distribution, and where
	\begin{align*}
		\sigma_{ij}=\tfrac{1}{\uptau}(\mu_{(i,j)}(0)-\mu_{(i)}\mu_{(j)})+\tfrac{1}{\uptau^2}\sum_{h=1}^\infty \uptau(h)\Big(\mu_{(j,i)}(h)+\mu_{(i,j)}(h)-2\mu_{(i)}\mu_{(j)}\Big).
	\end{align*}
	The bias satisfies $ \e[\hat{\bmu}-\bmu]=0+\landau (T^{-1}) $. 
\end{theorem}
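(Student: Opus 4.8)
The plan is to carry out the two-step programme announced above: first prove a joint CLT for the averaged auxiliary process $\overline{\X^\ast}=\tfrac1T\sum_{t=1}^{T}\X_t^\ast$, and then push it through the Delta method using the smooth ratio map
\begin{align*}
	g(y_0,y_1,\ldots,y_m)=\big(y_1/y_0,\ldots,y_m/y_0\big)^\top ,
\end{align*}
which by \eqref{Malone} and \eqref{Kareem} satisfies $g(\bmu^\ast)=\bmu$ and $g(\overline{\X^\ast})=\hat{\bmu}$, so that $\hat{\bmu}-\bmu=g(\overline{\X^\ast})-g(\bmu^\ast)$.

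For the CLT on $\overline{\X^\ast}$ I would use the Cram\'er--Wold device and reduce to a univariate CLT for each linear form $\bold{c}^\top\X_t^\ast$. Every such form is a fixed measurable function of the pair $(O_t,X_t)$; since $(O_t)$ is independent of $(X_t)$ and both are $\alpha$-mixing with exponentially decreasing weights, the bivariate sequence $(O_t,X_t)$ inherits the same mixing rate, and hence so does $\bold{c}^\top\X_t^\ast$. The delicate hypothesis of \citet[Theorem 1.7]{Ibram62} is the moment bound: the top coordinate $X_{t,m}^\ast=O_t(X_t)_{(m)}$ behaves like $X_t^{m}$, so a $(2+\delta')$-th moment of it requires a moment of $X_t$ of order $m(2+\delta')=2m+m\delta'$, which is exactly what assumption \eqref{A} supplies after relabelling $\delta$. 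This gives
\begin{align*}
	\sqrt{T}\big(\overline{\X^\ast}-\bmu^\ast\big)\ \xrightarrow{\text{d}}\ \norm\big(\bold{0},\fSigma^\ast\big),\qquad
	\fSigma^\ast=\sum_{h=-\infty}^{\infty}\mathbb{C}ov\big(\X_t^\ast,\X_{t-h}^\ast\big),
\end{align*}
the series converging by the exponential mixing. Because $(O_t)\perp(X_t)$, each entry factorizes as $\mathbb{C}ov(X_{t,k}^\ast,X_{t-h,l}^\ast)=\uptau(|h|)\,\mu_{(k,l)}(h)-\uptau^2\mu_{(k)}\mu_{(l)}$ for $k,l\in\{0,\ldots,m\}$, where the index $0$ refers to $(X_t)_{(0)}=1$ and $\uptau(0)=\e[O_t^2]=\uptau$.

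Differentiating $g$ at $\bmu^\ast=\uptau(1,\bmu^\top)^\top$ yields the $m\times(m+1)$ Jacobian $\D$ whose $i$-th row has only the two nonzero entries $-\mu_{(i)}/\uptau$ in column $0$ and $1/\uptau$ in column $i$. The Delta method then gives $\sqrt{T}(\hat{\bmu}-\bmu)\xrightarrow{\text{d}}\norm(\bold{0},\D\fSigma^\ast\D^\top)$, and the remaining task is the algebraic identity $\D\fSigma^\ast\D^\top=\fSigma$. Writing $\sigma^\ast_{kl}$ for the entries of $\fSigma^\ast$, the sparsity of $\D$ reduces the $(i,j)$-entry to $\uptau^{-2}\big(\sigma^\ast_{ij}-\mu_{(i)}\sigma^\ast_{0j}-\mu_{(j)}\sigma^\ast_{i0}+\mu_{(i)}\mu_{(j)}\sigma^\ast_{00}\big)$. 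Since $\mu_{(0,l)}(h)$ and $\mu_{(k,0)}(h)$ are constant in $h$ and equal $\mu_{(l)}$, $\mu_{(k)}$ (items $(ii)$--$(iii)$ of the Definition), one gets $\sigma^\ast_{0j}=\mu_{(j)}\sigma^\ast_{00}$ and $\sigma^\ast_{i0}=\mu_{(i)}\sigma^\ast_{00}$, so the three $\sigma^\ast_{00}$-terms combine into the single contribution $-\mu_{(i)}\mu_{(j)}\sigma^\ast_{00}$ and the entry collapses to $\uptau^{-2}(\sigma^\ast_{ij}-\mu_{(i)}\mu_{(j)}\sigma^\ast_{00})$. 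Finally, inserting $\uptau(0)=\uptau$, folding the negative lags onto the positive ones via $\mu_{(k,l)}(-h)=\mu_{(l,k)}(h)$ (item $(iv)$), and using $\uptau(h)-\uptau^2=\gamma_O(h)$ to cancel the $\mu_{(i)}\mu_{(j)}\sigma^\ast_{00}$-piece lag-by-lag, the lag-$0$ term becomes $\tfrac1\uptau(\mu_{(i,j)}(0)-\mu_{(i)}\mu_{(j)})$ and the positive lags assemble into $\tfrac1{\uptau^2}\sum_{h\ge1}\uptau(h)(\mu_{(j,i)}(h)+\mu_{(i,j)}(h)-2\mu_{(i)}\mu_{(j)})$, which is precisely $\sigma_{ij}$.

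For the bias I would expand $g$ to second order about $\bmu^\ast$: the linear term is the centered average $\D(\overline{\X^\ast}-\bmu^\ast)$ and vanishes in expectation, while the quadratic term contributes $\tfrac12\,\mathrm{tr}(\Hes_i\,\mathbb{C}ov(\overline{\X^\ast}))$ for each component $i$, and $\mathbb{C}ov(\overline{\X^\ast})=\landau(T^{-1})$ by the same summable-autocovariance bound, so that $\e[\hat{\bmu}-\bmu]=\bold{0}+\landau(T^{-1})$. The main obstacle is turning this heuristic into a rigorous bound, because $\hat{\bmu}$ is a ratio and the Taylor remainder must be controlled despite a possibly small denominator $\overline{O}$. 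I would handle this by working on the event $\{\overline{O}>\uptau/2\}$, whose complement has exponentially small probability by the mixing of $(O_t)$, and by bounding the contribution of the complementary event through the finite moments guaranteed by \eqref{A}; the same moment control underwrites the uniform integrability needed to pass from convergence in distribution to the stated order of the bias.
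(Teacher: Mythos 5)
Your route is essentially the paper's own: the same augmented vector $\X_t^\ast$, a CLT for $\overline{\X^\ast}$ via Ibragimov's theorem, the ratio map with the same sparse Jacobian, and the same algebraic collapse of $\D\fSigma^\ast\D^\top$ through the identity $\sigma_{0j}^\ast=\mu_{(j)}\sigma_{00}^\ast$; I verified that your two-sided covariance formula $\mathbb{C}ov(X_{t,k}^\ast,X_{t-h,l}^\ast)=\uptau(\vert h\vert)\mu_{(k,l)}(h)-\uptau^2\mu_{(k)}\mu_{(l)}$ (with $\uptau(0)=\uptau$) reproduces the paper's three-case computation in one stroke, and folding negative lags via $\mu_{(k,l)}(-h)=\mu_{(l,k)}(h)$ yields exactly the stated $\sigma_{ij}$. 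Two of your additions go beyond the paper's write-up in a good way: the explicit Cram\'er--Wold/mixing-inheritance bookkeeping with the moment-order count $m(2+\delta')\le 2m+\delta$, and the truncation on $\{\overline{O}>\uptau/2\}$ to control the Taylor remainder of the ratio, which the paper handles only at the level of a formal Taylor approximation.

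The one substantive shortfall is the bias claim. Your argument delivers only $\e[\hat{\bmu}-\bmu]=\mathcal{O}(T^{-1})$, from $\mathbb{C}ov(\overline{\X^\ast})=\mathcal{O}(T^{-1})$; but the content of the statement's ``$0+$'' --- and the point of the paper's proof --- is that the coefficient of the $T^{-1}$ term vanishes identically. You should evaluate your trace term rather than merely bound it: the only nonzero second derivatives of $f_j(x_0,\ldots,x_m)=x_j/x_0$ at $\bmu^\ast=\uptau(1,\bmu^\top)^\top$ are $h_{00}^{(j)}=2\mu_{(j)}/\uptau^2$ and $h_{0j}^{(j)}=-1/\uptau^2$, so
\begin{align*}
	\tfrac{1}{2}\,\mathrm{tr}\big(\Hes^{(j)}\fSigma^\ast\big)
	=\tfrac{1}{2}h_{00}^{(j)}\sigma_{00}^\ast+h_{0j}^{(j)}\sigma_{0j}^\ast
	=\tfrac{1}{\uptau^2}\big(\mu_{(j)}\sigma_{00}^\ast-\sigma_{0j}^\ast\big)=0,
\end{align*}
by precisely the factorization $\sigma_{0j}^\ast=\mu_{(j)}\sigma_{00}^\ast$ (a consequence of the independence of $(O_t)$ and $(X_t)$) that you already derived and used in the variance computation. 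So you have all the ingredients in hand and stopped one computation short; without this cancellation your conclusion is strictly weaker than the theorem as the paper proves it.
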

\begin{proof}
	See Appendix \ref{A_CLT}.
\end{proof}
An alternative proof of the CLT with raw moments instead of factorial ones can be found in Appendix \ref{A_RawCLT}.
\begin{corollary}\label{Ilaxis}
	For the Poisson model \eqref{Xavi} and binomial model \eqref{Alves}, the covariances $\sigma_{ij}$ from Theorem \ref{CLT_Fac} are computed as
	\begin{align*}
		(i)\quad\sigma_{ij}^{\text{Poi}}&=\tfrac{1}{\uptau}\Big[\mu_{(i,j)}(0)-\mu^{i+j}+\tfrac{2}{\uptau}\sum_{h=1}^\infty \uptau(h)\Big(\mu_{(i,j)}(h)-\mu^{i+j}\Big)\Big],\\
		(ii)\quad\sigma_{ij}^{\text{Bin}}&=\tfrac{1}{\uptau}\Big[\mu_{(i,j)}(0)-n_{(i)}n_{(j)}\pi^{i+j}+\tfrac{2}{\uptau}\sum_{h=1}^\infty \uptau(h)\Big(\mu_{(i,j)}(h)-n_{(i)}n_{(j)}\pi^{i+j}\Big)\Big].
	\end{align*}
\end{corollary}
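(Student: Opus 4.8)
The plan is to specialize the general expression for $\sigma_{ij}$ in Theorem \ref{CLT_Fac} by inserting the model-specific closed forms for the factorial and mixed factorial moments. The only structural simplification needed is the symmetry of the mixed factorial moments, which was already recorded for both the Poisson and the binomial case: the symmetric parameterization in Propositions \ref{Kobe} and \ref{Hulk} yields $\mu_{(j,i)}(h)=\mu_{(i,j)}(h)$. Using this, the bracketed term inside the infinite sum collapses from $\mu_{(j,i)}(h)+\mu_{(i,j)}(h)-2\mu_{(i)}\mu_{(j)}$ to $2\big(\mu_{(i,j)}(h)-\mu_{(i)}\mu_{(j)}\big)$.

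First I would rewrite the theorem's formula, after applying this symmetry and pulling a common factor $\tfrac{1}{\uptau}$ out front, as
\begin{align*}
	\sigma_{ij}=\tfrac{1}{\uptau}\Big[\mu_{(i,j)}(0)-\mu_{(i)}\mu_{(j)}+\tfrac{2}{\uptau}\sum_{h=1}^\infty \uptau(h)\big(\mu_{(i,j)}(h)-\mu_{(i)}\mu_{(j)}\big)\Big].
\end{align*}
This already has exactly the shape of both claimed identities, so what remains is only to evaluate the product $\mu_{(i)}\mu_{(j)}$ under each marginal model.

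Second, for the Poisson case I would invoke the factorial-moment identity $\mu_{(k)}=\mu^k$ recorded in Section \ref{Ch_models}, so that $\mu_{(i)}\mu_{(j)}=\mu^i\mu^j=\mu^{i+j}$; substituting this into the displayed expression gives part $(i)$ verbatim. For the binomial case I would instead use $\mu_{(k)}=n_{(k)}\pi^k$, whence $\mu_{(i)}\mu_{(j)}=n_{(i)}n_{(j)}\pi^{i+j}$, and substitution yields part $(ii)$. If one wishes to make the covariances fully explicit, the mixed moments $\mu_{(i,j)}(h)$ may further be replaced by the closed forms of Propositions \ref{Kobe} and \ref{Hulk}, but this is not needed for the stated result.

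I do not anticipate a genuine obstacle, since the argument is a direct substitution into the formula of Theorem \ref{CLT_Fac}. The one point deserving a brief word is the absolute convergence of the infinite sum, which legitimizes rearranging the summand via symmetry: the $\alpha$-mixing with exponentially decaying weights in assumption \eqref{A}, together with the boundedness of $\uptau(h)$ and the existence of the relevant moments, guarantees it. This is already implicit in the validity of Theorem \ref{CLT_Fac}, so no new estimate is required.
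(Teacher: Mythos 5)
Your proposal is correct and follows essentially the same route as the paper, which states the corollary without separate proof precisely because it is the direct specialization you describe: applying the symmetry $\mu_{(j,i)}(h)=\mu_{(i,j)}(h)$ recorded after Propositions \ref{Kobe} and \ref{Hulk} to collapse the summand in Theorem \ref{CLT_Fac}, and then substituting $\mu_{(k)}=\mu^k$ (Poisson) and $\mu_{(k)}=n_{(k)}\pi^k$ (binomial). Your additional remark on absolute convergence of the series under assumption \eqref{A} is a harmless strengthening that the paper leaves implicit.
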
 
%
\section{Poisson index of dispersion}\label{Ch_PInd}
%
One of the most well-known diagnostic statistic for unbounded counts is the index of dispersion, defined as $ I^\poi =\sigma^2/\mu$. The Poisson distribution has equidispersion since mean and variance always coincide, \ie $ I^\poi =1$. The sample counterpart to the index of dispersion is given by $ \hat{I}^\poi =\hat{\mu}_{(2)}/\hat{\mu}-\hat{\mu}+1$. Starting from Theorem \ref{CLT_Fac}, we derive the asymptotics of $ \hat{I}^\poi $, and define the function $ g $ by
\begin{align*}	
	g(x_1,x_2)=\frac{x_2}{x_1}-x_1+1. 
\end{align*}
Note that $ g(\hat{\bmu})=(g\circ\f)(\overline{\X^\ast}) $ and $ g(\bmu)= (g\circ\f)(\bmu^\ast) $ with $ \f(x_0,x_1,x_2)=(\frac{x_1}{x_0},\frac{x_2}{x_0})^\top $. Then, $ g $ has the partial derivatives
\begin{align*}
	\frac{\partial}{\partial x_1}g=-\frac{x_2}{x_1^2}-1,\quad \frac{\partial}{\partial x_2}g= \frac{1}{x_1}, \quad \frac{\partial^2}{\partial x_1^2}g= \frac{2x_2}{x_1^3},
	\quad	\frac{\partial^2}{\partial x_2^2}g= 0, \quad \frac{\partial^2}{\partial x_2 \partial x_1}g= -\frac{1}{x_1^2}.
\end{align*}
So, we get the Jacobian $ \D $ and the Hessian $ \Hes $ by evaluating the partial derivatives in $ \bmu$, which leads to 
\begin{align}\label{Aguero}
	\D=\Big(-\frac{\mu_{(2)}}{\mu^2}-1,\frac{1}{\mu}\Big), \qquad \Hes=\frac{1}{\mu^3}\begin{pmatrix} 
		2\mu_{(2)} & -\mu\\ 
		-\mu& 0 \\
	\end{pmatrix}.
\end{align}
This implies the Taylor approximation $ \hat{I}^\poi\approx I^\poi + \D(\hat{\bmu}-\bmu)+\tfrac{1}{2}(\hat{\bmu}-\bmu)^\top\Hes(\hat{\bmu}-\bmu) $, which can be used to conclude the asymptotic variance and bias of $ \hat{I}^\poi$, that is
\begin{align}
	\sigma_{\hat{I}^\poi}^2&=\tfrac{1}{T}\Big(d_1^2\sigma_{11}+d_2^2\sigma_{22}+2d_1d_2\sigma_{12}  \Big),\label{Doncic}\\
	\mathbb{B}_{\hat{I}^\poi}&=\tfrac{1}{T} \Big( \tfrac{1}{2}h_{11}\sigma_{11} +h_{12}\sigma_{12}\Big).
\end{align}
Without making specific assumptions about $ \uptau(h) $, let us derive general expressions for the asymptotic variance and bias.
\begin{theorem}\label{DeJong}
	For any count process $ (X_t) $ and amplitude-modulating process $ (O_t) $ that satisfy assumption \eqref{A}, the asymptotic variance and bias of $ \hat{I}^\poi $ are given by 
	\begin{align*}
		(i)\quad\sigma_{\hat{I}^\poi}^2&=\tfrac{1}{T\uptau\mu^2} \Bigg[ \big(\tfrac{\mu_{(2)}}{\mu}+\mu\big)^2(\mu_{(2)}+\mu)-2\big(\tfrac{\mu_{(2)}}{\mu}+\mu\big)(\mu_{(3)}+2\mu_{(2)})
		\\[1ex]&\quad+\mu_{(4)}+4\mu_{(3)}+2\mu_{(2)}-\mu^4+\tfrac{2}{\uptau}\sum_{h=1}^\infty\uptau(h)\Big( \big(\tfrac{\mu_{(2)}}{\mu}+\mu\big)^2\mu_{(1,1)}(h)
		\\[1ex]&\quad-\big(\tfrac{\mu_{(2)}}{\mu}+\mu\big)\big(\mu_{(2,1)}(h)+\mu_{(1,2)}(h)\big)+\mu_{(2,2)}(h)-\mu^4 \Big) \Bigg],\\
		(ii)\quad	\mathbb{B}_{\hat{I}^\poi}&=\tfrac{1}{T\uptau\mu^3} \Bigg[  \mu_{(2)}^2-\mu\big(\mu_{(2)}+\mu_{(3)}\big)+\tfrac{2}{\uptau}\sum_{h=1}^\infty\uptau(h)\Big(\mu_{(2)}\mu_{(1,1)}(h)
		\\[1ex]&\quad-\tfrac{\mu}{2}\big(\mu_{(2,1)}(h)+\mu_{(1,2)}(h)\big)\Big)\Bigg].
	\end{align*}
\end{theorem}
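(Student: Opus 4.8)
The plan is to specialize the Delta-method expansion the paper has already set up in \eqref{Doncic}, so that the task reduces to computing the three covariance entries $\sigma_{11},\sigma_{12},\sigma_{22}$ from Theorem~\ref{CLT_Fac} (restricted to indices $i,j\in\{1,2\}$) and substituting the Jacobian and Hessian entries from \eqref{Aguero}. Concretely, I would read off $d_1=-\big(\tfrac{\mu_{(2)}}{\mu^2}+1\big)$, $d_2=\tfrac{1}{\mu}$, $h_{11}=\tfrac{2\mu_{(2)}}{\mu^3}$, $h_{12}=-\tfrac{1}{\mu^2}$ and plug them into $\sigma_{\hat{I}^\poi}^2=\tfrac1T(d_1^2\sigma_{11}+d_2^2\sigma_{22}+2d_1d_2\sigma_{12})$ and $\mathbb{B}_{\hat{I}^\poi}=\tfrac1T(\tfrac12 h_{11}\sigma_{11}+h_{12}\sigma_{12})$. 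A useful early observation is that $d_1$ factors as $-\tfrac{1}{\mu}\big(\tfrac{\mu_{(2)}}{\mu}+\mu\big)$, exactly the combination $\tfrac{\mu_{(2)}}{\mu}+\mu$ that recurs throughout the claim; pulling out the common $\tfrac1{\mu^2}$ in the variance (respectively $\tfrac1{\mu^3}$ in the bias), together with the $\tfrac1\uptau$ extracted from each $\sigma_{ij}$, already produces the prefactor $\tfrac{1}{T\uptau\mu^2}$ (resp. $\tfrac{1}{T\uptau\mu^3}$).

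The one genuinely non-mechanical ingredient is the conversion of the equal-time ($h=0$) mixed factorial moments into ordinary factorial moments, since Theorem~\ref{CLT_Fac} leaves them as $\mu_{(i,j)}(0)$. Here I would invoke the falling-factorial product identity $x_{(a)}x_{(b)}=\sum_{i}\binom{a}{i}\binom{b}{i}\,i!\,x_{(a+b-i)}$ (the same combinatorial structure underlying Propositions~\ref{Kobe}--\ref{Hulk}), which for the low orders needed gives $\mu_{(1,1)}(0)=\mu_{(2)}+\mu$, $\mu_{(1,2)}(0)=\mu_{(3)}+2\mu_{(2)}$ and $\mu_{(2,2)}(0)=\mu_{(4)}+4\mu_{(3)}+2\mu_{(2)}$. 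Substituting these reproduces precisely the $h=0$ terms $\big(\tfrac{\mu_{(2)}}{\mu}+\mu\big)^2(\mu_{(2)}+\mu)$, $-2\big(\tfrac{\mu_{(2)}}{\mu}+\mu\big)(\mu_{(3)}+2\mu_{(2)})$ and $\mu_{(4)}+4\mu_{(3)}+2\mu_{(2)}$ displayed in part~$(i)$, and the analogous combination $\mu_{(2)}\,\mu_{(1,1)}(0)-\mu\,\mu_{(1,2)}(0)$ in part~$(ii)$.

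For the lag-$h\ge1$ part, each $\sigma_{ij}$ contributes its sum $\tfrac{1}{\uptau^2}\sum_{h}\uptau(h)\big(\mu_{(j,i)}(h)+\mu_{(i,j)}(h)-2\mu_{(i)}\mu_{(j)}\big)$; combining the three entries with the Delta-method weights $\big(\tfrac{\mu_{(2)}}{\mu}+\mu\big)^2$, $1$, and $-2\big(\tfrac{\mu_{(2)}}{\mu}+\mu\big)$ and factoring out $\tfrac2\uptau$ (noting that the diagonal entries $\sigma_{11},\sigma_{22}$ carry a built-in factor $2$ because $\mu_{(j,i)}(h)+\mu_{(i,j)}(h)=2\mu_{(i,i)}(h)$ there) yields the summand stated in the theorem, provided the constant-in-$h$ pieces collapse correctly. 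This collapse is the crux of the bookkeeping: the $-\mu_{(i)}\mu_{(j)}$ terms — both those from the $h=0$ variances and those inside the lagged sums — must be shown to aggregate to the single term $-\mu^4$ in the variance and to cancel outright in the bias. I would verify this by expanding $\big(\tfrac{\mu_{(2)}}{\mu}+\mu\big)^2=\tfrac{\mu_{(2)}^2}{\mu^2}+2\mu_{(2)}+\mu^2$ and checking that the $\mu_{(2)}^2$ and $\mu^2\mu_{(2)}$ contributions from $\sigma_{11}$ cancel against those of $\sigma_{22}$ and $\sigma_{12}$, leaving $-\mu^4$; in part~$(ii)$ the same mechanism removes the $\mu^2\mu_{(2)}$ terms entirely, leaving $\mu_{(2)}^2-\mu(\mu_{(2)}+\mu_{(3)})$.

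The main obstacle is therefore organizational rather than conceptual: the substitution generates many monomials in $\mu,\mu_{(2)},\mu_{(3)},\mu_{(4)}$ and in the lagged moments, and the compact claimed form emerges only after the systematic cancellation of the $\mu_{(i)}\mu_{(j)}$ cross-terms. Because this is a routine but lengthy simplification, I would record the substitutions and the decisive cancellation explicitly and defer the full term-by-term collection to the Appendix, as the paper does for its other derivations.
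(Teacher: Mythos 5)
Your proposal is correct and follows essentially the same route as the paper's proof in Appendix \ref{A_PInd_ThDeJong}: substitute the covariances of Theorem \ref{CLT_Fac} into \eqref{Doncic} with the Jacobian and Hessian entries from \eqref{Aguero}, convert the lag-zero mixed factorial moments via the identities $\mu_{(1,1)}(0)=\mu_{(2)}+\mu$, $\mu_{(1,2)}(0)=\mu_{(3)}+2\mu_{(2)}$, $\mu_{(2,2)}(0)=\mu_{(4)}+4\mu_{(3)}+2\mu_{(2)}$ (the paper's Lemma \ref{Lebron}, which you obtain equivalently from the falling-factorial product identity), and collect terms so that the $\mu_{(i)}\mu_{(j)}$ pieces collapse to $-\mu^4$ in the variance and cancel in the bias. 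Your cancellation bookkeeping checks out exactly as in the paper's computation.
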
 
The proof of Theorem \ref{DeJong} can be found in Appendix \ref{A_PInd_ThDeJong}, and its analogon with raw moments in Appendix \ref{A_RawPIndex}. Theorem \ref{DeJong} is model independent and is meant to be applicable across various thinning operators, such as those discussed in works of  \citet{Ris13,Borg16,Zhang17}. The (Poisson) index of dispersion is commonly used to test a Poisson null hypothesis, making it particularly relevant for time series models with Poisson marginal distribution. Therefore, we selected the Poi-INAR(1) process as an illustrative example in the following corollary to specify the asymptotic variance and bias.
\begin{corollary}\label{Nico}
	If we are concerned with a Poi-INAR(1) process and  assume that the missing data follow a stationary Markov model, \ie $ \uptau(h)=\uptau^2+\uptau(1-\uptau)r^h $ with $ \uptau,r\in(0,1) $, 
	then we obtain the asymptotic variance and bias of $ \hat{I}^\poi $ as 
	\begin{align*} 
		\sigma_{\hat{I}^\poi}^2=\tfrac{2}{T}\cdot \kappa(2), \qquad
		\mathbb{B}_{\hat{I}^\poi}=-\tfrac{1}{T}\cdot \kappa(1),
	\end{align*}	
	where
	\begin{align}\label{Kappa}
		\kappa(s)\coloneqq  \frac{1}{\uptau}\frac{1+r\rho^s}{1-r\rho^s}+\frac{2(1-r)\rho^s}{(1-r\rho^s)(1-\rho^s)} \text{ for } s\in\bbn.
	\end{align}	
\end{corollary}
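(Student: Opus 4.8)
The plan is to specialize the model-independent formulas of Theorem \ref{DeJong} to the Poi-INAR(1) process by substituting the explicit Poisson factorial moments together with the mixed factorial moments from Proposition \ref{Kobe}, and then to evaluate the resulting series under the assumed stationary Markov structure of $(O_t)$. First I would record the moments I need: from $\mu_{(k)}=\mu^k$ one has $\tfrac{\mu_{(2)}}{\mu}+\mu=2\mu$, $\mu_{(2)}=\mu^2$, $\mu_{(3)}=\mu^3$, $\mu_{(4)}=\mu^4$; and from Proposition \ref{Kobe} with $k,s\in\{1,2\}$ one obtains $\mu_{(1,1)}(h)=\mu^2+\mu\rho^h$, $\mu_{(2,1)}(h)=\mu_{(1,2)}(h)=\mu^3+2\mu^2\rho^h$, and $\mu_{(2,2)}(h)=\mu^4+4\mu^3\rho^h+2\mu^2\rho^{2h}$.

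The central step is to insert these into the bracketed expressions of Theorem \ref{DeJong}. Here I expect a sweeping cancellation. In part $(i)$ the $h$-independent block should collapse to $2\mu^2$ (all $\mu^4$ and $\mu^3$ contributions cancelling), and each summand should reduce to exactly $2\mu^2\rho^{2h}$, so that
\[
\sigma_{\hat{I}^{\poi}}^2=\frac{2}{T\uptau}\Big(1+\frac{2}{\uptau}\sum_{h=1}^\infty \uptau(h)\,\rho^{2h}\Big).
\]
The same substitution in part $(ii)$ should give constant part $-\mu^3$ and summand $-\mu^3\rho^h$, whence
\[
\mathbb{B}_{\hat{I}^{\poi}}=-\frac{1}{T\uptau}\Big(1+\frac{2}{\uptau}\sum_{h=1}^\infty \uptau(h)\,\rho^{h}\Big).
\]
Both quantities then share the structure $S(s):=\tfrac{1}{\uptau}\big(1+\tfrac{2}{\uptau}\sum_{h\ge1}\uptau(h)\rho^{sh}\big)$, with variance $=\tfrac{2}{T}S(2)$ and bias $=-\tfrac{1}{T}S(1)$; this is precisely the observation that makes $\kappa$ a single function of $s$ and reduces two computations to one.

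It remains to show $S(s)=\kappa(s)$. I would insert $\uptau(h)=\uptau^2+\uptau(1-\uptau)r^h$ and split each series, using $\sum_{h\ge1}\rho^{sh}=\rho^s/(1-\rho^s)$ and $\sum_{h\ge1}(r\rho^s)^h=r\rho^s/(1-r\rho^s)$, to arrive at $S(s)=\tfrac{1}{\uptau}+\tfrac{2\rho^s}{1-\rho^s}+\tfrac{2(1-\uptau)r\rho^s}{\uptau(1-r\rho^s)}$. The final identity $S(s)=\kappa(s)$ is then a rational-function identity in the single variable $\rho^s$: placing both sides over the common denominator $\uptau(1-\rho^s)(1-r\rho^s)$ and expanding, I expect both numerators to reduce to $1+(-1+r+2\uptau-2\uptau r)\rho^s-r\rho^{2s}$, which finishes the proof with $s=2$ for the variance and $s=1$ for the bias.

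I expect the main obstacle to be bookkeeping rather than anything conceptual. The delicate part is confirming that the large polynomial-in-$\mu$ block of Theorem \ref{DeJong} genuinely telescopes down to a single monomial ($2\mu^2$ and $2\mu^2\rho^{2h}$ for the variance, $-\mu^3$ and $-\mu^3\rho^h$ for the bias), since a single sign error there would propagate into the closed form; and then verifying the rational identity $S(s)=\kappa(s)$ by matching numerator coefficients. Both steps are routine once the moment substitutions are laid out, so the proof is essentially a controlled simplification of the general formulas.
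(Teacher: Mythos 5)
Your proposal is correct: with $\mu_{(k)}=\mu^k$ and the mixed moments of Lemma \ref{AnsuFati} (\ie Proposition \ref{Kobe} with $k,s\le 2$), the constant blocks in Theorem \ref{DeJong} do collapse to $2\mu^2$ and $-\mu^3$, the summands to $2\mu^2\rho^{2h}$ and $-\mu^3\rho^h$, and your closing rational identity $S(s)=\kappa(s)$ checks out (both numerators reduce to $1+(r+2\uptau-2\uptau r-1)\rho^s-r\rho^{2s}$ over the common denominator $\uptau(1-\rho^s)(1-r\rho^s)$). Your route differs from the paper's in organization, though not in ingredients. The paper's proof of Corollary \ref{Nico} never touches the expanded expressions of Theorem \ref{DeJong}: it returns to the quadratic forms \eqref{Doncic} with the Poisson-evaluated Jacobian $\D=\big(-2,\tfrac{1}{\mu}\big)$ and Hessian entries $h_{11}=\tfrac{2}{\mu}$, $h_{12}=-\tfrac{1}{\mu^2}$ from \eqref{Aguero}, and exploits the linear relations of Lemma \ref{Pique}, namely $\sigma_{12}=2\mu\,\sigma_{11}$ and $\sigma_{22}=4\mu^2\sigma_{11}+2\mu^2\kappa(2)$, so that the cancellation happens at the level of the covariances: the $\sigma_{11}$-contributions drop out of the variance, leaving $\tfrac{2}{T}\kappa(2)$, and the bias collapses to $-\tfrac{1}{T\mu}\sigma_{11}=-\tfrac{1}{T}\kappa(1)$, with the geometric series already summed inside Lemma \ref{Pique}. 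You instead verify the telescoping directly on the factorial-moment polynomials of Theorem \ref{DeJong} and sum the series last; the identity $S(s)=\kappa(s)$ you then need is exactly the algebra carried out inside the proof of Lemma \ref{Pique}. The paper's decomposition buys reusability --- Lemma \ref{Pique} serves again for the skewness index in Theorem \ref{Ronaldo1} --- while your version is self-contained and makes the Poisson cancellations visible without intermediate covariance formulas, at the cost of repeating the bookkeeping if the other indices are wanted as well.
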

\begin{proof}
	See Appendix \ref{A_PInd_CorNico}.
\end{proof}
The variance and bias from Corollary \ref{Nico} are illustrated in Figure \ref{GraphPIndex}. The graphs may be interpreted as follows: We consider missing data between the case where only 25\% of the data are known (\ie $\uptau=0.25$, or equivalently 75\% of the data are missing), and the case where all data points are known $ (\uptau=1) $. A higher percentage of missing data is not reasonable from a practical point of view. In the case of complete  data $ (\uptau=1) $, we always have $ \kappa(s)=  (1+\rho^s)/(1-\rho^s)$, so Corollary \ref{Nico} coincides to \cite{SW14}, where
\begin{align*}
	T\cdot\sigma_{\hat{I}^\poi}^2&=2\cdot\frac{1+\rho^2}{1-\rho^2}.
\end{align*}
\begin{figure}[h]
	\center\small
	\includegraphics[scale=0.47, viewport=1 40 700 310, clip
	]{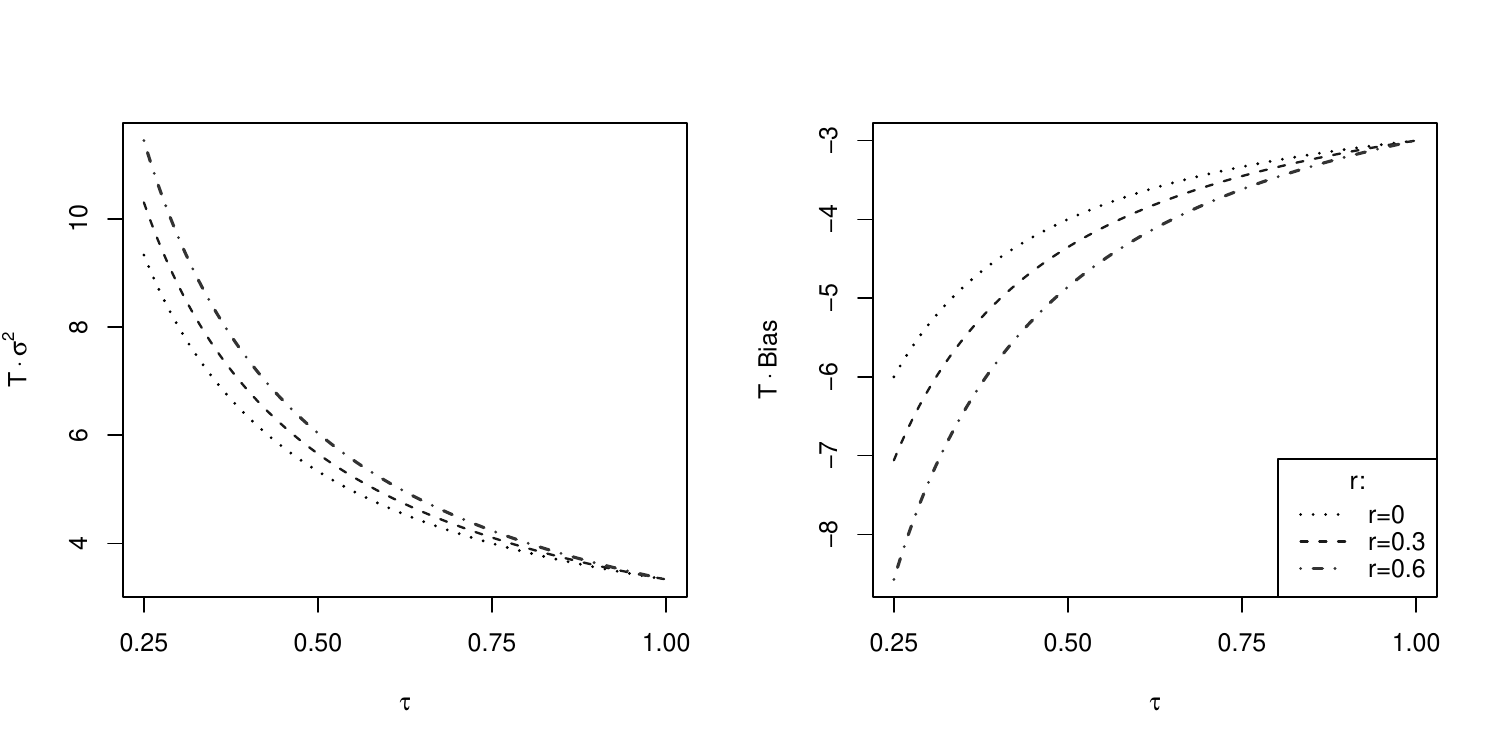} $ \uptau  $
	\caption{Plot of $ T\cdot $variance and $ T\cdot $bias according to Corollary \ref{Nico} for dispersion index of Poi-INAR(1) process with fixed $ \rho=0.5 $ and dependence parameter $ r\in\{0, 0.3, 0.6\} $.}
	\label{GraphPIndex}
\end{figure}
For the fixed $ \rho=0.5 $ in our example from Figure \ref{GraphPIndex}, we obtain a value of $\frac{10}{3} $. As one might expect, the variance rises as dependence and the extent of missing data increase. Similar results are obtained for the bias. The $ T $-fold bias has a value of $ -3 $, which is in the upper right corner, for no dependency and completely known data. Then, increasing dependence and the rise of missing data result in stronger negative values for the bias.
%
\section{Binomial index of dispersion}\label{Ch_BinInd}
%
If we would choose the (Poisson) index of dispersion for bounded counts, then the binomial distribution, which has a finite range, would result in underdispersion, \ie $ I^\poi<1 $. This phenomenon is caused by the fact the $ I^\poi $ does not account for the boundedness of the range. Thus, the binomial index of dispersion, defined as $ I^\bin =\sigma^2/(\mu(1-\frac{\mu}{n}))$, is the more appropriate choice, which results in equidispersion if being concerned with binomial counts. The binomial index of dispersion's sample counterpart is given by  $ \hat{I}^\bin =(\hat{\mu}_{(2)}+\hat{\mu}-\hat{\mu}^2)/(\hat{\mu}(1-\frac{\hat{\mu}}{n}))$. Starting from Theorem \ref{CLT_Fac}, we derive the asymptotics of $ \hat{I}^\bin $, and define the function $ g$  by
\begin{align*}	
	g(x_1,x_2)=\frac{n(x_2+x_1-x_1^2)}{x_1(n-x_1)}. 
\end{align*}
Then, $ g $ has the partial derivatives
\begin{align*}
	&\frac{\partial}{\partial x_1}g=\frac{n(x_1^2(1-n)-nx_2+2x_1x_2)}{x_1^2(n-x_1)^2}, \qquad 	\frac{\partial}{\partial x_2}g= \frac{n}{x_1(n-x_1)}, \qquad	\frac{\partial^2}{\partial x_2^2}g= 0, \\[0.1cm] 
	&\frac{\partial^2}{\partial x_1^2}g=\frac{2n\big(x_1^3(1-n)+n^2x_2+3x_1x_2(x_1-n)\big)}{x_1^3(n-x_1)^3},
	\qquad \frac{\partial^2}{\partial x_2 \partial x_1}g= \frac{n(2x_1-n)}{x_1^2(n-x_1)^2}.
\end{align*}
So, we get the Jacobian $ \D $ and the Hessian $ \Hes $ by evaluating the partial derivatives in $ \bmu$, which leads to 
\begin{align}\label{Gavi1}
	\D&=\frac{n}{\mu(n-\mu)}\Bigg(\frac{\mu^2(1-n)-n\mu_{(2)}+2\mu\mu_{(2)}}{\mu(n-\mu)} ,\ 1\Bigg),\\[0.1cm] 
	h_{11}&=\frac{2n\big(\mu^3(1-n)+n^2\mu_{(2)}+3\mu\mu_{(2)}(\mu-n)\big)}{\mu^3(n-\mu)^3},\qquad h_{12}= \frac{n(2\mu-n)}{\mu^2(n-\mu)^2}.\label{Gavi2}
\end{align}
This leads to the Taylor approximation $ \hat{I}^\bin\approx I^\bin + \D(\hat{\bmu}-\bmu)+\tfrac{1}{2}(\hat{\bmu}-\bmu)^\top\Hes(\hat{\bmu}-\bmu) $, which can be used to conclude the asymptotic variance and bias of $ \hat{I}^{\bin} $ as
\begin{align}
	\sigma_{\hat{I}^\bin}^2&=\tfrac{1}{T} \Big( d_1^2\sigma_{11}  +d_2^2\sigma_{22}+2 d_1d_2\sigma_{12}  \Big),\label{Depay}\\
	\mathbb{B}_{\hat{I}^\bin}&=\tfrac{1}{T} \Big( \tfrac{1}{2}h_{11}\sigma_{11} +h_{12}\sigma_{12}  \Big).
\end{align}
Again, we first derive general expressions for the asymptotic variance and bias, without making assumptions on $ \uptau(h) $, analogous to Section \ref{Ch_PInd}.
\begin{theorem}\label{Memphis}
	For any count process $ (X_t) $ and amplitude-modulating process $ (O_t)$ that satisfies assumption \eqref{A}, the asymptotic variance and bias of $ \hat{I}^\bin $ are given by 
	\begin{align*}
		(i)\quad \sigma_{\hat{I}^\bin}^2&=\tfrac{n^2}{T\uptau\mu^4(n-\mu)^4} \Bigg[ \big(\mu^2(1-n)-n\mu_{(2)}+2\mu\mu_{(2)}\big)^2\big(\mu_{(2)}+\mu-\mu^2\big)
		\\[1ex]&\quad +2\mu(n-\mu)\big(\mu^2(1-n)-n\mu_{(2)}+2\mu\mu_{(2)}\big)\big(\mu_{(3)}+2\mu_{(2)}-\mu\mu_{(2)}\big)
		\\[1ex]&\quad+\mu^2(n-\mu)^2\big(\mu_{(4)}+4\mu_{(3)}+2\mu_{(2)}-\mu_{(2)}^2\big)
		\\[1ex]&\quad +\tfrac{2}{\uptau}\sum_{h=1}^\infty\uptau(h)\Bigg( \big(\mu^2(1-n)-n\mu_{(2)}+2\mu\mu_{(2)}\big)^2\big(\mu_{(1,1)}(h)-\mu^2\big)
		\\[1ex]&\quad+\mu(n-\mu) \big(\mu^2(1-n)-n\mu_{(2)}+2\mu\mu_{(2)}\big)\big(\mu_{(2,1)}(h)+\mu_{(1,2)}(h)-2\mu\mu_{(2)}\big)
		\\[1ex]&\quad +\mu^2(n-\mu)^2\big(\mu_{(2,2)}(h)-\mu_{(2)}^2\big) \Bigg) \Bigg],\\
		(ii)\quad \mathbb{B}_{\hat{I}^\bin}&=\tfrac{n}{T\uptau\mu^3(n-\mu)^3} \Bigg[\big(\mu^3(1-n)+n^2\mu_{(2)}+3\mu\mu_{(2)}(\mu-n)\big)\big(\mu_{(2)}+\mu-\mu^2\big)
		\\[1ex]&\quad+\mu(n-\mu)(2\mu-n)(\mu_{(3)}+2\mu_{(2)}-\mu\mu_{(2)})
		\\[1ex]&\quad+\tfrac{1}{\uptau^2}\sum_{h=1}^\infty\uptau(h)\Bigg(2\big(\mu^3(1-n)+n^2\mu_{(2)}+3\mu\mu_{(2)}(\mu-n)\big)\big(\mu_{(1,1)}(h)-\mu^2\big)
		\\[1ex]&\quad+\mu(n-\mu)(2\mu-n)\big(\mu_{(2,1)}(h)+\mu_{(1,2)}(h)-2\mu\mu_{(2)}\big)\Bigg)\Bigg].
	\end{align*}
\end{theorem}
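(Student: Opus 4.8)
The plan is to obtain Theorem~\ref{Memphis} purely by the Delta method, in exactly the same way as Theorem~\ref{DeJong} was handled for the Poisson index. All the analytic input is already in place: Theorem~\ref{CLT_Fac} (specialised to $m=2$) supplies the joint asymptotic normality of $\hat{\bmu}=(\hat{\mu}_{(1)},\hat{\mu}_{(2)})^\top$ with covariance entries $\sigma_{ij}$, the Jacobian $\D$ and the Hessian entries $h_{11},h_{12}$ are recorded in \eqref{Gavi1}--\eqref{Gavi2}, and the plug-in expressions \eqref{Depay} reduce the asymptotic variance and bias of $\hat{I}^\bin$ to linear combinations of the $\sigma_{ij}$ weighted by the entries of $\D$ and $\Hes$. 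So the proof is a substitution-and-simplification exercise, which I would organise in three steps.

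First, I would make the $\sigma_{ij}$ explicit. From Theorem~\ref{CLT_Fac} with $i,j\in\{1,2\}$, each $\sigma_{ij}$ splits into a lag-$0$ part $\tfrac{1}{\uptau}(\mu_{(i,j)}(0)-\mu_{(i)}\mu_{(j)})$ and a lag-sum part $\tfrac{1}{\uptau^2}\sum_{h\ge1}\uptau(h)(\mu_{(j,i)}(h)+\mu_{(i,j)}(h)-2\mu_{(i)}\mu_{(j)})$. The only quantities not yet available are the lag-$0$ mixed factorial moments, which I would compute by writing products of falling factorials as linear combinations of higher falling factorials: $\mu_{(1,1)}(0)=\e[X_t^2]=\mu_{(2)}+\mu$, then $\mu_{(1,2)}(0)=\mu_{(2,1)}(0)=\mu_{(3)}+2\mu_{(2)}$, and $\mu_{(2,2)}(0)=\e[((X_t)_{(2)})^2]=\mu_{(4)}+4\mu_{(3)}+2\mu_{(2)}$, the last one from the identity $((X)_{(2)})^2=(X)_{(4)}+4(X)_{(3)}+2(X)_{(2)}$. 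These are exactly the lag-$0$ combinations $\mu_{(2)}+\mu-\mu^2$, $\mu_{(3)}+2\mu_{(2)}-\mu\mu_{(2)}$ and $\mu_{(4)}+4\mu_{(3)}+2\mu_{(2)}-\mu_{(2)}^2$ appearing in the statement.

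Second, I would insert the derivative weights. Abbreviating $A\coloneqq\mu^2(1-n)-n\mu_{(2)}+2\mu\mu_{(2)}$ and $B\coloneqq\mu^3(1-n)+n^2\mu_{(2)}+3\mu\mu_{(2)}(\mu-n)$, equations \eqref{Gavi1}--\eqref{Gavi2} read $d_1=\tfrac{nA}{\mu^2(n-\mu)^2}$, $d_2=\tfrac{n}{\mu(n-\mu)}$, $\tfrac12 h_{11}=\tfrac{nB}{\mu^3(n-\mu)^3}$ and $h_{12}=\tfrac{n(2\mu-n)}{\mu^2(n-\mu)^2}$. Substituting these into \eqref{Depay} together with the $\sigma_{ij}$, I would factor out the common prefactor $\tfrac{n^2}{T\uptau\mu^4(n-\mu)^4}$ for the variance and $\tfrac{n}{T\uptau\mu^3(n-\mu)^3}$ for the bias. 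The three summands $d_1^2\sigma_{11}$, $2d_1d_2\sigma_{12}$, $d_2^2\sigma_{22}$ then produce, from their lag-$0$ parts, the factors $A^2$, $2A\mu(n-\mu)$ and $\mu^2(n-\mu)^2$ multiplying the three lag-$0$ combinations above (the first three displayed lines of part~(i)), while their lag-sum parts reproduce the sum over $h$ with the same three weights. The bias is assembled identically from $\tfrac12 h_{11}\sigma_{11}+h_{12}\sigma_{12}$, giving the weights $B$ and $\mu(n-\mu)(2\mu-n)$; here one should track the powers of $\uptau$ carefully, using the Poisson bias in Theorem~\ref{DeJong}(ii) as a sanity check on the $\uptau$-exponent of the lag-sum prefactor.

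I expect the only real obstacle to be bookkeeping rather than anything conceptual: the entries of $\D$ and $\Hes$ are rational functions with denominators that are distinct powers of $\mu$ and $n-\mu$, so the main care goes into bringing all three contributions over the single common denominator and checking that the numerator groups into the stated $A$-, $B$- and $\mu(n-\mu)$-weighted blocks. A secondary point to verify is which constant terms survive: in the Poisson bias the $\mu^2\mu_{(2)}$ pieces cancelled between the $h_{11}$ and $h_{12}$ contributions, whereas here the weights $B$ and $\mu(n-\mu)(2\mu-n)$ differ, so the analogous constant terms $-\mu^2$ and $-2\mu\mu_{(2)}$ do not cancel and must be retained inside the lag sum, exactly as in the statement.
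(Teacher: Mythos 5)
Your proposal is correct and follows essentially the same route as the paper's own proof in Appendix \ref{A_PInd_ThMemphis}: insert the covariances of Theorem \ref{CLT_Fac} into \eqref{Depay} with the derivative entries \eqref{Gavi1}--\eqref{Gavi2}, replace the lag-zero mixed factorial moments via the identities of Lemma \ref{Lebron} (your falling-factorial expansions are exactly its parts $(ii)$--$(iv)$), and collect everything over the common denominators $\mu^4(n-\mu)^4$ and $\mu^3(n-\mu)^3$. Your closing caution about the $\uptau$-exponent of the lag-sum term in the bias is well placed: carrying out the factorization gives an inner prefactor $\tfrac{1}{\uptau}$ (total $\tfrac{1}{\uptau^2}$, consistent with Theorem \ref{DeJong}$(ii)$), so your sanity check would in fact flag the $\tfrac{1}{\uptau^2}$ printed inside the bracket of part $(ii)$ of the statement.
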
 
The proof of Theorem \ref{Memphis} can be found in Appendix \ref{A_PInd_ThMemphis}. Once again, the theorem is model independent and is meant to be applicable across various data generating processes as mentioned before. This time, we have chosen the BAR(1) process as an illustrative example, because it has binomial marginal distribution. Thus, the binomial index of dispersion can be used to test a binomial null hypothesis. The following corollary specify the asymptotic variance and bias of the BAR(1) process.
\begin{corollary}\label{Demir}
	If we are concerned with a BAR(1) process and assume that the missing data follow a Markov model, \ie $ \uptau(h)=\uptau^2+\uptau(1-\uptau)r^h $, 
	then we obtain the asymptotic variance and bias of $ \hat{I}^\bin $ as
	\begin{align*}
		\sigma_{\hat{I}^\bin}^2=\tfrac{2}{T}\Big(1-\tfrac{1}{n}\Big)\cdot\kappa(2),
		\qquad
		\mathbb{B}_{\hat{I}^\bin}=-\tfrac{1}{T}\Big(1-\tfrac{1}{n}\Big)\cdot\kappa(1),
	\end{align*}	
	where $ \kappa(1) $, $ \kappa(2) $ are given by \eqref{Kappa}.
\end{corollary}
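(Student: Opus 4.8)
The plan is to specialize the model-independent formulas of Theorem \ref{Memphis} to the BAR(1) process by inserting its closed-form (mixed) factorial moments and then carrying out the geometric summation induced by the Markov missingness model. Concretely, I would first record the marginal factorial moments $\mu=n\pi$, $\mu_{(2)}=n(n-1)\pi^2$, $\mu_{(3)}=n(n-1)(n-2)\pi^3$, and $\mu_{(4)}=n(n-1)(n-2)(n-3)\pi^4$, and then read off the lagged mixed moments $\mu_{(1,1)}(h)$, $\mu_{(2,1)}(h)=\mu_{(1,2)}(h)$ (symmetric parameterization) and $\mu_{(2,2)}(h)$ from Proposition \ref{Hulk}.

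First I would simplify the Jacobian/Hessian coefficients appearing in Theorem \ref{Memphis}, since they collapse to single monomials in $\pi$ and $1-\pi$: one finds $\mu^2(1-n)-n\mu_{(2)}+2\mu\mu_{(2)}=-2n^2(n-1)\pi^2(1-\pi)$ and $\mu^3(1-n)+n^2\mu_{(2)}+3\mu\mu_{(2)}(\mu-n)=n^3(n-1)\pi^2(1-\pi)(1-3\pi)$, while $n-\mu=n(1-\pi)$. In parallel, the centred moment blocks factor cleanly: $\mu_{(2)}+\mu-\mu^2=n\pi(1-\pi)$, $\mu_{(3)}+2\mu_{(2)}-\mu\mu_{(2)}=2n(n-1)\pi^2(1-\pi)$, and $\mu_{(4)}+4\mu_{(3)}+2\mu_{(2)}-\mu_{(2)}^2=2n(n-1)\pi^2(1-\pi)(2n\pi-3\pi+1)$.

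Next I would handle the lagged terms. Writing $\delta_h\coloneqq\tfrac{1-\pi}{\pi}\rho^h$, Proposition \ref{Hulk} yields $\mu_{(1,1)}(h)-\mu^2=n\pi(1-\pi)\rho^h$, $\mu_{(2,1)}(h)+\mu_{(1,2)}(h)-2\mu\mu_{(2)}=4n(n-1)\pi^2(1-\pi)\rho^h$, and $\mu_{(2,2)}(h)-\mu_{(2)}^2=4n(n-1)^2\pi^3(1-\pi)\rho^h+2n(n-1)\pi^2(1-\pi)^2\rho^{2h}$. Substituting these into the variance bracket of Theorem \ref{Memphis}(i), the three linear-in-$\rho^h$ contributions carry coefficients in ratio $4:-8:4$ and cancel exactly, leaving only the genuine $\rho^{2h}$ term $2n^5(n-1)\pi^4(1-\pi)^4\rho^{2h}$; for the bias in (ii) no such cancellation is needed, since only $\rho^h$ terms occur. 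I would then sum using $\uptau(h)=\uptau^2+\uptau(1-\uptau)r^h$, so that $\sum_{h\geq1}\uptau(h)\rho^{sh}=\uptau^2\tfrac{\rho^s}{1-\rho^s}+\uptau(1-\uptau)\tfrac{r\rho^s}{1-r\rho^s}$, and recombine the two resulting geometric tails with the lag-$0$ piece. After applying $\tfrac{1-\uptau}{\uptau}=\tfrac1\uptau-1$, the $\tfrac1\uptau$-terms reassemble into $\tfrac1\uptau\tfrac{1+r\rho^s}{1-r\rho^s}$ and the remainder into $\tfrac{2(1-r)\rho^s}{(1-\rho^s)(1-r\rho^s)}$, i.e.\ exactly $\kappa(s)$ from \eqref{Kappa}, with $s=2$ for the variance and $s=1$ for the bias.

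The main obstacle is bookkeeping: one must show that the factor $\pi^4(1-\pi)^4$ (respectively $\pi^3(1-\pi)^3$ for the bias) produced by combining all bracket terms exactly matches the $\pi$-dependence of the prefactor $\tfrac{n^2}{T\uptau\mu^4(n-\mu)^4}$ (respectively $\tfrac{n}{T\uptau\mu^3(n-\mu)^3}$), so that every occurrence of $\pi$ cancels and only $1-\tfrac1n$ survives. This cancellation is delicate because the individual summands do not share this factor; it emerges only after the lag-$0$ blocks combine (via the identity $-2(n-1)\pi+(2n\pi-3\pi+1)=1-\pi$) and after the $\rho^h$ cancellation in the lagged part. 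Two independent checks steer the computation: setting $\uptau=1$ must reproduce the complete-data BAR(1) result $T\sigma_{\hat{I}^\bin}^2=2(1-\tfrac1n)\tfrac{1+\rho^2}{1-\rho^2}$, and the whole derivation runs structurally parallel to the Poisson case of Corollary \ref{Nico}, the binomial answer differing only by the bounded-range correction factor $1-\tfrac1n$.
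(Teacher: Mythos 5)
Your proposal is correct, and its key computations all check out: $\mu^2(1-n)-n\mu_{(2)}+2\mu\mu_{(2)}=-2n^2(n-1)\pi^2(1-\pi)$, the $4:-8:4$ cancellation of the $\rho^h$ coefficients in the variance bracket, the surviving term $2n^5(n-1)\pi^4(1-\pi)^4$ (both at lag $0$, via $-2(n-1)\pi+(2n\pi-3\pi+1)=1-\pi$, and as the $\rho^{2h}$ coefficient) against the prefactor $\tfrac{n^2}{T\uptau\mu^4(n-\mu)^4}=\tfrac{1}{T\uptau n^6\pi^4(1-\pi)^4}$, yielding $\tfrac{2}{T}\big(1-\tfrac1n\big)\kappa(2)$, and the analogous bias combination $(1-3\pi)+2(2\pi-1)=\pi-1$ yielding $-\tfrac{1}{T}\big(1-\tfrac1n\big)\kappa(1)$. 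However, your route is organized differently from the paper's. The paper never substitutes into Theorem \ref{Memphis} at all: its proof of Corollary \ref{Demir} first evaluates the Jacobian/Hessian at binomial marginals, obtaining the compact $\D=\big(-\tfrac{2(n-1)}{n(1-\pi)},\,\tfrac{1}{n\pi(1-\pi)}\big)$ from \eqref{Gavi1}, and then invokes Lemma \ref{Pedri}, in which the geometric sums under the Markov missingness model have already been absorbed into closed forms for $\sigma_{11},\sigma_{12},\sigma_{22}$ (``sum first, combine later''). There the decisive cancellation is a one-liner: $d_1^2\sigma_{11}+2d_1d_2\sigma_{12}=-\tfrac{4(n-1)^2}{n^2(1-\pi)^2}\sigma_{11}$ exactly annihilates the $4(n-1)^2\pi^2\sigma_{11}$ part of $d_2^2\sigma_{22}$, leaving only the $\kappa(2)$ term, and the bias collapses to $-\tfrac{n-1}{n^2\pi(1-\pi)}\sigma_{11}$ with $\sigma_{11}=n\pi(1-\pi)\kappa(1)$ --- the same identity your bookkeeping uses, just applied before rather than after the lag summation. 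Your ``combine first, sum last'' version buys structural insight --- it makes visible lag by lag why only $\rho^{2h}$ survives in the variance (hence $\kappa(2)$) while the bias retains $\rho^h$ (hence $\kappa(1)$) --- at the cost of heavier intermediate expressions; the paper's version is shorter and makes Lemma \ref{Pedri} reusable for the skewness result in Theorem \ref{Ronaldo2}. One caution if you execute your plan literally: in Theorem \ref{Memphis}$(ii)$ as printed, the lagged sum carries the weight $\tfrac{1}{\uptau^2}$, which combined with the prefactor $\tfrac{n}{T\uptau\mu^3(n-\mu)^3}$ gives an overall $\uptau^{-3}$ and would \emph{not} recombine into $\kappa(1)$; tracing $\tfrac12 h_{11}\sigma_{11}+h_{12}\sigma_{12}$ shows the correct inner weight is $\tfrac{1}{\uptau}$, and your recombination into $\kappa(1)$ implicitly uses this corrected weight, consistent with the paper's own corollary proof via Lemma \ref{Pedri}.
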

\begin{proof}
	See Appendix \ref{A_PInd_CorDemir}.
\end{proof}
\begin{remark}
For the binomial index of dispersion, we can infer the following: If $ n\rightarrow\infty $, then the asymptotic variance and asymptotic bias coincide with the asymptotics for the Poi-INAR(1) process according to Corollary \ref{Nico}. In fact, the variance and bias are just compressed by the factor $(1-\frac{1}{n})$, thus we omit a further illustration here.	Note that setting the missingness parameter $ \uptau=1 $ leads to 
	\begin{align*}
		\sigma_{\hat{I}^{\text{Bin}}}^2=\frac{2}{T}\Big(1-\tfrac{1}{n}\Big) \frac{1+\rho^2}{1-\rho^2},
	\end{align*}
	which coincides with \citet[p.62]{Weiss18}. 
\end{remark}
%
\section{Skewness index}\label{Ch_SkewInd}
%
To analyze the marginal distribution of a count process, one should not solely look at dispersion properties, but also consider further shape characteristics such as skewness. Thus, we define the skewness index $ I_\skw =\mu_{(3)}/(\mu_{(2)}\mu)$ as in\citep{AlexWeiss22, AlexWeiss23}, which, unlike the Poisson/binomial index of dispersion, is not constrained to unbounded or bounded counts. The skewness index is $ I_\skw^\poi =1$ for the Poisson distribution since $ \mu_{(k)}=\mu^k $, and $ I_\skw^\bin =1-2/n$ for the binomial distribution since $  \mu_{(k)}=n_{(k)}\pi^k  $. The sample counterpart to the skewness index is given by $ \hat{I}_\skw =\hat{\mu}_{(3)}/(\hat{\mu}_{(2)}\hat{\mu})$. Similar to how the two other indices' asymptotic variance and bias are derived, those for the skewness index can be found in Appendix \ref{A_SkewInd}.
In the case that the skewness index is applied to Poisson or binomial counts, the results for the  Jacobian $ \D $ and the Hessian $ \Hes $ can be found in Lemma \ref{Batman}.
Now, utilizing the Taylor approximation $ \hat{I}_\skw\approx I_\skw + \D(\hat{\bmu}-\bmu)+\tfrac{1}{2}(\hat{\bmu}-\bmu)^\top\Hes(\hat{\bmu}-\bmu) $, we conclude the asymptotic variance and bias of $ \hat{I}_\skw $ as
\begin{align}
	\sigma_{\hat{I}_\skw}^2&=\tfrac{1}{T} \Big( d_1^2\sigma_{11} +d_2^2\sigma_{22} +d_3^2\sigma_{33} +2 d_1d_2\sigma_{12} +2 d_1d_3\sigma_{13}  +2 d_2d_3\sigma_{23} \Big),\\
	\mathbb{B}_{\hat{I}_\skw}&=\tfrac{1}{T} \Big( \tfrac{1}{2}\big( h_{11}\sigma_{11} +h_{22}\sigma_{22}\big) +h_{12}\sigma_{12} +h_{13}\sigma_{13}  +h_{23}\sigma_{23} \Big).
\end{align}
To calculate the variance and bias of the following theorems, we consider the Poi-INAR(1) and BAR(1) processes.
\begin{theorem}\label{Ronaldo1}
	If we are concerned with a Poi-INAR(1) process and assume that the missing data follow a Markov model, \ie $ \uptau(h)=\uptau^2+\uptau(1-\uptau)r^h $, we obtain the asymptotic variance and bias of  $ \hat{I}_\skw^\poi $ as
	\begin{align*} 
		\sigma_{\hat{I}_\skw^\poi}^2=\tfrac{1}{T\mu^3}\Big[8\mu\cdot\kappa(2)+6\cdot\kappa(3)\Big],\qquad
		\mathbb{B}_{\hat{I}_\skw^\poi}=-\tfrac{2}{T\mu^2}\Big[
		\mu\cdot\kappa(1)+2\cdot\kappa(2)\Big],
	\end{align*}	
	recall \eqref{Kappa} for the definition of $ \kappa(s) $.
\end{theorem}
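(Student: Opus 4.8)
\section*{Proof proposal for Theorem \ref{Ronaldo1}}

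The plan is to specialize the Delta-method expressions for $\sigma_{\hat{I}_\skw}^2$ and $\mathbb{B}_{\hat{I}_\skw}$ stated just before the theorem to the Poi-INAR(1) null model. This requires three ingredients: the Jacobian $\D$ and Hessian $\Hes$ of $g(x_1,x_2,x_3)=x_3/(x_2x_1)$ evaluated at $\bmu$, and the covariance matrix $\fSigma=(\sigma_{ij})$ of Theorem \ref{CLT_Fac}. For the Poisson marginal we have $\mu_{(k)}=\mu^k$, so the entries in Lemma \ref{Batman} collapse to pure powers of $\mu$; explicitly $d_1=-\mu^{-1}$, $d_2=-\mu^{-2}$, $d_3=\mu^{-3}$ and $h_{11}=2\mu^{-2}$, $h_{22}=2\mu^{-4}$, $h_{12}=\mu^{-3}$, $h_{13}=-\mu^{-4}$, $h_{23}=-\mu^{-5}$ (with $h_{33}=0$). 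This already shortens both quadratic forms considerably.

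For the covariances I would use Corollary \ref{Ilaxis}$(i)$ together with the closed-form mixed factorial moments of Proposition \ref{Kobe}. The decisive structural observation is that, in the Poisson case, each summand $B_{ij}(h):=\mu_{(i,j)}(h)-\mu^{i+j}$ is a polynomial in $\rho^h$ with vanishing constant term, say $B_{ij}(h)=\sum_{p=1}^{\min\{i,j\}}c_{ij,p}\,\rho^{ph}$, and moreover the zero-lag term obeys $\mu_{(i,j)}(0)-\mu^{i+j}=B_{ij}(h)\big|_{\rho^h=1}=\sum_p c_{ij,p}$, since formally setting $\rho^h=1$ in Proposition \ref{Kobe} reproduces the single-variate product moment $\mu_{(i,j)}(0)$. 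Thus the lag contribution and the zero-lag contribution to $\sigma_{ij}$ share one and the same set of coefficients $c_{ij,p}$.

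The key step is the summation over $h$ under the Markov missingness $\uptau(h)=\uptau^2+\uptau(1-\uptau)r^h$. Each monomial $\rho^{ph}$ yields two geometric series, giving $\tfrac{2}{\uptau}\sum_{h\ge1}\uptau(h)\rho^{ph}=2\uptau\tfrac{\rho^p}{1-\rho^p}+2(1-\uptau)\tfrac{r\rho^p}{1-r\rho^p}=:S_p$. Combining this with the zero-lag part and invoking the previous observation, every covariance factorizes as $\sigma_{ij}=\tfrac{1}{\uptau}\big(\sum_p c_{ij,p}+\sum_p c_{ij,p}S_p\big)=\sum_p c_{ij,p}\,\tfrac{1}{\uptau}(1+S_p)$. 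The linchpin is the algebraic identity $\tfrac{1}{\uptau}(1+S_s)=\kappa(s)$ for the $\kappa$ of \eqref{Kappa}, which I would confirm by a short partial-fraction computation over the common denominator $(1-\rho^s)(1-r\rho^s)$ (the same identity silently underlies Corollary \ref{Nico}). This delivers the compact representation $\sigma_{ij}=\sum_p c_{ij,p}\,\kappa(p)$, so that reading off coefficients gives $\sigma_{11}=\mu\kappa(1)$, $\sigma_{12}=2\mu^2\kappa(1)$, $\sigma_{13}=3\mu^3\kappa(1)$, $\sigma_{22}=4\mu^3\kappa(1)+2\mu^2\kappa(2)$, $\sigma_{23}=6\mu^4\kappa(1)+6\mu^3\kappa(2)$ and $\sigma_{33}=9\mu^5\kappa(1)+18\mu^4\kappa(2)+6\mu^3\kappa(3)$.

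Finally I would insert these $\sigma_{ij}$ and the simplified $\D,\Hes$ into the bilinear forms $\D\fSigma\D^\top$ and $\tfrac12\,\mathrm{tr}(\Hes\fSigma)$. The main obstacle is the bookkeeping of the ensuing cancellations: collecting the coefficients of $\kappa(1),\kappa(2),\kappa(3)$ separately, the $\kappa(1)$-coefficient of the variance vanishes identically, the six contributions summing to $1+4+9+4-6-12=0$, which leaves $\sigma_{\hat{I}_\skw^\poi}^2=\tfrac{1}{T\mu^3}\big[8\mu\,\kappa(2)+6\,\kappa(3)\big]$; by contrast the bias keeps a nonzero $\kappa(1)$-term (its analogous coefficients sum to $-2$) and collapses to $\mathbb{B}_{\hat{I}_\skw^\poi}=-\tfrac{2}{T\mu^2}\big[\mu\,\kappa(1)+2\,\kappa(2)\big]$. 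Verifying that these cancellations come out exactly as claimed, in particular that the $\kappa(1)$ mass disappears from the variance but persists in the bias, is the only delicate point; everything else is routine substitution enabled by the factorization $\sigma_{ij}=\sum_p c_{ij,p}\kappa(p)$.
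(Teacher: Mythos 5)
Your proposal is correct and follows essentially the same route as the paper's own proof: the paper likewise specializes the Delta-method forms via Lemma \ref{Batman}$(i)$ and inserts the covariances from Corollary \ref{Ilaxis}$(i)$ with Proposition \ref{Kobe} summed geometrically under the Markov missingness, and your values $\sigma_{11}=\mu\kappa(1)$ through $\sigma_{33}=9\mu^5\kappa(1)+18\mu^4\kappa(2)+6\mu^3\kappa(3)$ agree exactly with Lemma \ref{Pique}, as do the final cancellations (the $\kappa(1)$ mass vanishing from the variance but giving the coefficient $-2$ in the bias). Your only departure is cosmetic: you obtain all six covariances at once from the shared $\rho^{ph}$-coefficients and the identity $\tfrac{1}{\uptau}(1+S_p)=\kappa(p)$, where the paper's Lemma \ref{Pique} establishes them case by case with recursive substitutions.
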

\begin{proof}
	See Appendix \ref{A_SkewInd_ThRo1}.
\end{proof}
Figure \ref{GraphSkewPIndex} shows the variance and bias from Theorem \ref{Ronaldo1}.
By contrast to Corollary \ref{Nico}, we have dependence on the parameter $ \mu $, and thus the variance and bias both get smaller with increasing $ \mu $. Other than that, the results for growing missingness and increasing dependence are comparable to those from Figure \ref{GraphPIndex}.
\begin{figure}[h]
	\center\small
	\includegraphics[scale=0.47, viewport=1 40 700 310, clip
	]{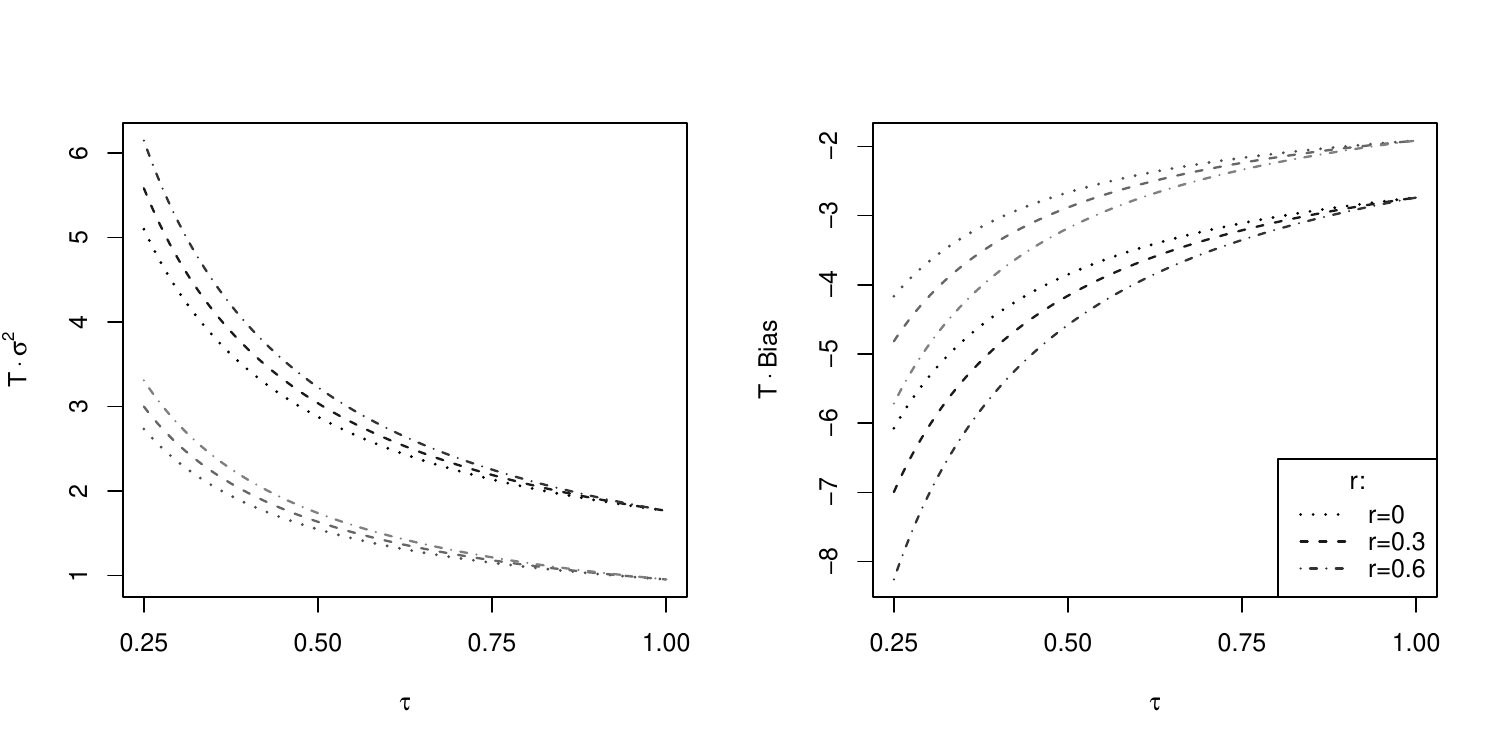} $ \uptau $
	\caption{Plot of the $ T\cdot $variance and $ T\cdot $bias for the skewness index of a Poi-INAR(1) process with fixed $ \rho=0.5 $, $ \mu\in\{3,4\}$, and dependence parameter $ r\in\{0, 0.3, 0.6\} $. Here, we have lighter gray levels for increasing $ \mu $.}
	\label{GraphSkewPIndex}
\end{figure}
\begin{theorem}\label{Ronaldo2}
	If we are concerned with a BAR(1) process and assume that the missing data follow a Markov model, \ie $ \uptau(h)=\uptau^2+\uptau(1-\uptau)r^h $, we obtain the asymptotic variance and bias of  $ \hat{I}_\skw^\bin $ as
	\begin{align*} 
		\sigma_{\hat{I}_\skw^\bin}^2&=\Big(\tfrac{(n-2)(n-\mu)^3}{(n-1)n^3}\Big)\tfrac{1}{T\mu^3}\Bigg[\tfrac{n-2}{n-\mu}\cdot8\mu\cdot\kappa(2)+6\cdot\kappa(3)\Bigg],\\
		\mathbb{B}_{\hat{I}_\skw^\bin}&=-\Big(\tfrac{(n-2)(n-\mu)^2}{(n-1)n^2}\Big)\tfrac{2}{T\mu^2}\Bigg[\tfrac{n-1}{n-\mu}\cdot\mu\cdot\kappa(1)+2\cdot\kappa(2)\Bigg].
	\end{align*}	
\end{theorem}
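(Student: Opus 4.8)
The plan is to follow exactly the route used for Theorem \ref{Ronaldo1}, specializing the Delta-method expressions for $\sigma_{\hat{I}_\skw}^2$ and $\mathbb{B}_{\hat{I}_\skw}$ stated just above to the BAR(1) process under the Markov missingness model. First I would read off the binomial-specific Jacobian $\D=(d_1,d_2,d_3)$ and the Hessian entries $h_{ij}$ from Lemma \ref{Batman}, where $g(x_1,x_2,x_3)=x_3/(x_1x_2)$ is evaluated at $\bmu=(\mu,\mu_{(2)},\mu_{(3)})$ with the binomial factorial moments $\mu_{(k)}=n_{(k)}\pi^k$ (so that $I_\skw^\bin=1-2/n$). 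These reduce to rational expressions in $n$ and $\mu=n\pi$, and supply the $n$-dependent coefficients that will eventually form the prefactors in the statement.

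Next I would compute the six covariance entries $\sigma_{11},\sigma_{22},\sigma_{33},\sigma_{12},\sigma_{13},\sigma_{23}$ from Corollary \ref{Ilaxis}$(ii)$, where each $\sigma_{ij}$ splits into a lag-$0$ part $\mu_{(i,j)}(0)-\mu_{(i)}\mu_{(j)}$ and a tail $\tfrac{2}{\uptau}\sum_{h\ge1}\uptau(h)\big(\mu_{(i,j)}(h)-\mu_{(i)}\mu_{(j)}\big)$. Using Proposition \ref{Hulk} and writing $\theta\coloneqq\tfrac{1-\pi}{\pi}$, I would record
\begin{align*}
\mu_{(i,j)}(h)-\mu_{(i)}\mu_{(j)}=\mu_{(i)}\mu_{(j)}\sum_{l=1}^{\min\{i,j\}}\frac{\binom{i}{l}\binom{n-i}{j-l}}{\binom{n}{j}}\big[(1+\theta\rho^h)^l-1\big],
\end{align*}
which is a polynomial in $\rho^h$ with vanishing constant term: the $l=0$ contribution cancels by Vandermonde's identity, reflecting the asymptotic independence $\mu_{(i,j)}(h)\to\mu_{(i)}\mu_{(j)}$. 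Expanding $(1+\theta\rho^h)^l-1$ brings in powers $\rho^h,\rho^{2h},\rho^{3h}$ (the top power $\rho^{3h}$ appearing only through $\sigma_{33}$).

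Then I would insert the Markov missingness $\uptau(h)=\uptau^2+\uptau(1-\uptau)r^h$ and evaluate every resulting tail through the geometric series
\begin{align*}
\sum_{h\ge1}\uptau(h)\,\rho^{sh}=\uptau^2\frac{\rho^s}{1-\rho^s}+\uptau(1-\uptau)\frac{r\rho^s}{1-r\rho^s},\qquad s\in\bbn,
\end{align*}
which, after multiplication by $2/\uptau$ and combination with the lag-$0$ constants, are exactly the constituents of $\kappa(s)$ in \eqref{Kappa}. Substituting these covariances into $\sigma_{\hat{I}_\skw}^2=\tfrac1T\big(\sum_i d_i^2\sigma_{ii}+2\sum_{i<j}d_id_j\sigma_{ij}\big)$ and into the bias expression (with $h_{33}=0$), I would finally collect terms by the power of $\rho^h$.

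The hard part is the algebra: six covariances, each an expansion in $\rho^h,\rho^{2h},\rho^{3h}$, are multiplied by the $n$-dependent $d_i,h_{ij}$ and must collapse onto just two $\kappa$-terms. The decisive simplification — mirroring the cancellation already visible in the Poisson computation behind Theorem \ref{Ronaldo1} — is that in the variance all pure $\rho^h$-contributions cancel, leaving $8\mu\,\kappa(2)$ and $6\,\kappa(3)$ up to a common factor, while in the bias only $\kappa(1)$ and $\kappa(2)$ survive. Verifying these cancellations and extracting the clean prefactors $\tfrac{(n-2)(n-\mu)^3}{(n-1)n^3}$ and $\tfrac{(n-2)(n-\mu)^2}{(n-1)n^2}$, together with the inner correction factors $\tfrac{n-2}{n-\mu}$ and $\tfrac{n-1}{n-\mu}$, is the delicate bookkeeping. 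A reassuring consistency check is that letting $n\to\infty$ with $\mu$ fixed (so $\pi\to0$) sends all these factors to $1$ and recovers Theorem \ref{Ronaldo1} exactly.
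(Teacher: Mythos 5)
Your proposal is correct and takes essentially the same route as the paper's proof: the Delta method with the binomial Jacobian and Hessian of Lemma \ref{Batman}\,(ii), the six covariances from Corollary \ref{Ilaxis}\,(ii) specialized via Proposition \ref{Hulk} under the Markov missingness model (computations the paper packages as Lemmas \ref{Neymar} and \ref{Pedri}), and the final collapse onto $\kappa(2)$, $\kappa(3)$ for the variance and $\kappa(1)$, $\kappa(2)$ for the bias once the $\sigma_{11}$-contributions cancel. Your unified Vandermonde formula for $\mu_{(i,j)}(h)-\mu_{(i)}\mu_{(j)}$ and the $n\to\infty$ consistency check with Theorem \ref{Ronaldo1} are both valid but do not alter the argument's substance.
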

\begin{proof}
	See Appendix \ref{A_SkewInd_ThRo2}.
\end{proof}
The variance and bias from Theorem \ref{Ronaldo2} are displayed in Figure \ref{GraphSkewBinIndex}. Since the variance and bias each have an additional parameter $ n $ in their equations, we analyze three options and compare them against each other. The asymptotic variance of the skewness index for a BAR(1) process is quite similar to that of the skewness index for a Poi-INAR(1) process, as can be seen by comparing the aforementioned theorems. For the variance and bias, respectively, we plotted nine graphs. They are in groups of three, where each group addresses the same dependence parameter $ r\in\{0, 0.3, 0.6\} $ and one choice of $ n $. The graphs are shifted upwards for the variance and downwards for the bias if $ n $ gets larger. Again, the results for growing dependence and missingness are comparable to those from Figure \ref{GraphPIndex} for both variance and bias. The skewness index for the Poi-INAR(1) and BAR(1) processes are clearly linked as follows:
\begin{align*}
	\sigma_{\hat{I}_\skw^\bin}^2 \xrightarrow[n\to\infty]{} \sigma_{\hat{I}_\skw^\poi}^2,\quad 	\mathbb{B}_{\hat{I}_\skw^\bin} \xrightarrow[n\to\infty]{}	\mathbb{B}_{\hat{I}_\skw^\poi},
\end{align*}
where $ \mu $ is kept fixed.\\
\begin{figure}[h]
	\center\small
	\includegraphics[scale=0.47, viewport=1 40 700 310, clip]{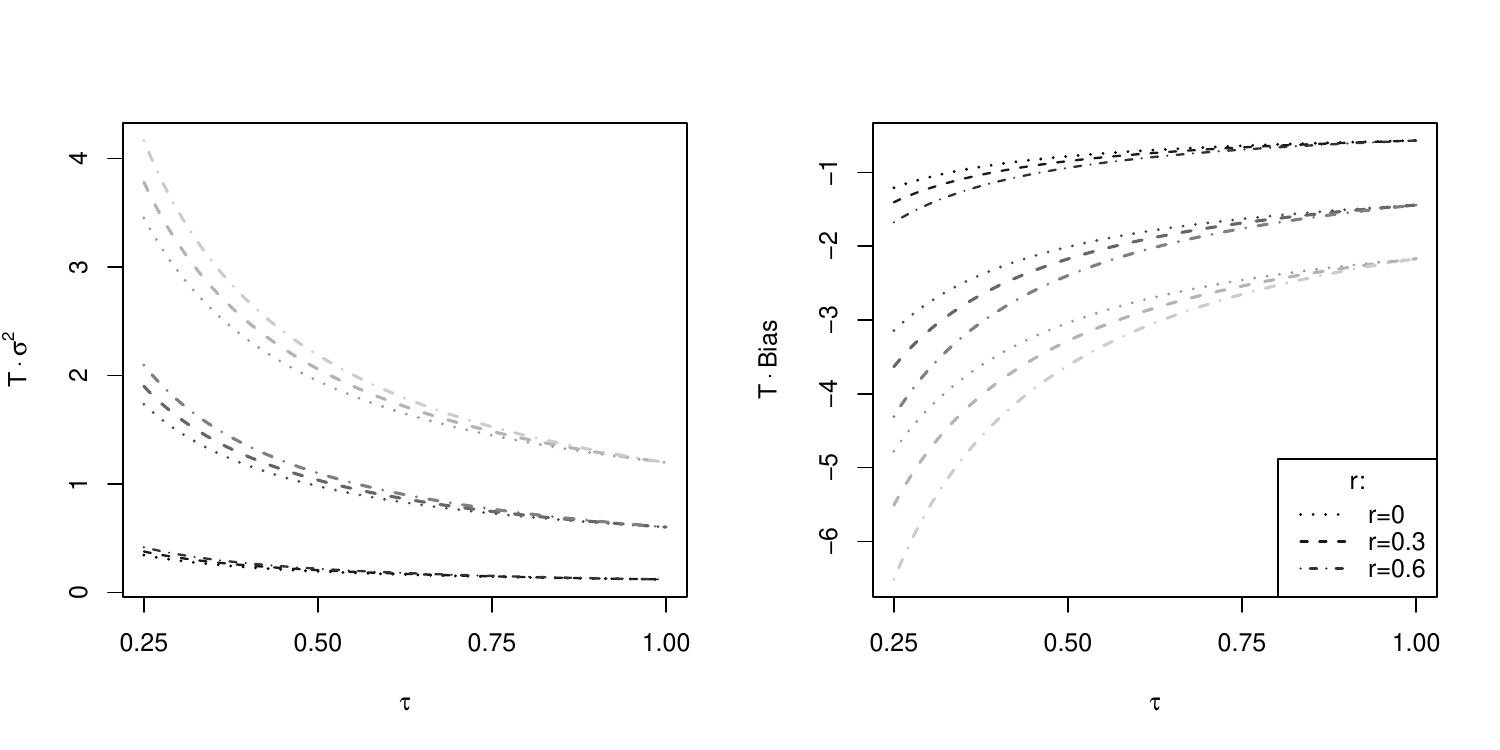} $ \uptau $
	\caption{Plot of the $ T\cdot $variance and $ T\cdot $bias for the skewness index of a BAR(1) process with fixed $ \rho=0.5 $, $ \mu=3 $, dependence parameter $ r\in\{0, 0.3, 0.6\} $, and $ n\in\{5, 10, 25\} $. Here, we have lighter gray levels for increasing $ n $.}
	\label{GraphSkewBinIndex}
\end{figure}
%
\section{Simulation Study}\label{Ch_SimStudy}
%
A simulation analysis with 10,000 replications per scenario is done to assess the finite-sample performance of the asymptotic results established in Sections \ref{Ch_PInd}--\ref{Ch_SkewInd}. We assume the Poi-INAR(1) model for unbounded counts, which has the Poisson marginal distribution \poi($\mu$) with $\mu\in(0,\infty)$, and the BAR(1) model for bounded counts, which has the binomial marginal distribution \bin($n,\pi$) with $\pi\in(0,1)$ and $ n\in\bbn $. Moreover, in both scenarios, we take into account the sample sizes $ T\in\{100,250,500,1000\} $, the dependence parameters $ r\in\{0,0.3,0.6\} $ ($ r=0 $ leads to \iid $ (O_t) $), and the probability $ \uptau\in\{1,0.8,0.6,0.4\} $ of the amplitude-modulating process ($ O_t $). Furthermore, we fix $ \rho=0.5 $ and $ \mu=3 $ ($ n\pi=3 $), and we consider $ n\in \{10,25\}$ in the bounded case. We compute the $ I^\poi $, $ I^\bin $, and $ I_\skw$ for each of these scenarios and compare the sample properties obtained from the simulated data to the asymptotic formulas derived in Sections  \ref{Ch_PInd}--\ref{Ch_SkewInd}.
\begin{table}[h]
	\centering
	\caption{Asymptotic vs. simulated mean and SD of $\hat{I}^\poi$ and $\hat{I}^\poi_{\skw}$ data; time series of length $ T $ is generated by 
		Poi-INAR(1)  counts with fixed $\mu=3$, $ \rho=0.5 $.}
	\label{TabPoi}
	\smallskip
	\scalebox{0.85}{
		\begin{tabular}{llr|rlrl|rlrl}
			\hline
			&&& \multicolumn{2}{c}{mean of $\hat{I}^\poi$} & \multicolumn{2}{c|}{SD of $\hat{I}^\poi$}& \multicolumn{2}{c}{mean of $\hat{I}^\poi_{\skw}$} & \multicolumn{2}{c}{SD of $\hat{I}^\poi_{\skw}$} \\
			$\uptau$ & $r$ & $T$ & $ \quad $sim$ \quad $ & asym$ \quad $ & $ \quad $sim$ \quad $ & asym$ \quad $  & $ \quad $sim$ \quad $ & asym$ \quad $ & $ \quad $sim$ \quad $ & asym$ \quad $ \\
			\hline
			0.8 & 0    & 100 & 0.967 & 0.968 & 0.189 & 0.196 & 0.971 & 0.970 & 0.132 & 0.143 \\ 
			& 0.6 & 100 & 0.968 & 0.965 & 0.191 & 0.200 & 0.971 & 0.968 & 0.132 & 0.146 \\ \cline{2-11}
			& 0    & 250 & 0.987 & 0.987 & 0.124 & 0.124 & 0.989 & 0.988 & 0.089 & 0.090 \\ 
			& 0.6 & 250 & 0.987 & 0.986 & 0.126 & 0.127 & 0.987 & 0.987 & 0.089 & 0.092 \\ \cline{2-11}
			& 0    & 500 & 0.994 & 0.994 & 0.088 & 0.088 & 0.995 & 0.994 & 0.063 & 0.064 \\ 
			& 0.6 & 500 & 0.994 & 0.993 & 0.088 & 0.090 & 0.993 & 0.994 & 0.063 & 0.065 \\ 
			\hline\hline
			0.6 & 0    & 100 & 0.965 & 0.963 & 0.210 & 0.216 & 0.966 & 0.965 & 0.146 & 0.158 \\ 		
			& 0.6 & 100 & 0.956 & 0.958 & 0.219 & 0.227 & 0.961 & 0.960 & 0.151 & 0.166 \\ \cline{2-11}
			& 0    & 250 & 0.987 & 0.985 & 0.134 & 0.137 & 0.987 & 0.986 & 0.096 & 0.100 \\ 
			& 0.6 & 250 & 0.984 & 0.983 & 0.140 & 0.143 & 0.985 & 0.984 & 0.100 & 0.105 \\ \cline{2-11}
			& 0    & 500 & 0.992 & 0.993 & 0.096 & 0.097 & 0.992 & 0.993 & 0.069 & 0.071 \\ 
			& 0.6 & 500 & 0.991 & 0.992 & 0.100 & 0.101 & 0.991 & 0.992 & 0.072 & 0.074 \\ 
			\hline
	\end{tabular}}
\end{table}
The mean and standard deviation (SD) (simulated vs.\ asymptotic) of the $\hat{I}^\poi$ and $\hat{I}^\poi_{\skw}$ are presented in Table \ref{TabPoi}. Only two degrees of missingness are shown here: $ \uptau=0.8 $ (20\% missing data) and $ \uptau=0.6 $ (40\% missing data), as well as the dependence $ r\in\{0,0.6\} $ and sample sizes $ T\in\{100,250,500\} $. Appendix \ref{A_Tables} contains the entire table. With decreasing $ \uptau $, the negative bias becomes stronger, while the SDs increase in all scenarios. The crucial takeaway is that the asymptotics do a good job in capturing these changes, with the exception of $ T=100 $, where there is a minor difference between simulated and asymptotic SD for $\hat{I}^\poi$. The differences are not  problematic for practice for $ T=100 $, because test methods based on these asymptotic formulas have only a conservative effect. Interestingly, the same statements hold for $\hat{I}^\poi_{\skw}$, where we again see a modest deviation between simulation and asymptotic SD. As a result, the derived asymptotics are well-suited to approximate the true finite-sample distributions of $\hat{I}^\poi$ and $\hat{I}^\poi_{\skw}$.
\begin{table}[h]
	\centering
	\caption{Asymptotic vs. simulated mean and SD of $\hat{I}^\bin$ and $\hat{I}^\bin_{\skw}$ data; time series of length $ T $ is generated by 
		BAR(1)  counts with  fixed $\mu=3$ and $ \rho=0.5 $.}
	\label{TabBin}
	\smallskip
	\scalebox{0.8}{
		\begin{tabular}{lllr|rlrl|rlrl}
			\hline
			&&&& \multicolumn{2}{c}{mean of $\hat{I}^\bin$} & \multicolumn{2}{c|}{SD of $\hat{I}^\bin$}& \multicolumn{2}{c}{mean of $\hat{I}^\bin_{\skw}$} & \multicolumn{2}{c}{SD of $\hat{I}^\bin_{\skw}$} \\
			$\uptau$ & $ n $ & $r$ & $T$  & $ \quad $sim$ \quad $ & asym$ \quad $ & $ \quad $sim$ \quad $ & asym$ \quad $  & $ \quad $sim$ \quad $ & asym$ \quad $ & $ \quad $sim$ \quad $ & asym$ \quad $ \\
			\hline		
			0.8 & 10 & 0    & 250 & 0.987 & 0.988 & 0.118 & 0.117 & 0.793 & 0.794 & 0.053 & 0.053 \\ 
			& 10 & 0.6 & 250 & 0.986 & 0.988 & 0.119 & 0.120 & 0.793 & 0.793 & 0.054 & 0.054 \\ 
			& 25 & 0    & 250 & 0.987 & 0.988 & 0.119 & 0.121 & 0.910 & 0.910 & 0.072 & 0.074 \\ 
			& 25 & 0.6 & 250 & 0.987 & 0.987 & 0.123 & 0.124 & 0.909 & 0.910 & 0.074 & 0.076 \\ \cline{2-12}
			& 10 & 0    & 500 & 0.994 & 0.994 & 0.082 & 0.083 & 0.797 & 0.797 & 0.037 & 0.037 \\ 
			& 10 & 0.6 & 500 & 0.993 & 0.994 & 0.085 & 0.085 & 0.796 & 0.797 & 0.038 & 0.038 \\ 
			& 25 & 0    & 500 & 0.995 & 0.994 & 0.085 & 0.086 & 0.916 & 0.915 & 0.052 & 0.053 \\ 
			& 25 & 0.6 & 500 & 0.995 & 0.993 & 0.089 & 0.088 & 0.916 & 0.915 & 0.054 & 0.054 \\ 
			\hline\hline
			0.6 & 10 & 0    & 250 & 0.986 & 0.987 & 0.128 & 0.130 & 0.792 & 0.793 & 0.057 & 0.058 \\ 
			& 10 & 0.6 & 250 & 0.986 & 0.985 & 0.135 & 0.136 & 0.792 & 0.792 & 0.061 & 0.061 \\ 
			& 25 & 0    & 250 & 0.986 & 0.986 & 0.132 & 0.134 & 0.909 & 0.909 & 0.080 & 0.082 \\ 
			& 25 & 0.6 & 250 & 0.982 & 0.984 & 0.137 & 0.140 & 0.906 & 0.907 & 0.083 & 0.086 \\ \cline{2-12}			
			& 10 & 0    & 500 & 0.992 & 0.993 & 0.090 & 0.092 & 0.796 & 0.796 & 0.040 & 0.041 \\ 
			& 10 & 0.6 & 500 & 0.991 & 0.992 & 0.096 & 0.096 & 0.795 & 0.796 & 0.043 & 0.043 \\ 
			& 25 & 0    & 500 & 0.991 & 0.993 & 0.093 & 0.095 & 0.914 & 0.915 & 0.057 & 0.058 \\ 			
			& 25 & 0.6 & 500 & 0.992 & 0.992 & 0.099 & 0.099 & 0.913 & 0.914 & 0.060 & 0.061 \\ 
			\hline
	\end{tabular}}
\end{table}
In Table \ref{TabBin}, the mean and SD (simulated vs. asymptotic) of $\hat{I}^\bin$ and $\hat{I}^\bin_{\skw}$ are displayed. In this case, we consider the sample sizes $ T\in\{250,500\} $ and $ n\in\{10,25\} $, and we again present two levels of missingness $ \uptau\in\{0.8,0.6\} $ as well as dependence $ r\in\{0,0.6\} $. The whole table can be found in Appendix \ref{A_Tables}. We see a fairly good agreement between the asymptotic approximation and the true sampling distribution for $\hat{I}^\bin$. Note that decreasing $ \uptau $ intensifies the negative bias and increases the SDs in any scenario. There is a minor influence of the two chosen $ n $, which has no effect on the bias but has a minor effect on the SD. By focusing on $\hat{I}^\bin_{\skw}$, we can confirm that the asymptotic approximation and the true sampling distribution are in good agreement once more. In every circumstance, decreasing $ \uptau $ intensifies the negative bias and increases the SDs. The choice of $ n $, on the other hand, has a considerable impact on the bias and SDs. A larger $ n $ causes a substantial spike in the negative bias as well as a large increase in the SD, but our asymptotics successfully capture such spikes. Because the parameter $ n $ is significantly more entangled in the equations in Theorem  \ref{Ronaldo2} than in Corollary \ref{Demir} for the binomial index of dispersion, this divergence for the skewness index  is easily explained. Consequently, the derived asymptotics are well-suited for approximating the true finite-sample distributions of $\hat{I}^\bin$ and $\hat{I}^\bin_{\skw}$.
%
\section{Real-data applications}\label{Ch_RealData}
%
The amplitude modulation technique introduced in Section \ref{Ch_MissData} was successfully utilized to derive the asymptotic variance and bias of different diagnostic statistics. The simulations of Section \ref{Ch_SimStudy} showed that these asymptotics provide a good approximation of their true finite sample distributions. Thus, it is justified to use the asymptotics to implement diagnostic tests regarding the count time series' marginal distribution. In what follows, we are faced with two cases for the amplitude-modulating process $( O_t )$, \ie $ (O_t) $ follows a Markov model or is \iidno {} Thus, we consider Corollary \ref{Demir} and Theorem \ref{Ronaldo2} for bounded counts and Corollary \ref{DeJong} and Theorem \ref{Ronaldo1} for unbounded counts, which all assume a Markov model, but also include the case of \iid missing data if $ r=0 $. The next sections discuss three data examples.
%
\subsection{Peak severity counts}\label{Ch_PeakCounts}
%
\begin{figure}[h]
	\center\small
	\includegraphics[scale=0.55, viewport=1 40 600 240, clip
	]{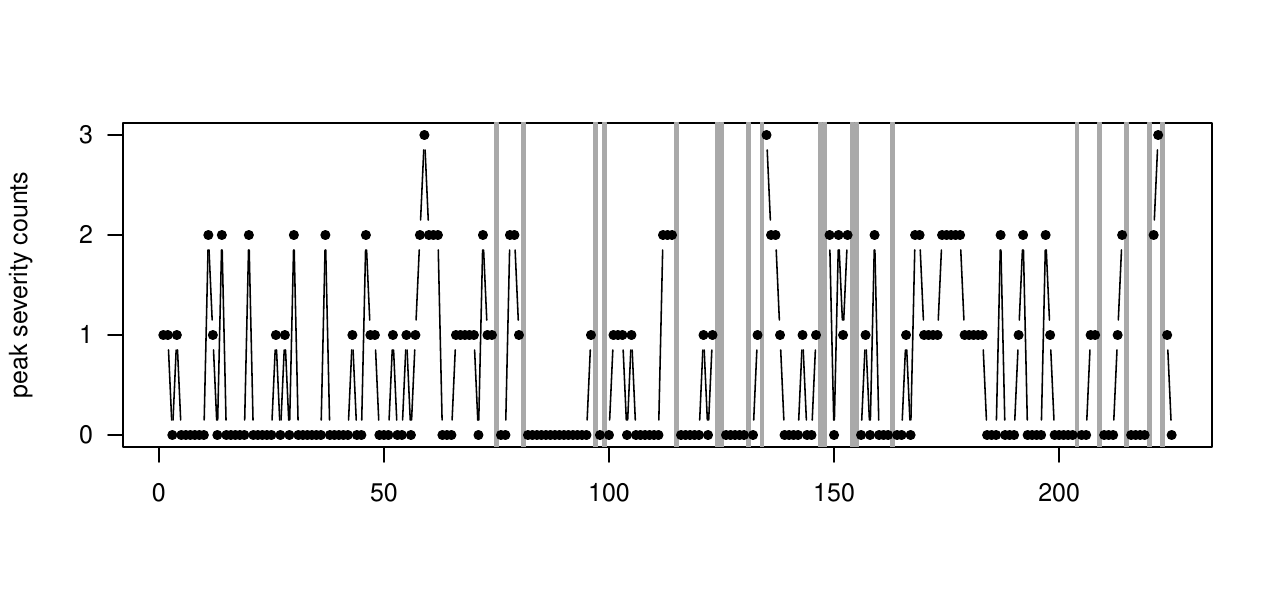}t
	\caption{Plot of daily peak severity counts. The vertical lines indicate missing data.}
	\label{MigraineData}
\end{figure}
\noindent In our first data example, we analyze a certain type of migraine data, which originates from a mobile app, N1-Headache$ ^{\text{TM}} $  that was developed by Curelator Inc. Here, patients log various information into the app, like migraine symptoms and medication as well as a range of factors (moods, weather, diet, etc.). We are interested in the pain peak severity counts with range $ \{0,\ldots,3\} $ which originate from applying a rank count approach to the original data. As an illustration, we look at patient $ A $ from \citet[Figure 1]{Weiss21}. Due to confidentiality, we read and reproduce the data in Figure \ref{MigraineData}. We get a time series of daily peak severity of length $ T=225 $ with $ (8.4\ \%) $ missing data represented by the gray vertical lines. Here, missingness occurs because the patient fails to provide the information, \eg to enter the severity into the app. 

For the peak severity counts, the data set is bounded with an upper bound of $ n=3 $. We use the asymptotics for bounded counts derived in Sections \ref{Ch_BinInd}--\ref{Ch_SkewInd}. First, let us analyze the pattern of the amplitude-modulation $ (O_t) $ in more detail. In the left part of Figure \ref{Migraine_Ot_Xt}, the sample  ACF and the sample partial ACF  (PACF) of the $ O_t $ are presented. Since there are no significant dependencies in eather the ACF or PACF, it is reasonable to assume that $ O_t $ can be treated as \iid  data, which is in agreement with analogous conclusions in \cite{Weiss21}. Thus, we can set the dependence parameter $ \hat{r}=0 $, which is the special case of the Markov model considered in our derivations. Furthermore, we have $ (8.4\ \%) $ missing values which corresponds to $ \hat{\uptau}=0.916 $.
\begin{figure}[h]
	\center\small
	\includegraphics[scale=0.57, viewport=1 40 700 300, clip]{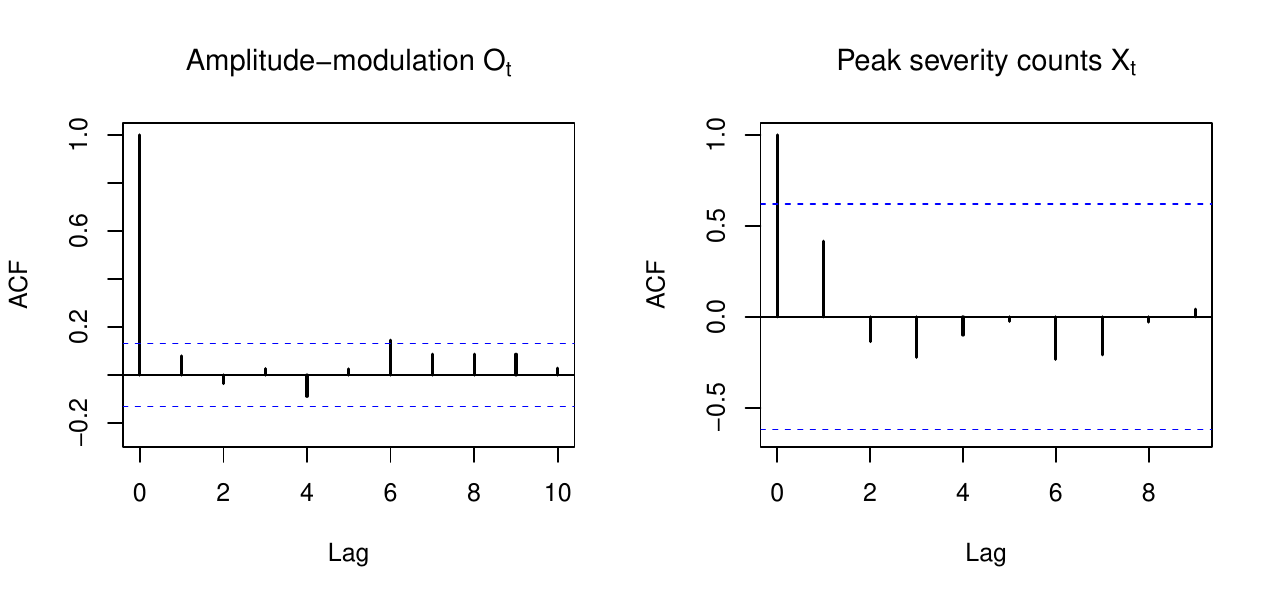}
	\includegraphics[scale=0.57, viewport=1 40 700 300, clip]{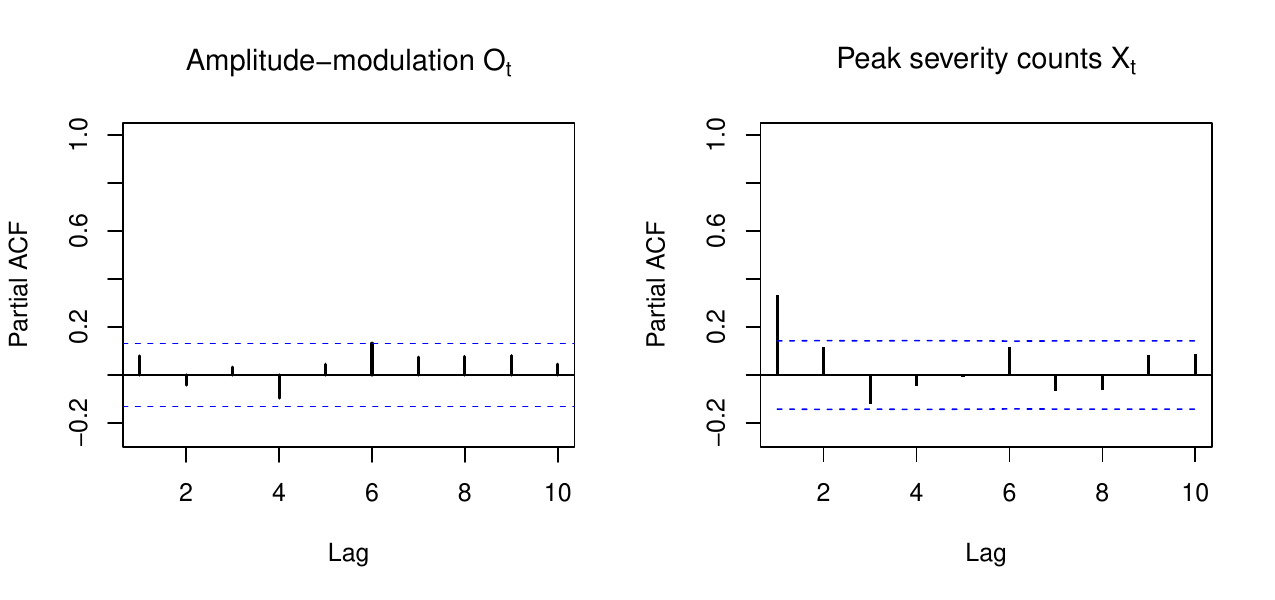} lag
	\caption{Sample ACF/PACF of $ O_t $ and $ X_t $ of the peak severity counts.}
	\label{Migraine_Ot_Xt}
\end{figure}
Next, we draw our attention to the peak severity counts $ X_t $ and calculate the sample PACF with consideration of missingness. Therefore, we first determine the sample ACF of $ X_t $ using the missing data approach of \cite{DunRob81} (for more details, see Appendix \ref{A_AppData}). Then, we calculate the sample PACF from the sample ACF by using the Durbin--Levinson algorithm, as plotted in the right part of Figure \ref{Migraine_Ot_Xt}. Here, we see a decrease to zero after lag $ 1 $, which indicates an AR(1)-like model. Since we are concerned with bounded counts, our null hypothesis is to assume a BAR(1) process with $ \hat{\mu}=0.6117 $ according to \eqref{Kareem} and $ \hat{\rho}=0.3325 $ according to \eqref{rhodunsmuir}. To test this hypothesis on level $ 5\% $, we look at the binomial index of dispersion and  skewness index from Sections \ref{Ch_BinInd}--\ref{Ch_SkewInd}. For the binomial index of dispersion, we get the critical upper value and the test statistic as
\begin{align*}
	1+\frac{\mathbb{B}_{\hat{I}^\bin}}{T}+ z_{0.975}\cdot\sqrt{\frac{\sigma_{\hat{I}^\bin}^2}{T}}=1.1685\quad<\quad\hat{I}^{\bin}=1.3451. 
\end{align*}
Here, $ z_\alpha $ denotes the $ \alpha $-quantile of the standard normal distribution, $ \norm(0,1) $. Obviously, the critical value is violated, so we are not concerned with a BAR(1) process, but are faced with extra-binomial variation. For the skewness index, in the case of a BAR(1) process, we compute the critical value and test statistic as
\begin{align*}
	1-\frac{2}{n}+\frac{\mathbb{B}_{\hat{I}^\bin_\skw}}{T}\pm z_{0.975}\cdot\sqrt{\frac{\sigma_{\hat{I}^\bin_\skw}^2}{T}},\quad\quad\hat{I}^{\skw}=0.3422.
\end{align*}
For the upper and lower critical value we get a $ 0.7337 $ and $ -0.1235 $, respectively. By contrast to the binomial index, the theoretical H$ _0 $-value for the skewness index is equal to $ 1/3 $. The critical values are not violated, so the skewness of the peak severity counts is in agreement with a binomial null. Altogether, the BAR(1) model is not appropriate, but a modification of it with extra-binomial variation would be needed.
\begin{remark}\label{Bem_noMiss}
	Just as an experiment, let us analyze the data again by simply ignoring the missing data. This corresponds to a shorter data length $ T= 206 $ and full observation $ \hat{\uptau}=1 $. Additionally, we can directly determine the (P)ACF by using the R-command \texttt{pacf} and \texttt{acf} together with the option \texttt{na.pass}. Then, our null hypothesis sets a BAR(1) process with $ \hat{\mu}=0.6117$  and $ \hat{\rho}=0.3605 $, the latter being slightly larger than if correctly considering the missing data. For testing this hypothesis, we again look at the binomial index of dispersion, for which we get the upper critical value and the test statistic as
	\begin{align*}
		1+\frac{\mathbb{B}_{\hat{I}^\bin}}{T}+ z_{0.975}\cdot\sqrt{\frac{\sigma_{\hat{I}^\bin}^2}{T}}=1.1728\quad<\quad\hat{I}^{\bin}=1.3451,
	\end{align*}
	Thus, although the critical value is affected by the ignorance of missing data, we again reject the null. For the skewness index, we compute the critical values and test statistic as
	\begin{align*}
		1-\frac{2}{n}+\frac{\mathbb{B}_{\hat{I}^\bin_\skw}}{T}\pm z_{0.975}\cdot\sqrt{\frac{\sigma_{\hat{I}^\bin_\skw}^2}{T}},\quad\quad\hat{I}^{\skw}=0.3422. 
	\end{align*}
	For the upper and lower critical value we get a $ 0.7391 $ and $ -0.1331 $, respectively. Although we end up with the same decisions, we see an effect on the critical values. The effect is not particularly large, because we are only confronted with a low percentage of missing data $ (8.4\%) $. For larger percentage, recall our results from Section \ref{Ch_SimStudy}, a more severe effect would be expected. Therefore, our approach is crucial to avoid incorrect decision making.
\end{remark}
%
\subsection{Cloud coverage counts}
%
\begin{figure}[h]
	\center\small
	\includegraphics[scale=0.55,viewport=1 40 600 240, clip]{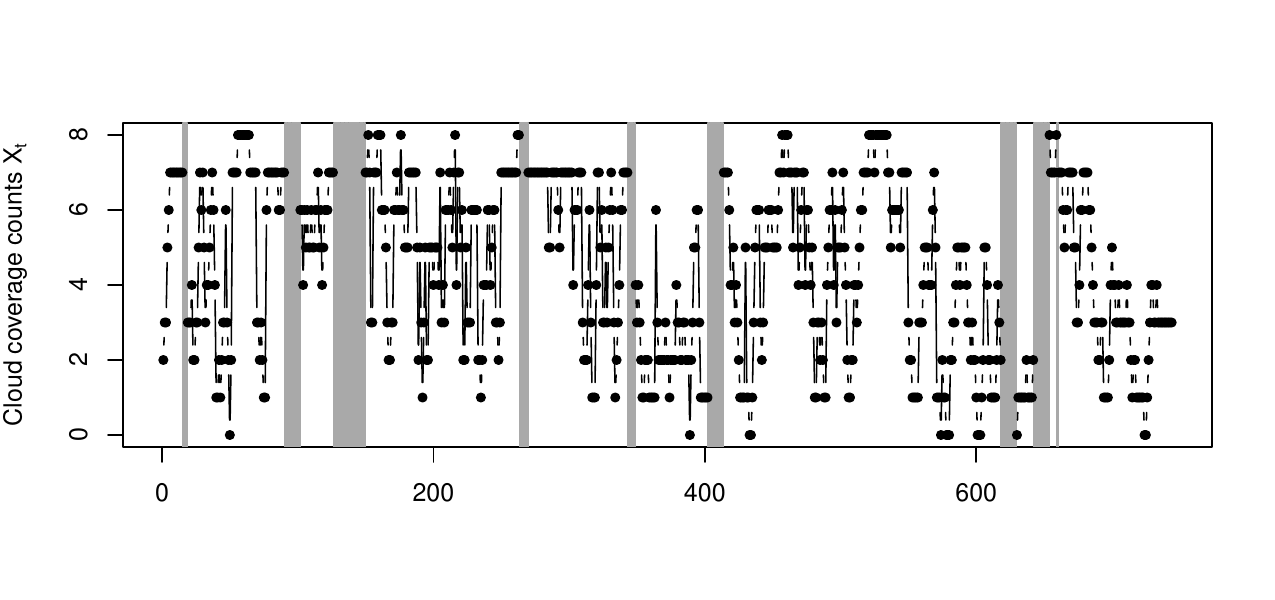}t
	\caption{Plot of hourly cloud coverage counts in Schleswig in August 2016. The vertical lines indicate missing data.}
	\label{Clouddata}
\end{figure}
\noindent``Okta'', or the number of eighths of the sky that are covered by clouds, is the unit of measurement for cloud coverage. Such Okta counts $ X_t $ have a bounded range $ \{0,\ldots,8\} $. The cloud coverage counts are collected hourly by the ``DWD Climate Data Center'' of Deutscher Wetterdienst (German Weather Service). They may contain missing data because of, \eg environmental factors such heavy precipitation, fog, or light pollution, as well as measurement device malfunctions, see \citet[p. 4676.]{Weiss21}. As an illustrative example, we study the weather station Glücksburg-Meierwik situated in Northern Germany in August 2016. Figure \ref{Clouddata} shows the corresponding hourly time series of length $ T=744 $ with $ (11\ \%) $ missing data represented by the gray vertical lines. 

For the cloud coverage counts the data set is bounded with an upper bound of $ n=8 $, we again consider the asymptotics for bounded counts derived in Sections \ref{Ch_BinInd}--\ref{Ch_SkewInd}. 

In this context, we examine the pattern of the amplitude-modulation $ (O_t) $ in greater detail. In the left part of Figure \ref{Cloud_Ot_Xt}, the sample ACF/PACF of the $ O_t $ is presented. We have a significant lag-$ 1 $ value this time, but a rapid decrease to zero after lag $ 1 $, so we can justify to treat $ O_t $ as a binary Markov chain. We estimate the dependence parameter as $ \hat{r}=0.8765 $ (PACF of lag $1$), and as we have $ 11\% $ missing values, we get $ \hat{\uptau}=0.8898 $.
\begin{figure}[h]
	\center\small
	\includegraphics[scale=0.57, viewport=1 40 700 300, clip]{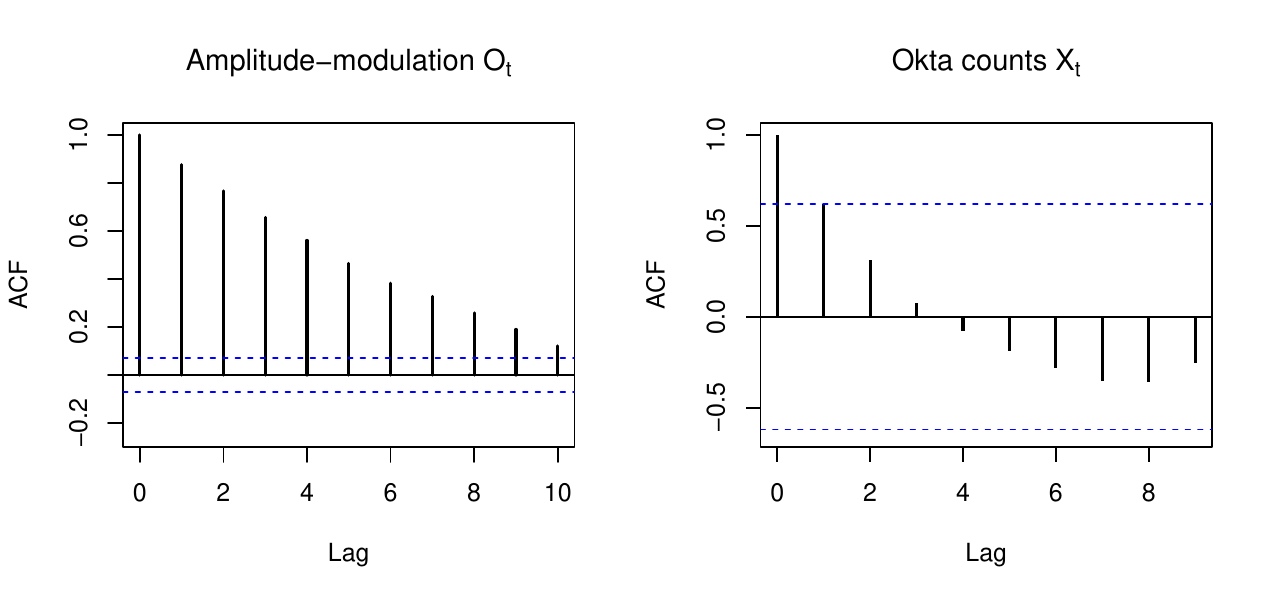}
	\includegraphics[scale=0.57, viewport=1 40 700 300, clip]{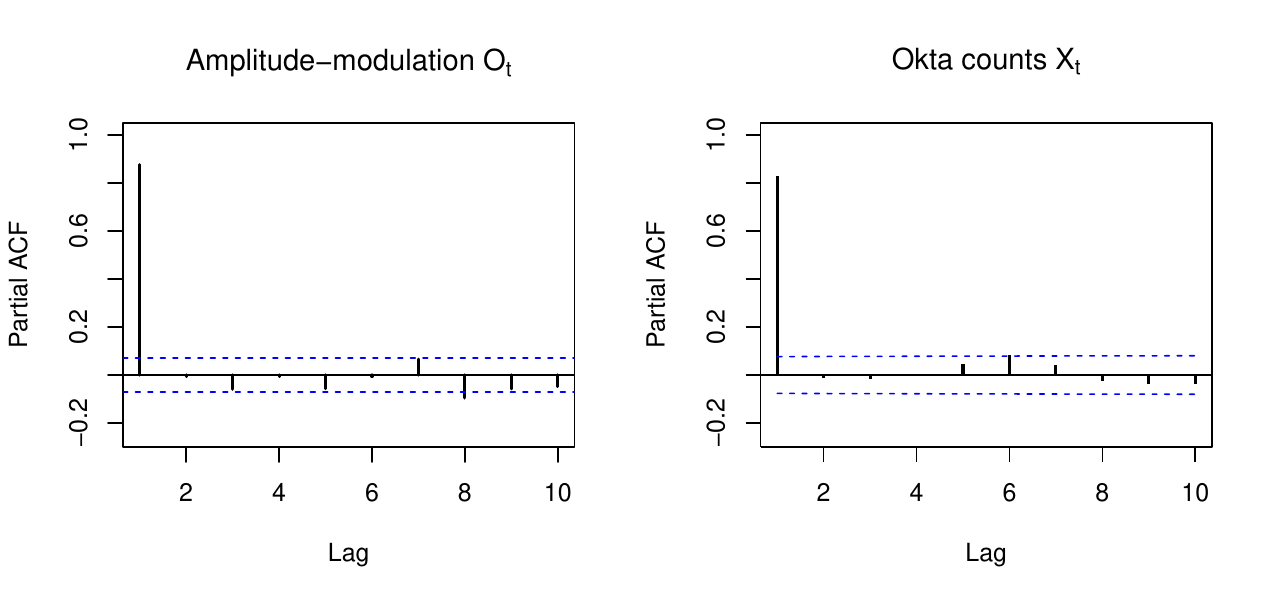} lag
	\caption{Sample ACF/PACF of $ O_t $ and $ X_t $ of the Okta counts.}
	\label{Cloud_Ot_Xt}
\end{figure}
Next, let us analyze the Okta counts $ X_t $. We compute the sample PACF under missingness in the same way as in Section \ref{Ch_PeakCounts}, leading to Figure \ref{Cloud_Ot_Xt}. Here, again, we see a rapid decrease to zero after lag $ 1 $, which indicates an AR(1)-like model. Since we are concerned with bounded counts, our null hypothesis is to assume a BAR(1) process with $ \hat{\mu}=4.4804 $ again according to \eqref{Kareem} and $ \hat{\rho}=0.8285 $ according to \eqref{rhodunsmuir}. To test this hypothesis on the $ 5\% $ level, we look at the binomial index of dispersion and the skewness index from Sections \ref{Ch_BinInd}--\ref{Ch_SkewInd}. For the binomial index of dispersion, the upper critical value is exceeded:
\begin{align*}
	1+\frac{\mathbb{B}_{\hat{I}^\bin}}{T}+ z_{0.975}\cdot\sqrt{\frac{\sigma_{\hat{I}^\bin}^2}{T}}=1.2169\quad<\quad\hat{I}^{\bin}=2.6908. 
\end{align*}
As the critical value is violated, we are  not concerned with a BAR(1) process. For the skewness index, the upper critical value and test statistic are computed as
\begin{align*}
	1-\frac{2}{n}+\frac{\mathbb{B}_{\hat{I}^\bin_\skw}}{T}+ z_{0.975}\cdot\sqrt{\frac{\sigma_{\hat{I}^\bin_\skw}^2}{T}}=0.7875\quad<\quad\hat{I}^{\skw}=0.9788.
\end{align*}
By contrast to the binomial index, the theoretical H$ _0 $-value for the skewness index is equal to $ 0.75 $. As the critical value is violated, we conclued again that we are not concerned with a BAR(1) process. Therefore, another AR(1)-type model with extra dispersion and skewness, like the beta-binomial AR(1) model, would be required for further modeling these data.
\begin{remark}
	Based on our analysis, we have concluded that the beta-binomial AR(1) model may be better suited for modeling cloud coverage counts compared to the BAR(1) model. To assess the performance of both models, we conducted model fitting followed by optimization of the log-likelihood function to obtain maximum likelihood estimates for the parameters of each model. For the BAR(1) model \eqref{BAR1}, we obtained the parameter estimates of $\hat{\pi}_\text{ML}= 0.566$ and $\hat{\rho}_\text{ML}=0.740 $. For the beta-binomial AR(1) model, the corresponding estimates were $\hat{\pi}_\text{ML}^\text{bbin}= 0.570$, $\hat{\rho}_\text{ML}^\text{bbin}=0.744 $, and  the dispersion parameter estimate $\hat{\phi}_\text{ML}^\text{bbin}= 0.209 $ whereas $\phi\to 0$ corresponds to the BAR(1) model.
	
	Subsequently, we computed the Akaike Information Criterion, resulting in a value of $\approx2165.51$ for the BAR(1) model and $\approx1974.67$ for the beta-binomial AR(1) model. This indicates that the beta-binomial AR(1) model provides a better fit for the cloud coverage counts. Further confirmation of this conclusion was obtained by comparing the binomial index of dispersion and skewness index between the fitted models. For the BAR(1) model, we get $I^\bin=1$ and $I_\skw^\bin=0.75$, while for the beta-binomial AR(1) model, we get $I^\bin=1.819$ and $I_\skw^\bin=0.868$. In this context, the indices obtained for the beta-binomial AR(1) model are closer to the  previously determined values of $\hat{I}^{\bin}=2.6908$ and $\hat{I}^{\skw}=0.9788$ for the cloud coverage data. These findings further support the conclusion that the beta-binomial AR(1) model provides a more appropriate representation for the cloud coverage data.
\end{remark}
%
\subsection{Compensation data}
%
\begin{figure}[h]
	\center\small
	\includegraphics[scale=0.55,viewport=1 40 600 240, clip]{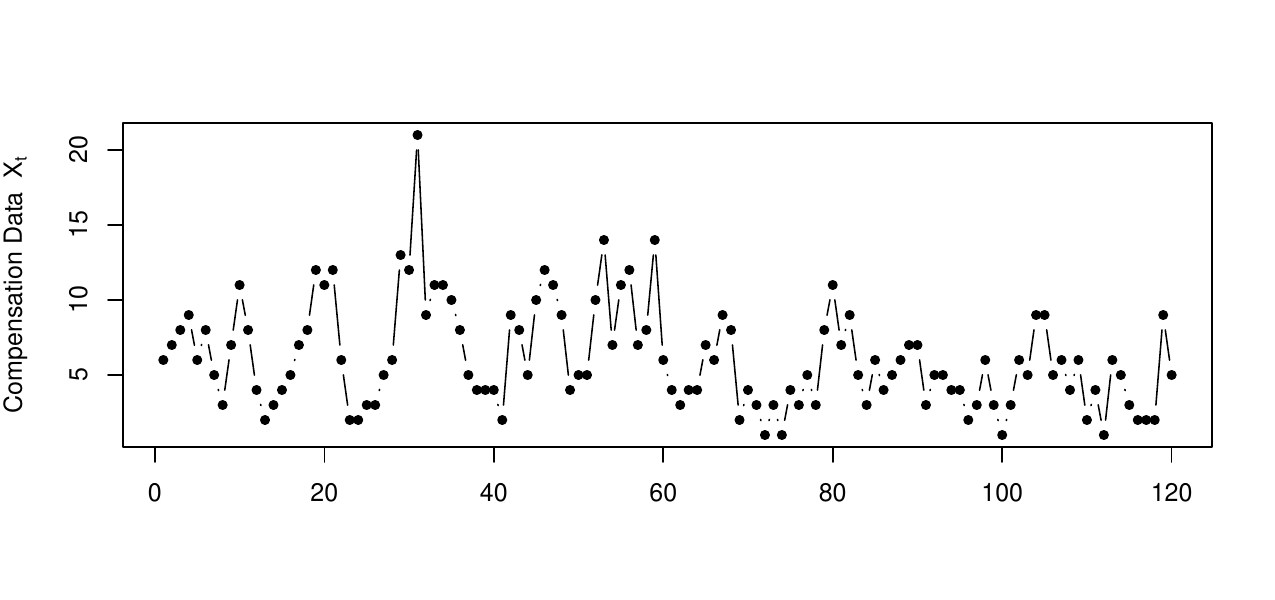}t
	\caption{Plot of monthly counts of workers collecting disability benefits from January 1985 to January 1994.}
	\label{Compdata}
\end{figure}
\noindent The dataset under consideration is from \cite{Free98}. The data originates from the logging industry, involving claimants who receive short-term disability benefits from the Workers' Compensation Board (WCB) of British Columbia. The study cohort exclusively consists of male claimants aged between 34 and 54, employed in the logging industry, having experienced injuries such as cuts, lacerations, and punctures. These claimants reported their claims to the Richmond service delivery location within the time frame spanning January 1985 to January 1994 \citep[p.157]{Free98}. The compensation counts $ X_t $ have the unbounded range  $ \{0,1,2\ldots\} $ this time, so now, the statistics derived in Sections \ref{Ch_PInd} and \ref{Ch_SkewInd} are relevant for model diagnostics. The monthly time series of length $ T=120 $ is illustrated in Figure \ref{Compdata}. \cite{Free98} concludes that a Poi-INAR(1) model is a plausible choice for modeling this dataset. Thus, our null hypothesis is to assume a Poi-INAR(1) process. Since the original data does not include missing observations, we know the true values of the relevant diagnostic statistics. Therefore, the data can be used to investigate the effect of different levels of missingness, namely by randomly excluding 15\%, 30\%, 45\% and 60\% of the observations. This leads, among others, to the modified time series presented in Figure \ref{CompMissData}.
\begin{figure}[h]
	\center\small
	\includegraphics[scale=0.55,viewport=1 40 600 240, clip]{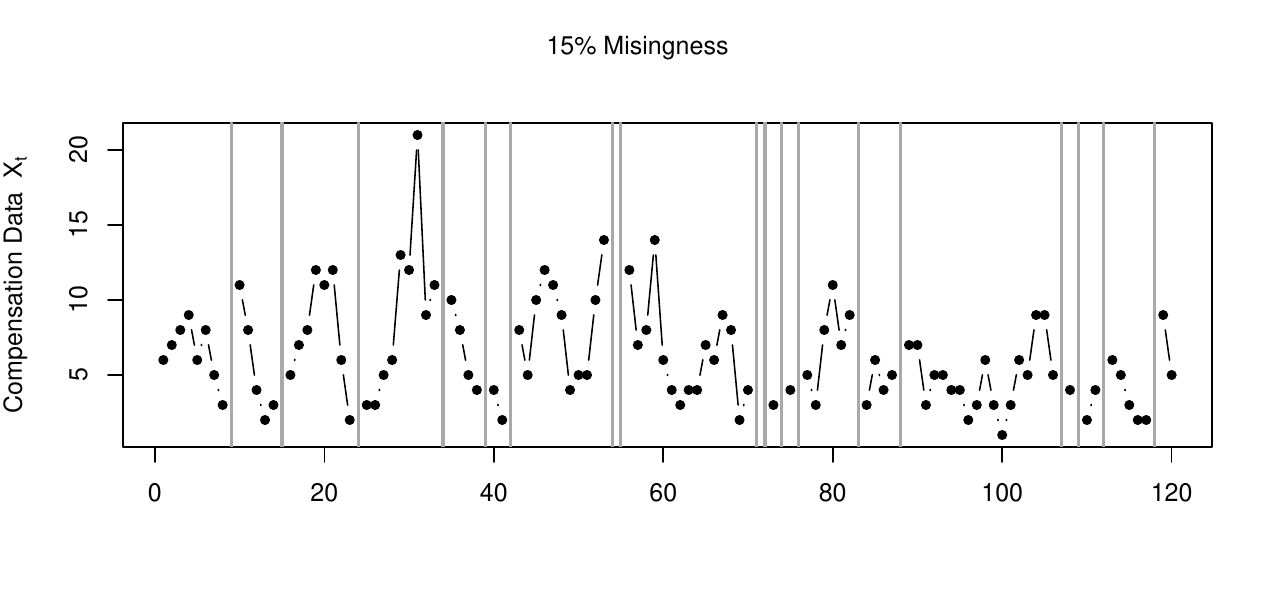}t
	\caption{Plot of monthly counts of workers collecting disability benefits from January 1985 to January 1994. The vertical lines indicate missing data (15\%).}
	\label{CompMissData}
\end{figure}
To test the aforementioned null hypothesis at a significance level of $ 5\%$, we look at  the Poisson index of dispersion and skewness index discussed in Sections \ref{Ch_PInd} and \ref{Ch_SkewInd}. For the Poisson index of dispersion and skewness index, respectively, we get the critical values and the test statistic as follows: 
\begin{align*}
	&1+\frac{\mathbb{B}_{\hat{I}^\poi}}{T}\pm z_{0.975}\cdot\sqrt{\frac{\sigma_{\hat{I}^\poi}^2}{T}},\qquad\hat{I}^{\poi}=\frac{\hat{\mu}_{(2)}}{\hat{\mu}}-\hat{\mu}+1,\\
	&1+\frac{\mathbb{B}_{\hat{I}^\poi_\skw}}{T}\pm z_{0.975}\cdot\sqrt{\frac{\sigma_{\hat{I}^\poi_\skw}^2}{T}},\qquad\hat{I}^{\skw}=\frac{\hat{\mu}_{(3)}}{\hat{\mu}\hat{\mu}_{(2)}}.
\end{align*}
In Table \ref{TabComp}, the estimates for the Poisson index of dispersion and the skewness index, along with their critical values for various levels of missingness are presented. Let us first look at the estimated mean and autocorrelation. As missingness levels increase, the autocorrelation diminishes, while the estimated mean remains relatively close to the true mean, even when 60\% of the observations are missing. Upon considering the Poisson index of dispersion and its associated critical values, we observe that the upper critical value is violated in every case of missingness. Thus we conclude that the Poi-INAR(1) process is not applicable. Interestingly, our decision remains consistent even under the highest missingness scenario. Particularly, at $\uptau=0.4$, we observe a significant deviation in the Poisson index of dispersion $(1.62)$ from the null value. Consistent findings are observed for the skewness index. Consequently, an alternative AR(1)-type model that incorporates extra dispersion and skewness is required. In conclusion, our asymptotic results hold up well in the presence of missing data.
\begin{table}[h]
	\centering
	\caption{Estimates of the compensation data for different levels of missingness including full data ($\uptau=1$).}
	\label{TabComp}
	\smallskip
	\scalebox{1}{
		\begin{tabular}{lrr|rrr|rrrrrl}
			\hline
			&&&&\multicolumn{2}{c}{critical values}&&\multicolumn{2}{c}{critical values}\\
			$\uptau$&$ \hat{\mu} $&$ \hat{\rho} $&$\hat{I}^{\poi}$& lower&upper&$\hat{I}^{\skw}$&lower&upper\\
			\hline
			1     & 6.133& 0.558& 1.907&  0.621& 1.320& 1.328& 0.870& 1.108\\ 
			0.85& 6.343& 0.462& 1.853&  0.644& 1.308& 1.314& 0.881& 1.100\\
			0.70& 6.476& 0.302& 1.836&  0.658& 1.304& 1.306& 0.888& 1.098\\ 
			0.55& 6.591& 0.269& 1.977&  0.623& 1.334& 1.333& 0.879& 1.106\\
			0.40& 6.396& 0.295& 1.620&  0.557& 1.387& 1.164& 0.852& 1.126\\
			\hline
	\end{tabular}}
\end{table}
%
\section{Conclusions}\label{Conclusions}
%
In this article, we considered unbounded and bounded count time series with missing data. We were able to derive general formulas for the asymptotics of relevant statistics, namely types of dispersion and skewness index. We derived closed-form expressions for the special cases of Poisson and binomial autoregressive processes. Thus, the considered indices can be used to test the null of a Poisson or binomial distribution, respectively, in the presence of missing data. The developed approaches were applied to two real-world data examples, and simulations were used to analyze the finite-sample performance of the resulting asymptotic approximations. It became evident through our simulation study that our asymptotic derivations are well-suited for approximating the true finite-sample distributions.  Therefore, simply omitting the missing observations may lead to misleading and false conclusions. As a result, when examining count time series, missing observations should be carefully taken into account. 

A task for further research is to investigate the derivation of our asymptotics under different types of missingness. Additionally, expanding the scope of the derived asymptotics to higher-order autoregressive processes such as the Poi-INAR(2) process, or to a different class of processes, such as the integer-valued moving average (INMA) process could be another research direction. The parameter estimation in the presence of missing observations, also recall the existing approach by \cite{AndKarl10}, might be another field of research, which could be investigated in the future.


	\clearpage
	
	\appendix
	\small
	\numberwithin{theorem}{section}
	
%
\section{Derivations, Proofs \& auxiliary results}\label{A_AuxRes}
%
%
\subsection{Central Limit Theorem}\label{A_CLT}
%
Let us start by deriving Theorem \ref{CLT_Fac}. For $ (\X_t^\ast) $, we have the CLT
\begin{equation}\label{Rodman}
	\sqrt{T}\Big(\overline{\X^\ast}-\bmu^\ast\Big) \ \xrightarrow{\text{d}}\ \norm\Big(\bold{0}, \bold{\Sigma}^\ast\Big)\quad \text{with}\quad \bold{\Sigma}^\ast=(\sigma_{ij}^\ast)_{i,j=0,\ldots,m},
\end{equation}
where
\begin{equation}
	\label{Stockton}
	\sigma_{ij}^\ast=CoV[X_{0,i}^\ast,X_{0,j}^\ast]+\sum_{h=1}^\infty \Big(CoV[X_{0,i}^\ast,X_{h,j}^\ast]+CoV[X_{h,i}^\ast,X_{0,j}^\ast]\Big).
\end{equation}
These covariances compute as 
\begin{equation}
	\label{Jazz}
	\sigma_{ij}^\ast=
	\begin{cases}
		\uptau(1-\uptau)+2\sum_{h=1}^\infty \gamma_O(h) & \text{if }i=j=0,\\
		\sigma_{00}^\ast\mu_{(j)}         &  \text{if }i=0,\ j>0,\\
		\uptau(\mu_{(i,j)}(0)-\mu_{(i)}\mu_{(j)})+\sigma_{00}^\ast\mu_{(i)}\mu_{(j)}& \text{if }i,j>0.\\
		\quad+\sum_{h=1}^\infty \uptau(h)\Big(\mu_{(j,i)}(h)+\mu_{(i,j)}(h)-2\mu_{(i)}\mu_{(j)}\Big) \\
	\end{cases}
\end{equation}
Here, $ \overline{\X^\ast}=(\overline{O},\overline{O\,\X}^\top)^\top $ are the required components for the calculation of $ \hat{\bmu} $.
\begin{proof} The covariances in \eqref{Stockton} satisfy  
	\[
	CoV[X_{t,i}^\ast,X_{s,j}^\ast]=
	\begin{cases}
		CoV[O_t,O_s]=\gamma_O(\vert t-s \vert), & \text{if} \ i=j=0,\\
		CoV[O_t,O_s(X_s)_{(j)}], & \text{if} \ i=0, \ j> 0,\\
		CoV[O_t(X_t)_{(i)},O_s(X_s)_{(j)}], & \text{if} \ i,j>0.
	\end{cases}
	\]
	First, let us acknowledge that for $ i=j=0 $, we can conclude $ \sigma_{00} $. Thus, we get 
	\begin{align*}
		\sigma_{00}^\ast&=CoV[X_{0,0}^\ast,X_{0,0}^\ast]+2\sum_{h=1}^\infty CoV[X_{h,0}^\ast,X_{0,0}^\ast]\\
		&=\gamma_O(0)+2\sum_{h=1}^\infty \gamma_O(h)\\
		&=\uptau(1-\uptau)+2\sum_{h=1}^\infty \gamma_O(h).
	\end{align*}
	Since $ (O_t) $ is independent of $ (X_t) $, we have
	\begin{align*}
		CoV[O_t,O_s(X_s)_{(j)}]&=\e[O_tO_s(X_s)_{(j)}]-\e[O_t]\e[O_s(X_s)_{(j)}]\\
		&= \e[O_tO_s]\e[(X_s)_{(j)}]-\e[O_t]\e[O_s]\e[(X_s)_{(j)}]\\
		&= CoV[O_t,O_s]\e[(X_s)_{(j)}]  \\
		&= \gamma_O(\vert t-s \vert)\mu_{(j)}.
	\end{align*}
	This provides $ \sigma_{j0}^\ast=\sigma_{0j}^\ast=\sigma_{00}^\ast\cdot\mu_{(j)} $ for $ j>0 $. Next, we obtain
	\begin{align*}
		CoV[O_t(X_t)_{(i)},O_s(X_s)_{(j)}]&=\e[O_t(X_t)_{(i)}O_s(X_s)_{(j)}]-\e[O_t(X_t)_{(i)}]\e[O_s(X_s)_{(j)}]\\
		&= \e[O_tO_s]\e[(X_t)_{(i)}(X_s)_{(j)}]-\e[O_t]\e[O_s]\e[(X_t)_{(i)}]\e[(X_s)_{(j)}]\\
		&= \uptau(\vert t-s \vert)\mu_{(i,j)}(t-s)-\uptau^2\mu_{(i)}\mu_{(j)},
	\end{align*}
	for $ i,j>0 $. Note that this expression can also be expressed as
	\begin{align*}
		CoV[O_t(X_t)_{(i)},O_s(X_s)_{(j)}]&=
		\uptau(\vert t-s\vert)CoV[(X_t)_{(i)},(X_s)_{(j)}]+ \gamma_O(\vert t-s \vert)\mu_{(i)}\mu_{(j)}.
	\end{align*}
	Finally, we get 
	\begin{align*}
		\sigma_{ij}^\ast&=CoV[O_0(X_0)_{(i)},O_0(X_0)_{(j)}]+\sum_{h=1}^\infty\Big(CoV[O_0(X_0)_{(i)},O_h(X_h)_{(j)}]
		\\[1ex]&\quad+CoV[O_h(X_h)_{(i)},O_0(X_0)_{(j)}]\Big)\\
		&=\uptau \mu_{(i,j)}(0)-\uptau^2\mu_{(i)}\mu_{(j)}+\sum_{h=1}^\infty\Big(
		\uptau(h)\mu_{(j,i)}(h) + \uptau(h)\mu_{(i,j)}(h)-2\uptau^2\mu_{(i)}\mu_{(j)}	\Big).
	\end{align*}
	Using that $\uptau(h)=\uptau^2+\gamma_O(h) $, we get
	\begin{align*}
		\sigma_{ij}^\ast&=\uptau \mu_{(i,j)}(0)-(\uptau-\gamma_O(0))\mu_{(i)}\mu_{(j)}+2\mu_{(i)}\mu_{(j)}\sum_{h=1}^\infty\gamma_O(h)
		\\[1ex]&\quad+\sum_{h=1}^\infty \uptau(h)\Big(\mu_{(j,i)}(h)+\mu_{(i,j)}(h)-2\mu_{(i)}\mu_{(j)}\ \Big)\\
		&=\uptau(\mu_{(i,j)}(0)-\mu_{(i)}\mu_{(j)})+\sigma_{00}^\ast\mu_{(i)}\mu_{(j)}+\sum_{h=1}^\infty \uptau(h)\Big(\mu_{(j,i)}(h)+\mu_{(i,j)}(h)-2\mu_{(i)}\mu_{(j)}\Big),
	\end{align*}
	where we utilized the calculations of $ \sigma_{00}^\ast $ in the last step. 
\end{proof}
Now, let us define the function $ \f:[0,1]\times[0,\infty)^m\rightarrow[0,\infty)^m $ by $ f_j (x_0,x_1,\ldots,x_m)=x_j/x_0$ for $ j=1,\ldots,m $. Then, $ \hat{\bmu}=\f(\overline{\X^\ast}) $, $ \bmu=\f(\bmu^\ast) $, and the gradient of $ f_j(\x) $ is 
\begin{align*}
	\nabla f_j(\x) =(-\tfrac{x_j}{x_0^2},0,\ldots,0,\tfrac{1}{x_0},0,\ldots,0),\quad \text{with }\tfrac{1}{x_0} \text{ at position }j.
\end{align*}
Hence, the Jacobian of $ \f $ evaluated in $ \bmu^\ast $, equals
\begin{align*}
	\D= \frac{1}{\uptau}
	\begin{pmatrix} 
		-\mu & 1 & 0 & \cdots & 0\\
		-\mu_{(2)} & 0 & 1 & \ddots  & \vdots\\
		\vdots & \vdots & \ddots & \ddots & 0\\
		-\mu_{(m)} & 0 & \cdots & 0  & 1\\
	\end{pmatrix}.
\end{align*}
So the linear Taylor approximation $ \hat{\bmu}\approx \bmu + \D(\overline{\X^\ast}-\bmu^\ast)$ together with the CLT implies 
\begin{align*}
	\sqrt{T}\Big(\hat{\bmu}-\bmu\Big) \ \xrightarrow{\text{d}}\ \norm\Big(\bold{0}, \bold{\Sigma}\Big)\quad \text{with}\quad \bold{\Sigma}=\D\bold{\Sigma}^\ast\D^\top.
\end{align*}
We can compute the entries $ (\sigma_{ij})_{i,j=1,\ldots,m} $ as 
\begin{align*}
	\sigma_{ij}&=\sum_{k,l=0}^{m}d_{ik}d_{jl}\sigma_{kl}^\ast
	= d_{i0}d_{j0}\sigma_{00}^\ast + d_{i0}d_{jj}\sigma_{0j}^\ast + d_{ii}d_{j0}\sigma_{i0}^\ast + d_{ii}d_{jj}\sigma_{ij}^\ast\\
	&=\tfrac{1}{\uptau^2}\Big( \mu_{(i)}\mu_{(j)}\sigma_{00}^\ast - \mu_{(i)}\sigma_{0j}^\ast - \mu_{(j)}\sigma_{i0}^\ast + \sigma_{ij}^\ast\Big)=\tfrac{1}{\uptau^2}\Big(\sigma_{ij}^\ast- \mu_{(i)}\mu_{(j)}\sigma_{00}^\ast \Big)\\
	&=\tfrac{1}{\uptau}\big(\mu_{(i,j)}(0)-\mu_{(i)}\mu_{(j)}\big)+\tfrac{1}{\uptau^2}\sum_{h=1}^\infty \uptau(h)\Big(\mu_{(j,i)}(h)+\mu_{(i,j)}(h)-2\mu_{(i)}\mu_{(j)}\Big).
\end{align*}
In the last step, we utilized that $ \sigma_{0j}^\ast=\sigma_{00}^\ast\cdot \mu_{(j)} $.
To approximate the bias of $ \hat{\bmu} $, we use the quadratic Taylor approximation $ \hat{\mu}_j\approx \mu_j + \D^{(j)}(\overline{\X^\ast}-\bmu^\ast)+\tfrac{1}{2}(\overline{\X^\ast}-\bmu^\ast)^\top\Hes^{(j)}(\overline{\X^\ast}-\bmu^\ast)$, where $ \D^{(j)} $ denotes the $ j $th row of $ \D $, and where $ \Hes^{(j)} $ is the Hessian of $ f_j $ evaluated in $ \bmu^\ast $. For the non-zero second-order partial derivatives, one gets 
\begin{align}
	\frac{\partial^2}{\partial x_0^2}f_j=\frac{2x_j}{x_0^3},\qquad
	\frac{\partial^2}{\partial x_0\partial x_j}f_j=-\frac{1}{x_0^2},\qquad \frac{\partial^2}{\partial x_j^2}f_j=0.
\end{align}
Evaluating these derivatives in $ \bmu^\ast $ yields the non-zero entries 
\begin{align}
	h_{00}^{(j)}=\frac{2\mu_{(j)}}{\uptau^2},\qquad
	h_{0j}^{(j)}=-\frac{1}{\uptau^2},
\end{align}
of the Hessian $ \Hes^{(j)} $. This implies the approximation 
\begin{align}
	T\ \e[\hat{\mu}_j-\mu_j]\approx \tfrac{1}{2}h_{00}^{(j)}\sigma_{00}  + h_{0j}^{(j)}\sigma_{0j}= \tfrac{1}{\uptau^2}\big(\mu_j\sigma_{00}  - \sigma_{0j} \big)=0,
\end{align}
where we used that $ \sigma_{0j}=\sigma_{00}\cdot\mu_{(j)} $. As a result, the bias is negligible for practice.	
%
\subsection{Considered models}\label{A_Model}
%
We shall begin this section by presenting an essential lemma for factorial moments. Then, in Sections  \ref{Ch_PInd}--\ref{Ch_SkewInd}, we shall derive crucial results  essential to determine our considered statistics.
\begin{lemma}\label{Lebron}
	For the mixed factorial moments of lag zero, it holds that 
	
	\vspace{0.3cm}
	\begin{tabular}{llll}
		$(i)$& $ \mu_{(1)} = \mu$ & $(v)$ &$  \mu_{(1,3)}(0) = \mu_{(4)}+3\mu_{(3)},  $\\
		$(ii)$ &$ \mu_{(1,1)}(0) = \mu_{(2)}+\mu, $ & $(vi)$ &$  \mu_{(2,3)}(0) = \mu_{(5)}+6\mu_{(4)}+6\mu_{(3)}$,\\
		$(iii)$& $ \mu_{(1,2)}(0) = \mu_{(3)}+2\mu_{(2)}, $&$(vii)$ &$  \mu_{(3,3)}(0) = \mu_{(6)}+9\mu_{(5)}+18\mu_{(4)}+6\mu_{(3)}$.\\
		$(iv)$& $ \mu_{(2,2)}(0) = \mu_{(4)}+4\mu_{(3)}+2\mu_{(2)}, $
	\end{tabular} 
\end{lemma}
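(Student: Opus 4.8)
The plan is to reduce every entry to a single algebraic identity for the product of two falling factorials of the same argument, and then take expectations. Recalling from the Definition that $\mu_{(k,l)}(0)=\e[(X_t)_{(k)}\,(X_t)_{(l)}]$, the whole lemma follows once I express the product $x_{(k)}\,x_{(l)}$ as a linear combination of the falling factorials $x_{(j)}$. The identity I would establish is
\begin{align*}
	x_{(k)}\,x_{(l)}=\sum_{i=0}^{\min\{k,l\}}\binom{k}{i}\binom{l}{i}\,i!\;x_{(k+l-i)},
\end{align*}
which, after applying $\e[\cdot]$ and using linearity together with $\mu_{(j)}=\e[x_{(j)}]$, yields each stated formula. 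Item $(i)$ is just the definitional normalisation $\mu_{(1)}=\mu$.

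To prove the identity, I would argue combinatorially for nonnegative integer $x$ and then invoke that two polynomials in $x$ agreeing on $\bbn_0$ must be equal. For integer $x$, the quantity $x_{(m)}$ counts the injective maps from an $m$-element index set into an $x$-element set, so $x_{(k)}\,x_{(l)}$ counts pairs $(f,g)$ of such injections on index sets of sizes $k$ and $l$. Classifying these pairs by the set of indices on which their images coincide --- of cardinality $i$ --- gives $\binom{k}{i}$ choices on the first index set, $\binom{l}{i}$ on the second, $i!$ matchings between the coinciding indices, while the remaining $k+l-i$ images are distinct and contribute a factor $x_{(k+l-i)}$; summing over $i$ reproduces the coefficients $\binom{k}{i}\binom{l}{i}\,i!$. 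Alternatively, a short induction on $l$ using $x_{(l+1)}=(x-l)\,x_{(l)}$ together with $x\,x_{(m)}=x_{(m+1)}+m\,x_{(m)}$ works equally well. I note in passing that these are exactly the coefficients appearing in Proposition \ref{Kobe}, which is no accident.

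Finally, I would specialise to the six nontrivial index pairs and take expectations. For $(k,l)=(1,1)$ the sum runs over $i=0,1$ and gives $\mu_{(2)}+\mu$; for $(1,2)$ it gives $\mu_{(3)}+2\mu_{(2)}$; for $(2,2)$, $\mu_{(4)}+4\mu_{(3)}+2\mu_{(2)}$; for $(1,3)$, $\mu_{(4)}+3\mu_{(3)}$; for $(2,3)$, $\mu_{(5)}+6\mu_{(4)}+6\mu_{(3)}$; and for $(3,3)$, $\mu_{(6)}+9\mu_{(5)}+18\mu_{(4)}+6\mu_{(3)}$, matching $(ii)$--$(vii)$. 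The only genuine obstacle is establishing the product identity with the correct coefficients; once that is in hand, the remaining work is the routine bookkeeping of evaluating a short finite sum for each pair.
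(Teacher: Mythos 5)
Your proposal is correct, and each of the six specializations of your product identity reproduces the paper's formulas exactly (for instance, $(k,l)=(3,3)$ gives $\binom{3}{2}^2 2! = 18$ and $\binom{3}{3}^2 3! = 6$, matching $(vii)$). However, your route is genuinely different from the paper's. The paper proves $(ii)$--$(vii)$ by seven separate hand computations, repeatedly inserting terms of the form $X_t - j + j$ into the product of falling factorials and telescoping, e.g.\ $\mu_{(1,1)}(0)=\e[X_t(X_t-1+1)]=\mu_{(2)}+\mu$, with the $(2,2)$ and $(3,3)$ cases requiring the longest expansions. You instead establish the single closed-form linearization
\begin{align*}
	x_{(k)}\,x_{(l)}=\sum_{i=0}^{\min\{k,l\}}\binom{k}{i}\binom{l}{i}\,i!\;x_{(k+l-i)},
\end{align*}
proved either by counting pairs of injections stratified by the overlap of their images (with the polynomial-agreement step to pass from $x\in\bbn_0$ to all $x$) or by induction via $x\,x_{(m)}=x_{(m+1)}+m\,x_{(m)}$, and then take expectations. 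What your approach buys is generality and insight: it covers all index pairs $(k,l)$ at once, eliminates the error-prone bookkeeping of the paper's longer cases, and explains \emph{why} the coefficients $\binom{k}{i}\binom{l}{i}\,i!$ are the same ones appearing in Proposition \ref{Kobe} --- the lag-zero mixed moment is the perfectly correlated limit of the bivariate Poisson formula. What the paper's approach buys is self-containedness at the level actually needed: no combinatorial lemma, no polynomial identity argument, just elementary algebra for the seven cases the asymptotic derivations consume. Both proofs are complete; yours would arguably be the better one to record if the lemma were ever to be extended beyond third-order factorials.
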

\begin{proof}
	Statement $ (i) $ is trivial. 
	\begin{align*}
		(ii)\quad\mu_{(1,1)}(0)&=\e[X_t^2]=\e[X_t(X_t-1+1)]=\e[X_t(X_t-1)+X_t]\\
		&=\e[X_t(X_t-1)]+\e[X_t]=\mu_{(2)}+\mu.\\[0.2cm]	
		(iii)\quad\mu_{(1,2)}(0)&=\e[(X_t)(X_t)_{(2)}]=\e[X_t^2(X_t-1)]=\e[X_t(X_t-1)(X_t-2+2)]\\
		&=\e[X_t(X_t-1)(X_t-2)+2X_t(X_t-1)]\\
		&=\e[X_t(X_t-1)(X_t-2)]+2\e[X_t(X_t-1)]=\mu_{(3)}+2\mu_{(2)}.\\[0.2cm]	
		(iv)\quad\mu_{(2,2)}(0)&=\e[(X_t)_{(2)}(X_t)_{(2)}]=\e[X_t^2(X_t-1)^2]\\
		&=\e[X_t(X_t-1)(X_t-2+2)(X_t-3+2)]\\
		&=\e[X_t(X_t-1)(X_t-2)(X_t-3)+2X_t(X_t-1)(X_t-2-1)\\
		&\quad +2X_t(X_t-1)(X_t-2)+4X_t(X_t-1)]\\
		&=\e[X_t(X_t-1)(X_t-2)(X_t-3)]+4\e[X_t(X_t-1)(X_t-2)]\\
		&\quad+2\e[X_t(X_t-1)]\\
		&=\mu_{(4)}+4\mu_{(3)}+2\mu_{(2)}\\
		(v)\quad\mu_{(1,3)}(0)&=\e[X_t(X_t)_{(3)}]=\e[X_t(X_t-1)(X_t-2)(X_t-3+3)]\\
		&=\e[X_t(X_t-1)(X_t-2)(X_t-3)]+3\e[X_t(X_t-1)(X_t-2)]\\
		&=\mu_{(4)}+3\mu_{(3)}.\\[0.2cm]
		(vi)\quad\mu_{(2,3)}(0)&=\e[(X_t)_{(2)}(X_t)_{(3)}]=\e[X_t(X_t-1)(X_t-2)(X_t-3+3)(X_t-4+3)]\\
		&=\e[X_t(X_t-1)(X_t-2)(X_t-3)(X_t-4)]+6\e[X_t(X_t-1)(X_t-2)]\\
		&\quad+6\e[X_t(X_t-1)(X_t-2)(X_t-3)]\\
		&=\mu_{(5)}+6\mu_{(4)}+6\mu_{(3)}.\\[0.2cm]		  
		(vii)\quad\mu_{(3,3)}(0)&=\e[(X_t)_{(3)}(X_t)_{(3)}]\\
		&=\e[X_t(X_t-1)(X_t-2)(X_t-3+3)(X_t-4+3)(X_t-5+3)]\\
		&=\e[X_t(X_t-1)(X_t-2)(X_t-3)(X_t-4)(X_t-5)]\\
		&\quad+9\e[X_t(X_t-1)(X_t-2)(X_t-3)(X_t-4)]\\
		&\quad+18\e[X_t(X_t-1)(X_t-2)(X_t-3)]+6\e[X_t(X_t-1)(X_t-2)]\\
		&=\mu_{(6)}+9\mu_{(5)}+18\mu_{(4)}+6\mu_{(3)}.
	\end{align*}
\end{proof}
%
\subsubsection{Poi-INAR(1) model}\label{A_PoiModel}
%
The Poi-INAR(1) model shall be used for the Poisson index of dispersion as well as for the skewness index, if  one is concerned with unbounded counts. Thus, let us consider the following lemmata. 
\begin{lemma}\label{AnsuFati}
	By Proposition \ref{Kobe}, we have
	
	\vspace{0.3cm} 
	\begin{tabular}{llll}
		$(i)$ &$ \mu_{(1,1)}(h) = \mu^2+\mu\rho^h, $ &$(iv)$ &$  \mu_{(1,3)}(h) = \mu^4+3\mu^3\rho^h,  $\\
		$(ii)$& $ \mu_{(1,2)}(h) = \mu^3+2\mu^2\rho^h, $&$(v)$ &$  \mu_{(2,3)}(h) = \mu^5+6\mu^4\rho^h+6\mu^3\rho^{2h}$,\\
		$(iii)$& $ \mu_{(2,2)}(h) = \mu^4+4\mu^3\rho^h+2\mu^2\rho^{2h}, $&$(vi)$ &$  \mu_{(3,3)}(h) = \mu^6+9\mu^5\rho^h+18\mu^4\rho^{2h}+6\mu^3\rho^{3h}$.
	\end{tabular} 
\end{lemma}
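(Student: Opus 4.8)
The plan is to obtain all six identities as direct specializations of Proposition \ref{Kobe} to the Poisson marginal. Since the Poi-INAR(1) process has $X_t \sim \poi(\mu)$, the factorial moments collapse to $\mu_{(k)} = \mu^k$, as recorded in Section \ref{Ch_models}. Substituting this into the closed-form expression of Proposition \ref{Kobe} gives
\[
	\mu_{(k,s)}(h) = \mu^{k+s}\sum_{i=0}^{\min\{k,s\}}\binom{k}{i}\binom{s}{i}i!\,\Big(\tfrac{\rho^h}{\mu}\Big)^{i},
\]
so that everything reduces to evaluating a finite sum with at most $\min\{k,s\}+1$ terms.

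First I would exploit the symmetry $\mu_{(k,s)}(h) = \mu_{(s,k)}(h)$ noted after Proposition \ref{Hulk}, which lets me treat only the cases with $k \le s$; this covers all six listed identities. Then, for each pair $(k,s)$, I would expand the sum term by term: the $i=0$ contribution is always $\mu^{k+s}$, and each subsequent $i$ contributes $\binom{k}{i}\binom{s}{i}i!\,\mu^{k+s-i}\rho^{ih}$. For example, for $(k,s)=(2,2)$ the coefficients are $\binom{2}{1}^2 1! = 4$ at order $\rho^h$ and $\binom{2}{2}^2 2! = 2$ at order $\rho^{2h}$, yielding $\mu^4 + 4\mu^3\rho^h + 2\mu^2\rho^{2h}$, which is identity $(iii)$. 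The remaining cases follow the same pattern, with the highest order of $\rho^h$ in each entry equal to $\min\{k,s\}$.

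There is no genuine conceptual obstacle here: the result is a mechanical substitution followed by bookkeeping of binomial coefficients and factorials. The only point requiring care is keeping track of the powers of $\mu$ and $\rho^h$ when distributing the prefactor $\mu^{k+s}$ across the sum, and correctly evaluating the integer weights $\binom{k}{i}\binom{s}{i}i!$ for each $i$. As a sanity check I would confirm that letting $h$ grow (so $\rho^h \to 0$) recovers the product $\mu_{(k)}\mu_{(s)} = \mu^{k+s}$, consistent with the asymptotic independence of $X_t$ and $X_{t-h}$, and that each expression is symmetric under exchange of $k$ and $s$ as required.
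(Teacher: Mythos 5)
Your proposal is correct and is exactly the argument the paper intends: the lemma is stated as an immediate consequence of Proposition \ref{Kobe}, obtained by substituting the Poisson factorial moments $\mu_{(k)}=\mu^k$ and evaluating the finite sum term by term, and your coefficient computations (e.g., $\binom{2}{1}\binom{3}{1}1!=6$, $\binom{3}{2}^2 2!=18$) all check out. The symmetry reduction and the sanity checks ($\rho^h\to 0$ recovering $\mu^{k+s}$, symmetry in $k,s$) are sensible additions but do not change the substance.
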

\vspace{0.3cm} 
Using Lemma \ref{Lebron}, Lemma \ref{AnsuFati} and applying them to Corollary \ref{Ilaxis}
$ (i) $, yields the following. 
\begin{lemma}\label{Pique}
	Let us assume that the missing data follow a Markov model, \ie $ \uptau(h)=\uptau^2+\uptau(1-\uptau)r^h $. Then, we have
	
	\vspace{0.3cm}
	\begin{tabular}{ll}
		$(i)$ &$\sigma_{11}=\mu\bigg( \frac{1}{\uptau}\frac{1+r\rho}{1-r\rho}+\frac{2(1-r)\rho}{(1-r\rho)(1-\rho)} \bigg) $,\\
		$(ii)$ &$ \sigma_{12}=2\mu\cdot\sigma_{11}$,\\
		$(iii)$ &$\sigma_{13}=3\mu^2\cdot\sigma_{11} $,\\
		$(iv)$ &$\sigma_{22}=4\mu^2\cdot\sigma_{11}+2\mu^2\bigg( \frac{1}{\uptau}\frac{1+r\rho^2}{1-r\rho^2}+\frac{2(1-r)\rho^2}{(1-r\rho^2)(1-\rho^2)} \bigg) $,\\
		$(v)$ &$\sigma_{23}=3\mu\cdot\sigma_{22}-6\mu^3\cdot\sigma_{11} $,\\
		$(vi)$ &$\sigma_{33}=9\mu^2\cdot\sigma_{22}-27\mu^4\cdot\sigma_{11}+6\mu^3\bigg( \frac{1}{\uptau}\frac{1+r\rho^3}{1-r\rho^3}+\frac{2(1-r)\rho^3}{(1-r\rho^3)(1-\rho^3)} \bigg) $.
	\end{tabular} 
\end{lemma}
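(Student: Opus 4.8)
The plan is to start from the closed form in Corollary \ref{Ilaxis}$(i)$, namely $\sigma_{ij}^{\text{Poi}} = \tfrac{1}{\uptau}\big[\mu_{(i,j)}(0) - \mu^{i+j} + \tfrac{2}{\uptau}\sum_{h=1}^\infty \uptau(h)(\mu_{(i,j)}(h) - \mu^{i+j})\big]$ and to feed in the two ingredients already available: the lag-zero identities of Lemma \ref{Lebron} and the lag-$h$ Poisson moments of Lemma \ref{AnsuFati}. The first observation is that for each relevant pair $(i,j)$ the centred moment $\mu_{(i,j)}(h) - \mu^{i+j}$ is, by Lemma \ref{AnsuFati}, a linear combination $\sum_s a_s^{(ij)}\rho^{sh}$ of the powers $\rho^h,\rho^{2h},\rho^{3h}$ with $\mu$-dependent coefficients (e.g.\ $a_1^{(2,2)}=4\mu^3$, $a_2^{(2,2)}=2\mu^2$). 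Crucially, evaluating this same expression at $h=0$ reproduces exactly the lag-zero value $\mu_{(i,j)}(0)-\mu^{i+j}=\sum_s a_s^{(ij)}$ supplied by Lemma \ref{Lebron}, so the separate lag-zero term and the ``$h=0$ contribution'' implicit in the infinite sum are mutually consistent.

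The second, and genuinely computational, step is the geometric summation identity that produces $\kappa$. I would show that, under the Markov model $\uptau(h)=\uptau^2+\uptau(1-\uptau)r^h$, for every $s\in\bbn$
\[
\tfrac{1}{\uptau}\Big(1+\tfrac{2}{\uptau}\sum_{h=1}^\infty \uptau(h)\rho^{sh}\Big)=\kappa(s),
\]
with $\kappa(s)$ as in \eqref{Kappa}. This reduces to splitting $\sum_{h\ge1}\uptau(h)\rho^{sh}$ into the two geometric series $\uptau^2\sum_{h\ge1}(\rho^s)^h=\uptau^2\tfrac{\rho^s}{1-\rho^s}$ and $\uptau(1-\uptau)\sum_{h\ge1}(r\rho^s)^h=\uptau(1-\uptau)\tfrac{r\rho^s}{1-r\rho^s}$, and then collecting the $\uptau$-free and $\uptau$-linear parts over the common denominator $(1-\rho^s)(1-r\rho^s)$; the $\uptau$-free part simplifies to $\tfrac{1}{\uptau}\tfrac{1+r\rho^s}{1-r\rho^s}$ and the remainder to $\tfrac{2(1-r)\rho^s}{(1-r\rho^s)(1-\rho^s)}$. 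I expect this algebraic recombination (the partial-fraction-style manipulation showing the two forms of $\kappa(s)$ agree) to be the main obstacle, although it is entirely elementary.

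Combining the two steps, each covariance collapses to $\sigma_{ij}=\sum_s a_s^{(ij)}\kappa(s)$: the consistency noted above makes the explicit lag-zero term absorb precisely the constant ``$1$'' contributed by each $\kappa(s)$, leaving the clean coefficient-weighted sum. Reading the coefficients off Lemma \ref{AnsuFati} then yields $\sigma_{11}=\mu\kappa(1)$, $\sigma_{12}=2\mu^2\kappa(1)$, $\sigma_{13}=3\mu^3\kappa(1)$, $\sigma_{22}=4\mu^3\kappa(1)+2\mu^2\kappa(2)$, $\sigma_{23}=6\mu^4\kappa(1)+6\mu^3\kappa(2)$ and $\sigma_{33}=9\mu^5\kappa(1)+18\mu^4\kappa(2)+6\mu^3\kappa(3)$. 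Finally, to obtain the recursive presentation in the statement I would re-express the $\kappa(1)$- and $\kappa(2)$-terms via $\sigma_{11}=\mu\kappa(1)$ and $\sigma_{22}=4\mu^3\kappa(1)+2\mu^2\kappa(2)$; for instance $\sigma_{12}=2\mu\,\sigma_{11}$, $\sigma_{13}=3\mu^2\,\sigma_{11}$, $\sigma_{23}=3\mu\,\sigma_{22}-6\mu^3\,\sigma_{11}$ and $\sigma_{33}=9\mu^2\,\sigma_{22}-27\mu^4\,\sigma_{11}+6\mu^3\kappa(3)$, which is exactly the form claimed in $(i)$--$(vi)$.
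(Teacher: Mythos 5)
Your proposal is correct and follows essentially the same route as the paper's proof: both start from Corollary \ref{Ilaxis}$(i)$, feed in Lemmata \ref{Lebron} and \ref{AnsuFati}, and evaluate the resulting geometric series under the Markov model $\uptau(h)=\uptau^2+\uptau(1-\uptau)r^h$. The only difference is organizational -- you prove the identity $\tfrac{1}{\uptau}\big(1+\tfrac{2}{\uptau}\sum_{h\geq 1}\uptau(h)\rho^{sh}\big)=\kappa(s)$ once for general $s$ and then obtain each $\sigma_{ij}=\sum_s a_s^{(ij)}\kappa(s)$ by linearity, whereas the paper repeats the same summation separately for $s=1,2,3$ within parts $(i)$, $(iv)$ and $(vi)$; your coefficients and the recursive re-expressions in $(ii)$--$(vi)$ all check out.
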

\begin{proof}
	Let us start by proving $ (i) $. According to Corollary \ref{Ilaxis} 
	$ (i) $ we have
	\begin{align*} 
		\sigma_{11}&=\tfrac{1}{\uptau}\Big[\mu_{(1,1)}(0)-\mu^2+\tfrac{2}{\uptau}\sum_{h=1}^\infty \uptau(h)\Big(\mu_{(1,1)}(h)-\mu^2\Big) \Big]\\
		&=\tfrac{1}{\uptau}\mu\Big[1+\tfrac{2}{\uptau}\sum_{h=1}^\infty \uptau(h)\rho^h \Big].
	\end{align*}
	In the last step, we used Lemma \ref{Lebron} and Lemma \ref{AnsuFati}. Now, we assume that the missing data follow the Markov model  $ \uptau(h)=\uptau^2+\uptau(1-\uptau)r^h $. Then, we get 
	\begin{align}\label{Porzingis}
		\sigma_{11}&=\mu\tfrac{1}{\uptau} \Bigg[1+2\uptau\sum_{h=1}^\infty\rho^{h}+2(1-\uptau)\sum_{h=1}^\infty(r\rho)^h \Bigg]\nonumber\\
		&=\mu\tfrac{1}{\uptau} \Bigg(\frac{1+(2\uptau-1)\rho}{1-\rho}-\frac{2r(\uptau-1)\rho}{1-r\rho} \Bigg)\nonumber\\
		&=\mu\Bigg( \frac{1}{\uptau}\frac{1+r\rho}{1-r\rho}+\frac{2(1-r)\rho}{(1-r\rho)(1-\rho)} \Bigg). 
	\end{align}
	In the same manner, one can show $ (ii) $ and $ (iii) $. Next, let us determine $ \sigma_{22} $. Analogously to $ (i) $, we get 
	\begin{align*} 
		\sigma_{22}&=\tfrac{1}{\uptau}\Big[\mu_{(2,2)}(0)-\mu^4+\tfrac{2}{\uptau}\sum_{h=1}^\infty 	\uptau(h)\Big(\mu_{(2,2)}(h)-\mu^4\Big) \Big]\\
		&=\tfrac{4}{\uptau}\mu^3\Big(1+\tfrac{2}{\uptau}\sum_{h=1}^\infty \uptau(h)\rho^{h} 	\Big)+\tfrac{2}{\uptau}\mu^2\Big(1+\tfrac{2}{\uptau}\sum_{h=1}^\infty \uptau(h)\rho^{2h} \Big).
	\end{align*}
	Using the Markov model $ \uptau(h)=\uptau^2+\uptau(1-\uptau)r^h $, and from \eqref{Porzingis}, we get 
	\begin{align*}
		\sigma_{22}&=4\mu^2\cdot\sigma_{11}+2\mu^2\tfrac{1}{\uptau} \Bigg[1+2\uptau\sum_{h=1}^\infty\rho^{2h}+2(1-\uptau)\sum_{h=1}^\infty(r\rho^2)^h \Bigg]\nonumber\\
		&=4\mu^2\cdot\sigma_{11}+2\mu^2\tfrac{1}{\uptau} \Bigg(\frac{1+(2\uptau-1)\rho^2}{1-\rho^2}-\frac{2r(\uptau-1)\rho^2}{1-r\rho^2} \Bigg)\nonumber\\
		&=4\mu^2\cdot\sigma_{11}+2\mu^2\Bigg( \frac{1}{\uptau}\frac{1+r\rho^2}{1-r\rho^2}+\frac{2(1-r)\rho^2}{(1-r\rho^2)(1-\rho^2)} \Bigg). 
	\end{align*}
	For $(v)$, we get for the Markov model $ \uptau(h)=\uptau^2+\uptau(1-\uptau)r^h $ the following:
	\begin{align*} 
		\sigma_{23}&=\tfrac{1}{\uptau}\Big[\mu_{(2,3)}(0)-\mu^5+\tfrac{2}{\uptau}\sum_{h=1}^\infty 		\uptau(h)\Big(\mu_{(2,3)}(h)-\mu^5\Big) \Big]\\
		&=\tfrac{6}{\uptau}\mu^4\Big(1+\tfrac{2}{\uptau}\sum_{h=1}^\infty \uptau(h)\rho^{h} 		\Big)+\tfrac{6}{\uptau}\mu^3\Big(1+\tfrac{2}{\uptau}\sum_{h=1}^\infty \uptau(h)\rho^{2h} \Big)\\
		&=3\mu\cdot\sigma_{22}-6\mu^3\cdot\sigma_{11}.
	\end{align*}
	Finally, for $ (vi) $, we get 
	\begin{align*} 
		\sigma_{33}&=\tfrac{1}{\uptau}\Big[\mu_{(3,3)}(0)-\mu^6+\tfrac{2}{\uptau}\sum_{h=1}^\infty\uptau(h)\Big(\mu_{(3,3)}(h)-\mu^6\Big) \Big]\\
		&=\tfrac{9}{\uptau}\mu^5\Big(1+\tfrac{2}{\uptau}\sum_{h=1}^\infty \uptau(h)\rho^{h} 		\Big)+\tfrac{18}{\uptau}\mu^4\Big(1+\tfrac{2}{\uptau}\sum_{h=1}^\infty \uptau(h)\rho^{2h} \Big)+\tfrac{6}{\uptau}\mu^3\Big(1+\tfrac{2}{\uptau}\sum_{h=1}^\infty \uptau(h)\rho^{3h} 	\Big).
	\end{align*}	
	Again, the Markov model $ \uptau(h)=\uptau^2+\uptau(1-\uptau)r^h $ yields
	\begin{align*}
		\sigma_{33}&=9\mu^2\cdot\sigma_{22}-27\mu^4\cdot\sigma_{11}+6\mu^3\tfrac{1}{\uptau} \Bigg[1+2\uptau\sum_{h=1}^\infty\rho^{3h}+2(1-\uptau)\sum_{h=1}^\infty(r\rho^3)^h \Bigg]\nonumber\\
		&=9\mu^2\cdot\sigma_{22}-27\mu^4\cdot\sigma_{11}+6\mu^3\tfrac{1}{\uptau} \Bigg(\frac{1+(2\uptau-1)\rho^3}{1-\rho^3}-\frac{2r(\uptau-1)\rho^3}{1-r\rho^3} \Bigg)\nonumber\\
		&=9\mu^2\cdot\sigma_{22}-27\mu^4\cdot\sigma_{11}+6\mu^3\Bigg( \frac{1}{\uptau}\frac{1+r\rho^3}{1-r\rho^3}+\frac{2(1-r)\rho^3}{(1-r\rho^3)(1-\rho^3)} \Bigg).
	\end{align*}
\end{proof}
%
\subsubsection{BAR(1) model}\label{A_BinModel}
%
The BAR(1) model shall be used for the binomial index of dispersion as well as for the skewness index, if one is concerned with bounded counts. Thus, let us consider the following lemmata. 
\begin{lemma}\label{Neymar}
	By Proposition \ref{Hulk}, we have
	
	\vspace{0.3cm} 
	\begin{tabular}{ll}
		$(i)$&$ \mu_{(1,1)}(h) = n^2\pi^2+n\pi(1-\pi)\rho^h$,\\
		$(ii)$& $ \mu_{(1,2)}(h) = nn_{(2)}\pi^3+2n_{(2)}\pi^2(1-\pi)\rho^h$,\\
		$(iii)$& $ \mu_{(2,2)}(h) = \big(n_{(2)}\big)^2\pi^4+4(n-1)n_{(2)}(1-\pi)\pi^3\rho^{h}+2n_{(2)}(1-\pi)^2\pi^2\rho^{2h}$,\\
		$(iv)$&$  \mu_{(1,3)}(h) =nn_{(3)}\pi^4+3n_{(3)}(1-\pi)\pi^3 \rho^h $,\\
		$(v)$&$  \mu_{(2,3)}(h) =n_{(2)}n_{(3)}\pi^5+6n_{(3)}(n-1)(1-\pi)\pi^4 \rho^h +6n_{(3)}(1-\pi)^2\pi^3 \rho^{2h} $,\\
		$(vi)$&$  \mu_{(3,3)}(h) =(n_{(3)})^2\pi^6+9(n-1)(n-2)n_{(3)}(n-1)(1-\pi)\pi^5\rho^h$\\ 
		&$\hspace{1.8cm}+18(n-2)n_{(3)}(1-\pi)^2\pi^4 \rho^{2h}+6n_{(3)}(1-\pi)^3\pi^3 \rho^{3h} $.
	\end{tabular} 
\end{lemma}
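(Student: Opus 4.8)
The plan is to derive all six identities by direct specialization of Proposition \ref{Hulk}, whose right-hand side is a finite sum with only $\min\{k,s\}+1$ terms. Since the parameterization is symmetric, it suffices to treat the pairs $(k,s)$ with $k\le s$, which already covers $(i)$--$(vi)$. For a fixed pair I would substitute the concrete values of $k$ and $s$ into
\begin{align*}
	\mu_{(k,s)}(h)=n_{(k)}n_{(s)}\pi^{k+s}\sum_{i=0}^{\min\{k,s\}}\frac{\binom{k}{i}\binom{n-k}{s-i}}{\binom{n}{s}}\Big(1+\tfrac{1-\pi}{\pi}\rho^h\Big)^{i},
\end{align*}
write out each of the $\min\{k,s\}+1$ summands explicitly, and expand the factor $\big(1+\tfrac{1-\pi}{\pi}\rho^h\big)^{i}$ by the binomial theorem. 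Collecting equal powers of $\rho^h$ then produces a polynomial in $\rho^h$ of degree $\min\{k,s\}$, which is precisely the form asserted in each case. No new probabilistic input is needed: Proposition \ref{Hulk} does all the stochastic work, so the remaining task is a finite algebraic verification.

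The core of the work is the combinatorial simplification of the coefficients. Writing $u=\tfrac{1-\pi}{\pi}\rho^h$, the coefficient of $u^j$ in the sum is $\sum_{i=j}^{\min\{k,s\}}\tfrac{\binom{k}{i}\binom{n-k}{s-i}}{\binom{n}{s}}\binom{i}{j}$, and I would evaluate these using Vandermonde-type identities together with the relation $n_{(s)}/\binom{n}{s}=s!$, which turns the ratios of binomial coefficients into products of falling factorials $n_{(k)},n_{(s)}$ and into the powers of $(1-\pi)$ supplied by $u^j=\big(\tfrac{1-\pi}{\pi}\big)^{j}\rho^{jh}$. A useful structural check is the constant term: at $u=0$ the sum reduces to $\sum_{i}\binom{k}{i}\binom{n-k}{s-i}/\binom{n}{s}=1$ by Vandermonde's identity, so the $\rho^0$ coefficient is always $n_{(k)}n_{(s)}\pi^{k+s}=\mu_{(k)}\mu_{(s)}$, consistent with the decorrelation of $(X_t,X_{t-h})$ as $h\to\infty$. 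For the smallest cases $(i)$, $(ii)$, $(iv)$ the sum has at most two terms and the simplification is immediate; as a sanity check, for $(i)$ one obtains $\sum_i=1+\tfrac{1-\pi}{n\pi}\rho^h$, and multiplying by $n^2\pi^2$ reproduces $n^2\pi^2+n\pi(1-\pi)\rho^h$.

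The hard part will be the bookkeeping in the three- and four-term cases, namely $(iii)$ $\mu_{(2,2)}(h)$, $(v)$ $\mu_{(2,3)}(h)$, and especially $(vi)$ $\mu_{(3,3)}(h)$, where $\min\{k,s\}=3$ forces an expansion of $(1+u)^3$ and hence coefficients of $\rho^h,\rho^{2h},\rho^{3h}$ that each aggregate contributions from several summands. Here I would proceed power-by-power in $\rho^h$: after confirming the constant term, I would match the linear, then quadratic, then cubic coefficients one at a time, repeatedly using the $\binom{n-k}{s-i}/\binom{n}{s}$ cancellations to factor out the correct $(n-1)$ and $(n-2)$ factors that appear in the stated expressions. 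The risk is purely clerical -- sign and factor errors when aggregating several binomial terms into a single coefficient -- so I would treat each power of $\rho^h$ as an independent identity to be verified, which keeps the computation modular and checkable against the asserted closed forms.
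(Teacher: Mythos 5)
Your proposal is correct and takes essentially the same route as the paper: Lemma \ref{Neymar} carries no separate proof there precisely because it is the direct specialization of Proposition \ref{Hulk} that you describe, and your plan of substituting each pair $(k,s)$, expanding $\big(1+\tfrac{1-\pi}{\pi}\rho^h\big)^i$, and matching coefficients of $\rho^{jh}$ via Vandermonde-type cancellations (with the $n_{(s)}/\binom{n}{s}=s!$ relation) is exactly that finite computation, as your explicit check of case $(i)$ confirms. One bonus of actually executing your power-by-power bookkeeping for case $(vi)$: the coefficient of $\rho^h$ comes out as $9(n-1)(n-2)n_{(3)}(1-\pi)\pi^5$, so the duplicated factor $(n-1)$ in the lemma's printed statement is a typo, consistent with the coefficient $3n_{(3)}(1-\pi)\pi^3\cdot 3(n-1)(n-2)\pi^2$ used later in the proof of Lemma \ref{Pedri}.
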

\vspace{0.3cm} 
Using Lemma \ref{Lebron}, Lemma \ref{Neymar}, and applying them to Corollary \ref{Ilaxis} 
$ (ii) $, yields the following.

\begin{lemma}\label{Pedri}
	Let us assume again that the missing data follow a Markov model, \ie $ \uptau(h)=\uptau^2+\uptau(1-\uptau)r^h $. Then, we have
	
	\vspace{0.3cm}
	\begin{tabular}{ll}
		$(i)$ &$ \sigma_{11}=n\pi(1-\pi)\bigg( \frac{1}{\uptau}\frac{1+r\rho}{1-r\rho}+\frac{2(1-r)\rho}{(1-r\rho)(1-\rho)} \bigg) $,\\
		$(ii)$ &$\sigma_{12}=2(n-1)\pi\cdot\sigma_{11}$, \\
		$(iii)$ &$ \sigma_{13}=3(n-1)(n-2)\pi^2\cdot\sigma_{11}$,  \\
		$(iv)$& $ \sigma_{22}=4(n-1)^2\pi^2\cdot\sigma_{11}+2n_{(2)}(1-\pi)^2\pi^2\bigg( \frac{1}{\uptau}\frac{1+r\rho^2}{1-r\rho^2}+\frac{2(1-r)\rho^2}{(1-r\rho^2)(1-\rho^2)} \bigg), $\\
		$(v)$&$\sigma_{23}=3(n-2)\pi\cdot\sigma_{22}-6(n-1)^2(n-2)\pi^3\cdot\sigma_{11}, $\\
		$(vi)$& $ \sigma_{33}=9(n-2)^2\pi^2\cdot \sigma_{22} -27(n-1)^2(n-2)^2\pi^4\cdot\sigma_{11}$\\
		&$\hspace{1cm}+6n_{(3)}(1-\pi)^3\pi^3\bigg( \frac{1}{\uptau}\frac{1+r\rho^3}{1-r\rho^3}+\frac{2(1-r)\rho^3}{(1-r\rho^3)(1-\rho^3)} \bigg)$.\\	
	\end{tabular} 
\end{lemma}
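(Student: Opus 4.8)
The plan is to mirror the proof of Lemma \ref{Pique}, replacing the Poisson factorial moments by their binomial counterparts. The engine is Corollary \ref{Ilaxis}$(ii)$, into which I substitute the lag-zero moments from Lemma \ref{Lebron} (specialised to the binomial case via $\mu_{(k)}=n_{(k)}\pi^k$) and the lag-$h$ moments from Lemma \ref{Neymar}. The crucial structural observation is that, for the BAR(1) process, each centred mixed moment is a polynomial in $\rho^h$ with vanishing constant term,
\[
\mu_{(i,j)}(h)-n_{(i)}n_{(j)}\pi^{i+j}=\sum_{s=1}^{\min\{i,j\}}c_s^{(ij)}\,\rho^{sh},
\]
and that the lag-zero difference $\mu_{(i,j)}(0)-n_{(i)}n_{(j)}\pi^{i+j}$ coincides with the right-hand side evaluated at $\rho^{sh}=1$. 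This consistency between Lemma \ref{Lebron} and Lemma \ref{Neymar} is inherited from Proposition \ref{Hulk}, whose polynomial-in-$\rho^h$ form reduces to the single-variable mixed moment as $\rho^h\to1$.

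Given this, the entire $\uptau$-dependence reduces to one geometric-series computation. Substituting the expansion into Corollary \ref{Ilaxis}$(ii)$ gives
\[
\sigma_{ij}=\sum_{s=1}^{\min\{i,j\}}c_s^{(ij)}\cdot\tfrac{1}{\uptau}\Big[1+\tfrac{2}{\uptau}\sum_{h=1}^\infty\uptau(h)\,\rho^{sh}\Big],
\]
and under the Markov model $\uptau(h)=\uptau^2+\uptau(1-\uptau)r^h$ the bracket collapses exactly as in the derivation of \eqref{Porzingis}: expanding $\uptau(h)$ and summing the two geometric series in $\rho^s$ and $r\rho^s$ yields precisely $\kappa(s)$ from \eqref{Kappa}. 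Hence $\sigma_{ij}=\sum_s c_s^{(ij)}\kappa(s)$, and it only remains to read off the coefficients $c_s^{(ij)}$ from Lemma \ref{Neymar} and repackage them.

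For the individual entries I would proceed in increasing order of $\min\{i,j\}$. The entries $(1,1)$, $(1,2)$, $(1,3)$ involve only $\kappa(1)$ and give $\sigma_{11}=n\pi(1-\pi)\kappa(1)$ together with the stated multiples; here the falling-factorial identity $n_{(k+1)}=n_{(k)}(n-k)$ is used to match, for instance, $2n_{(2)}\pi^2(1-\pi)=2(n-1)\pi\cdot n\pi(1-\pi)$. The entry $(2,2)$ introduces the genuinely new $\kappa(2)$-block $2n_{(2)}(1-\pi)^2\pi^2$, while its $\kappa(1)$-coefficient is rewritten as the multiple $4(n-1)^2\pi^2$ of $\sigma_{11}$.

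The main obstacle lies in $\sigma_{23}$ and especially $\sigma_{33}$, where the raw coefficients from Lemma \ref{Neymar} must be reorganised so that the $\kappa(1)$- and $\kappa(2)$-parts are absorbed into the already-computed building blocks $\sigma_{11}$ and $\sigma_{22}$, leaving only the top-order term explicit. Concretely, one writes the $\kappa(2)$-coefficient of $\sigma_{23}$ as $3(n-2)\pi$ times that of $\sigma_{22}$ (using $6(n-2)n_{(2)}=6n_{(3)}$); the factor $3(n-2)\pi$ then overshoots the required $\kappa(1)$-coefficient by exactly a factor of two, which forces the subtracted correction $-6(n-1)^2(n-2)\pi^3\,\sigma_{11}$. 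The same bookkeeping, one level deeper, produces the combination $9(n-2)^2\pi^2\,\sigma_{22}-27(n-1)^2(n-2)^2\pi^4\,\sigma_{11}$ for $\sigma_{33}$, where the $36-27=9$ cancellation in the $\kappa(1)$-coefficient is the delicate step. Keeping track of which $(n-1)$ and $(n-2)$ factors belong to the $\sigma_{22}$-block versus the standalone $\kappa(3)$-term $6n_{(3)}(1-\pi)^3\pi^3$ is the only genuinely error-prone part of the calculation.
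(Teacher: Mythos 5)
Your proposal is correct and takes essentially the same route as the paper's proof: there too, Lemmas \ref{Lebron} and \ref{Neymar} are substituted into Corollary \ref{Ilaxis}$(ii)$, each centred mixed moment is exploited as a polynomial in $\rho^h$ with no constant term so that every block reduces to $\tfrac{1}{\uptau}\big[1+\tfrac{2}{\uptau}\sum_{h=1}^\infty\uptau(h)\rho^{sh}\big]$, the Markov model collapses this to $\kappa(s)$ exactly as in \eqref{Porzingis}, and the $\kappa(1)$- and $\kappa(2)$-parts are repackaged through $\sigma_{11}$ and $\sigma_{22}$. Your coefficient bookkeeping checks out in full (the identity $6(n-2)n_{(2)}=6n_{(3)}$, the factor-two overshoot forcing $-6(n-1)^2(n-2)\pi^3\sigma_{11}$, and the $36-27=9$ cancellation), with the side remark that you implicitly read the $\rho^h$-coefficient in Lemma \ref{Neymar}$(vi)$ as $9(n-1)(n-2)n_{(3)}(1-\pi)\pi^5$ --- the duplicated factor ``$(n-1)$'' printed there is a typo, as direct evaluation of Proposition \ref{Hulk} confirms.
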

\begin{proof}
	Let us start by proving $ (i) $. According to Corollary \ref{Ilaxis} $ (ii) $, we have
	\begin{align*} 
		\sigma_{11}&=\tfrac{1}{\uptau}\Big[\mu_{(1,1)}(0)-n^2\pi^2+\tfrac{2}{\uptau}\sum_{h=1}^\infty \uptau(h)\Big(\mu_{(1,1)}(h)-n^2\pi^2\Big) \Big]\\
		&=\tfrac{n\pi(1-\pi)}{\uptau}\Big[1+\tfrac{2}{\uptau}\sum_{h=1}^\infty \uptau(h)\rho^h \Big].
	\end{align*}
	In the last step, we used Lemma \ref{Lebron} and Lemma \ref{Neymar}. Now, we assume that the Missing Data follows the Markov model  $ \uptau(h)=\uptau^2+\uptau(1-\uptau)r^h $. Then, like in the Proof of Lemma \ref{Pique} $ (i) $, we get
	\begin{align*} 
		\sigma_{11}=n\pi(1-\pi)\bigg( \frac{1}{\uptau}\frac{1+r\rho}{1-r\rho}+\frac{2(1-r)\rho}{(1-r\rho)(1-\rho)} \bigg). 
	\end{align*}
	In a same manner, one can show $ (ii) $ and $ (iii) $. Next, let us determine $ \sigma_{22} $. Analogously to $ (i) $, we get 
	\begin{align*} 
		\sigma_{22}&=\tfrac{1}{\uptau}\Big[\mu_{(2,2)}(0)-(n_{(2)})^2\pi^4+\tfrac{2}{\uptau}\sum_{h=1}^\infty 	\uptau(h)\Big(\mu_{(2,2)}(h)-(n_{(2)})^2\pi^4\Big) \Big]\\
		&=\tfrac{n_{(2)}\pi^2}{\uptau}\Bigg[4(n-1)(1-\pi)\pi\Big(1+\tfrac{2}{\uptau}\sum_{h=1}^\infty \uptau(h)\rho^{h} 	\Big)+2(1-\pi)^2\pi^2\Big(1+\tfrac{2}{\uptau}\sum_{h=1}^\infty \uptau(h)\rho^{2h} \Big)\Bigg].
	\end{align*}
	Using the Markov model $ \uptau(h)=\uptau^2+\uptau(1-\uptau)r^h $ as well as the result for $ \sigma_{11} $, we get 
	\begin{align*} 
		\sigma_{22}&=4(n-1)^2\pi^2\cdot\sigma_{11}+2n_{(2)}(1-\pi)^2\tfrac{1}{\uptau} \Bigg(\frac{1+(2\uptau-1)\rho^2}{1-\rho^2}-\frac{2r(\uptau-1)\rho^2}{1-r\rho^2} \Bigg)\\
		&=4(n-1)^2\pi^2\cdot\sigma_{11}+2n_{(2)}(1-\pi)^2\pi^2\bigg( \frac{1}{\uptau}\frac{1+r\rho^2}{1-r\rho^2}+\frac{2(1-r)\rho^2}{(1-r\rho^2)(1-\rho^2)} \bigg). 
	\end{align*}
	For $(v) $, we get 
	\begin{align*} 
		\sigma_{23}&=\tfrac{1}{\uptau}\Big[\mu_{(2,3)}(0)-n_{(2)}n_{(3)}\pi^5+\tfrac{2}{\uptau}\sum_{h=1}^\infty 		\uptau(h)\Big(\mu_{(2,3)}(h)-n_{(2)}n_{(3)}\pi^5\Big) \Big]\\
		&=\tfrac{6n_{(3)}(1-\pi)\pi^3}{\uptau}\Bigg[(n-1)\pi\Big(1+\tfrac{2}{\uptau}\sum_{h=1}^\infty \uptau(h)\rho^{h} 		\Big)+(1-\pi)\Big(1+\tfrac{2}{\uptau}\sum_{h=1}^\infty \uptau(h)\rho^{2h} \Big)\Bigg]\\
		&=3(n-2)\pi\cdot\sigma_{22}-6(n-1)^2(n-2)\pi^3\cdot\sigma_{11}.
	\end{align*}
	Finally, for $ (vi) $, we get 
	\begin{align*} 
		\sigma_{33}&=\tfrac{1}{\uptau}\Big[\mu_{(3,3)}(0)-(n_{(3)})^2\pi^6+\tfrac{2}{\uptau}\sum_{h=1}^\infty 		\uptau(h)\Big(\mu_{(3,3)}(h)-(n_{(3)})^2\pi^6\Big) \Big]\\
		&=\tfrac{3n_{(3)}(1-\pi)\pi^3}{\uptau}\Bigg[3(n-1)(n-2)\pi^2\Big(1+\tfrac{2}{\uptau}\sum_{h=1}^\infty \uptau(h)\rho^{h}\Big)+
		\\[1ex]&\quad+6(n-2)(1-\pi)\pi\Big(1+\tfrac{2}{\uptau}\sum_{h=1}^\infty \uptau(h)\rho^{2h} \Big)+2(1-\pi)^2\Big(1+\tfrac{2}{\uptau}\sum_{h=1}^\infty \uptau(h)\rho^{3h} 		\Big)\Bigg].
	\end{align*}	
	Again, the Markov model $ \uptau(h)=\uptau^2+\uptau(1-\uptau)r^h $ yields
	\begin{align*}
		\sigma_{33}&=9(n-2)^2\pi^2\cdot\sigma_{22}-27(n-1)^2(n-2)^2\pi^4\cdot\sigma_{11}
		\\[1ex]&\quad+6n_{(3)}(1-\pi)^3\pi^3\frac{1}{\uptau}\Bigg(\frac{1+(2\uptau-1)\rho^3}{1-\rho^3}-\frac{2r(\uptau-1)\rho^3}{1-r\rho^3} \Bigg)\nonumber\\
		&=9(n-2)^2\pi^2\cdot\sigma_{22}-27(n-1)^2(n-2)^2\pi^4\cdot\sigma_{11}	\\[1ex]&\quad+6n_{(3)}(1-\pi)^3\pi^3\bigg(\frac{1}{\uptau}\frac{1+r\rho^3}{1-r\rho^3}+\frac{2(1-r)\rho^3}{(1-r\rho^3)(1-\rho^3)} \bigg).
	\end{align*}
\end{proof}
%
\subsection{Poisson Index of Dispersion}\label{A_PInd}
%
In this section, we look at the derivations for the Poisson index of dispersion. Moreover, we start with the general approach for the asymptotic variance and bias, and then consider a specific Poi-INAR(1) model. Let us start with the general approach.
%
\subsubsection{Proof of Theorem \ref{DeJong}}\label{A_PInd_ThDeJong}
%
\setlength{\fboxrule}{1pt}
\fbox{Asymptotic variance}\\

First, let us draw our attention to the asymptotic variance. Let us start by applying the covariances $\sigma_{ij}$ from Theorem \ref{CLT_Fac} to \eqref{Doncic}, which leads to
\begin{align*}
	\sigma_{\hat{I}^\poi}^2&=\tfrac{1}{T\uptau} \Bigg[ \big(\tfrac{\mu_{(2)}}{\mu^2}+1\big)^2(\mu_{(1,1)}(0)-\mu^2)-\tfrac{2}{\mu}\big(\tfrac{\mu_{(2)}}{\mu^2}+1\big)(\mu_{(1,2)}(0)-\mu\mu_{(2)})
	\\[1ex]&\quad +\tfrac{1}{\mu^2}(\mu_{(2,2)}(0)-\mu_{(2)}^2)+\tfrac{2}{\uptau}\sum_{h=1}^\infty\uptau(h)\Big( \big(\tfrac{\mu_{(2)}}{\mu^2}+1\big)^2\big(\mu_{(1,1)}(h)-\mu^2\big)
	\\[1ex]&\quad+\tfrac{1}{\mu^2}\big(\mu_{(2,2)}(h)-\mu_{(2)}^2\big) -\tfrac{1}{\mu}\big(\tfrac{\mu_{(2)}}{\mu^2}+1\big)\big(\mu_{(2,1)}(h)+\mu_{(1,2)}(h)-2\mu\mu_{(2)}\big) \Big) \Bigg]\\
	&=\tfrac{1}{T\uptau\mu^2} \Bigg[ \big(\tfrac{\mu_{(2)}}{\mu}+\mu\big)^2(\mu_{(1,1)}(0)-\mu^2)-2\big(\tfrac{\mu_{(2)}}{\mu}+\mu\big)(\mu_{(1,2)}(0)-\mu\mu_{(2)})
	\\[1ex]&\quad +\mu_{(2,2)}(0)-\mu_{(2)}^2+\tfrac{2}{\uptau}\sum_{h=1}^\infty\uptau(h)\Big( \big(\tfrac{\mu_{(2)}}{\mu}+\mu\big)^2\big(\mu_{(1,1)}(h)-\mu^2\big)
	\\[1ex]&\quad+\mu_{(2,2)}(h)-\mu_{(2)}^2 -\big(\tfrac{\mu_{(2)}}{\mu}+\mu\big)\big(\mu_{(2,1)}(h)+\mu_{(1,2)}(h)-2\mu\mu_{(2)}\big) \Big) \Bigg].
\end{align*}
Now, we can use Lemma \ref{Lebron} to simplify $ \sigma_{\hat{I}^\poi}^2 $. Thus, we obtain 
\begin{align*}
	\sigma_{\hat{I}^\poi}^2&=\tfrac{1}{T\uptau\mu^2} \Bigg[\big(\tfrac{\mu_{(2)}}{\mu}+\mu\big)^2(\mu_{(2)}+\mu-\mu^2)-2\big(\tfrac{\mu_{(2)}}{\mu}+\mu\big)(\mu_{(3)}+2\mu_{(2)}-\mu\mu_{(2)})
	\\[1ex]&\quad+\mu_{(4)}+4\mu_{(3)}+2\mu_{(2)}-\mu_{(2)}^2+\tfrac{2}{\uptau}\sum_{h=1}^\infty\uptau(h)\Big( \big(\tfrac{\mu_{(2)}}{\mu}+\mu\big)^2\big(\mu_{(1,1)}(h)-\mu^2\big)
	\\[1ex]&\quad-\big(\tfrac{\mu_{(2)}}{\mu}+\mu\big)\big(\mu_{(2,1)}(h)+\mu_{(1,2)}(h)\big)+\mu_{(2,2)}(h)+\mu_{(2)}^2+2\mu^2\mu_{(2)} \Big) \Bigg]\\
	&=\tfrac{1}{T\uptau\mu^2} \Bigg[ 	\big(\tfrac{\mu_{(2)}}{\mu}+\mu\big)^2(\mu_{(2)}+\mu)-2\big(\tfrac{\mu_{(2)}}{\mu}+\mu\big)(\mu_{(3)}+2\mu_{(2)})
	\\[1ex]&\quad+\mu_{(4)}+4\mu_{(3)}+2\mu_{(2)}-\mu^4+\tfrac{2}{\uptau}\sum_{h=1}^\infty\uptau(h)\Big( \big(\tfrac{\mu_{(2)}}{\mu}+\mu\big)^2\mu_{(1,1)}(h)
	\\[1ex]&\quad-\big(\tfrac{\mu_{(2)}}{\mu}+\mu\big)\big(\mu_{(2,1)}(h)+\mu_{(1,2)}(h)\big)+\mu_{(2,2)}(h)-\mu^4 \Big) \Bigg].
\end{align*}

\setlength{\fboxrule}{1pt}
\fbox{Asymptotic bias}\\

In a similar fashion to the asymptotic variance, the asymptotic bias becomes
\begin{align*}
	\mathbb{B}_{\hat{I}^\poi}&=\tfrac{1}{T}\Big(\tfrac{1}{2}h_{11}\sigma_{11}+h_{12}\sigma_{12} \Big)=\tfrac{1}{T} \Bigg[ 	\tfrac{\mu_{(2)}}{\mu^3}\Big(\tfrac{1}{\uptau}(\mu_{(1,1)}(0)-\mu^2) +\tfrac{2}{\uptau^2}\sum_{h=1}^\infty\uptau(h)\big(\mu_{(1,1)}(h)-\mu^2\big)\Big)
	\\[1ex]&\quad- \tfrac{1}{\mu^2}\Big(\tfrac{1}{\uptau}(\mu_{(1,2)}(0)-\mu\mu_{(2)}) 	+\tfrac{1}{\uptau^2}\sum_{h=1}^\infty\uptau(h)\big(\mu_{(2,1)}(h)+\mu_{(1,2)}(h)-2\mu\mu_{(2)}\big)\Big)\Bigg]\\
	&=\tfrac{1}{T\uptau\mu^3} \Bigg[ \mu_{(2)}\big(\mu_{(1,1)}(0)-\mu^2 	\big)-\mu\big(\mu_{(1,2)}(0)-\mu\mu_{(2)}\big)
	\\[1ex]&\quad+\tfrac{1}{\uptau}\sum_{h=1}^\infty\uptau(h)\Big(2\mu_{(2)}\big(\mu_{(1,1)}(h)-\mu^2 \big)-\mu\big(\mu_{(2,1)}(h)+\mu_{(1,2)}(h)-2\mu\mu_{(2)}\big)\Big) \Bigg].
\end{align*}
Finally, using Lemma \ref{Lebron} leads to
\begin{align*}
	\mathbb{B}_{\hat{I}^\poi}=\tfrac{1}{T\uptau\mu^3} \Bigg[  	\mu_{(2)}^2-\mu\big(\mu_{(2)}+\mu_{(3)}\big)+\tfrac{2}{\uptau}\sum_{h=1}^\infty\uptau(h)\Big(\mu_{(2)}\mu_{(1,1)}(h)
	-\tfrac{\mu}{2}\big(\mu_{(2,1)}(h)+\mu_{(1,2)}(h)\big)\Big)\Bigg].
\end{align*}
%
\subsubsection{Proof of Corollary \ref{Nico}}\label{A_PInd_CorNico}
%
For the proof, we shall utilize our results from Section \ref{A_PoiModel}, where we assumed that the missing data follow a Markov model, \ie $ \uptau(h)=\uptau^2+\uptau(1-\uptau)r^h $.

First, let us derive the asymptotic variance. As already mentioned, for a Poisson distribution, one has $ \mu_{(k)}=\mu^{k} $. Thus, we obtain from \eqref{Aguero} 
that $ \D=\big(-2,\tfrac{1}{\mu}\big)$. Now, using Lemma \ref{Pique} leads to
\begin{align*}
	\sigma_{\hat{I}^\poi}^2&=\tfrac{1}{T} \Big( d_1^2\sigma_{11} +d_2^2\sigma_{22} +2 d_1d_2\sigma_{12}\Big)=\tfrac{1}{T}\bigg[-4\cdot\sigma_{11}+\tfrac{1}{\mu^2}\cdot\sigma_{22}\bigg].
\end{align*}
From Lemma \ref{Pique} $(iv) $, we get
\begin{align*}
	\sigma_{\hat{I}^\poi}^2&=\tfrac{1}{T}\Bigg[ -4\cdot\sigma_{11}+\tfrac{1}{\mu^2}\Bigg( 4\mu^2\cdot\sigma_{11}+2\mu^2\bigg( \frac{1}{\uptau}\frac{1+r\rho^2}{1-r\rho^2}+\frac{2(1-r)\rho^2}{(1-r\rho^2)(1-\rho^2)} \bigg)\Bigg)\Bigg]\\
	&=\tfrac{2}{T}\Bigg[ \frac{1}{\uptau}\frac{1+r\rho^2}{1-r\rho^2}+\frac{2(1-r)\rho^2}{(1-r\rho^2)(1-\rho^2)}\Bigg].
\end{align*}
For the bias, we obtain from \eqref{Aguero} 
that $ h_{11}=\tfrac{2}{\mu} $, $h_{12}=-\tfrac{1}{\mu^2} $. Again, using Lemma \ref{Pique} leads to
\begin{align*}
	\mathbb{B}_{\hat{I}^\poi}&=-\tfrac{1}{T} \bigg[ \tfrac{1}{\mu}\cdot\sigma_{11}\bigg]=-\tfrac{1}{T}\Bigg[ \frac{1}{\uptau}\frac{1+r\rho}{1-r\rho}+\frac{2(1-r)\rho}{(1-r\rho)(1-\rho)} \Bigg].
\end{align*}
%
\subsection{Binomial Index of Dispersion}\label{A_BinInd}
In this section, we proceed as in Section \ref{A_PInd}, that is, we start by deriving a general approach for asymptotic variance and bias of the binomial index of dispersion, and then specify it for the BAR(1) model.

%
\subsubsection{Proof of Theorem \ref{Memphis}}\label{A_PInd_ThMemphis}
%
\setlength{\fboxrule}{1pt}
\fbox{Asymptotic variance}\\

Let us start by applying the covariances $\sigma_{ij}$ from Theorem \ref{CLT_Fac} 
to \eqref{Depay}, which leads to
\begin{align*}
	\sigma_{\hat{I}^\bin}^2&=\tfrac{1}{T} \Bigg[ 	\Bigg(\tfrac{n\big(\mu^2(1-n)-n\mu_{(2)}+2\mu\mu_{(2)}\big)}{\mu^2(n-\mu)^2}\Bigg)^2\Big(\tfrac{1}{\uptau}(\mu_{(1,1)}(0)-\mu^2) +\tfrac{2}{\uptau^2}\sum_{h=1}^\infty\uptau(h)\big(\mu_{(1,1)}(h)-\mu^2\big)\Big)
	\\[1ex]&\quad+2\Bigg(\tfrac{n\big(\mu^2(1-n)-n\mu_{(2)}+2\mu\mu_{(2)}\big)}{\mu^2(n-\mu)^2}\Bigg)\bigg(\frac{n}{\mu(n-\mu)}\bigg)\Big(\tfrac{1}{\uptau}(\mu_{(1,2)}(0)-\mu\mu_{(2)}) \\[1ex]&\quad+\tfrac{1}{\uptau^2}\sum_{h=1}^\infty\uptau(h)\big(\mu_{(2,1)}(h)+\mu_{(1,2)}(h)-2\mu\mu_{(2)}\big)\Big)
	\\[1ex]&\quad+\tfrac{n^2}{\mu^2(n-\mu)^2}\Big(\tfrac{1}{\uptau}(\mu_{(2,2)}(0)-\mu_{(2)}^2) +\tfrac{2}{\uptau^2}\sum_{h=1}^\infty\uptau(h)\big(\mu_{(2,2)}(h)-\mu_{(2)}^2\big)\Big) \Bigg]\\
	&=\tfrac{1}{T\uptau} \Bigg[ 	\Bigg(\tfrac{n\big(\mu^2(1-n)-n\mu_{(2)}+2\mu\mu_{(2)}\big)}{\mu^2(n-\mu)^2}\Bigg)^2\big(\mu_{(1,1)}(0)-\mu^2\big)
	\\[1ex]&\quad+2\Bigg(\tfrac{n\big(\mu^2(1-n)-n\mu_{(2)}+2\mu\mu_{(2)}\big)}{\mu^2(n-\mu)^2}\Bigg)\bigg(\tfrac{n}{\mu(n-\mu)}\bigg)\big(\mu_{(1,2)}(0)-\mu\mu_{(2)}\big)
	\\[1ex]&\quad+\tfrac{n^2}{\mu^2(n-\mu)^2}\big(\mu_{(2,2)}(0)-\mu_{(2)}^2\big)
	\\[1ex]&\quad +\tfrac{2}{\uptau}\sum_{h=1}^\infty\uptau(h)\Bigg( 	\Bigg(\tfrac{n\big(\mu^2(1-n)-n\mu_{(2)}+2\mu\mu_{(2)}\big)}{\mu^2(n-\mu)^2}\Bigg)^2\big(\mu_{(1,1)}(h)-\mu^2\big)
	\\[1ex]&\quad+\Bigg(\tfrac{n\big(\mu^2(1-n)-n\mu_{(2)}+2\mu\mu_{(2)}\big)}{\mu^2(n-\mu)^2}\Bigg)\bigg(\tfrac{n}{\mu(n-\mu)}\bigg)\big(\mu_{(2,1)}(h)+\mu_{(1,2)}(h)-2\mu\mu_{(2)}\big)
	\\[1ex]&\quad+\tfrac{n^2}{\mu^2(n-\mu)^2}\big(\mu_{(2,2)}(h)-\mu_{(2)}^2\big) \Bigg) \Bigg].
\end{align*}
Here, we can use Lemma \ref{Lebron} to simplify $ \sigma_{\hat{I}^\bin}^2 $. Thus, we obtain
\begin{align*}
	\sigma_{\hat{I}^\bin}^2&=\tfrac{n^2}{T\uptau\mu^4(n-\mu)^4} \Bigg[ 	\big(\mu^2(1-n)-n\mu_{(2)}+2\mu\mu_{(2)}\big)^2\big(\mu_{(2)}+\mu-\mu^2\big)
	\\[1ex]&\quad+2\mu(n-\mu)\big(\mu^2(1-n)-n\mu_{(2)}+2\mu\mu_{(2)}\big)\big(\mu_{(3)}+2\mu_{(2)}-\mu\mu_{(2)}\big)
	\\[1ex]&\quad+\mu^2(n-\mu)^2\big(\mu_{(4)}+4\mu_{(3)}+2\mu_{(2)}-\mu_{(2)}^2\big)
	\\[1ex]&\quad+\tfrac{2}{\uptau}\sum_{h=1}^\infty\uptau(h)\Bigg(\big(\mu^2(1-n)-n\mu_{(2)}+2\mu\mu_{(2)}\big)^2\big(\mu_{(1,1)}(h)-\mu^2\big)
	\\[1ex]&\quad+\mu(n-\mu)\big(\mu^2(1-n)-n\mu_{(2)}+2\mu\mu_{(2)}\big)\big(\mu_{(2,1)}(h)+\mu_{(1,2)}(h)-2\mu\mu_{(2)}\big)
	\\[1ex]&\quad +\mu^2(n-\mu)^2\big(\mu_{(2,2)}(h)-\mu_{(2)}^2\big) \Bigg) \Bigg].
\end{align*}

\setlength{\fboxrule}{1pt}
\fbox{Asymptotic bias}\\

In a similar fashion, the asymptotic bias becomes
\begin{align*}
	\mathbb{B}_{\hat{I}^\bin}&=\tfrac{1}{T} \Big(	\tfrac{1}{2}h_{11}\sigma_{11} + 	h_{12}\sigma_{12}\Big)\\
	&=\tfrac{1}{T}\Bigg[\tfrac{n\big(\mu^3(1-n)+n^2\mu_{(2)}+3\mu\mu_{(2)}(\mu-n)\big)}{\mu^3(n-\mu)^3}\Bigg(\tfrac{1}{\uptau}(\mu_{(1,1)}(0)-\mu^2)+\tfrac{2}{\uptau^2}\sum_{h=1}^\infty\uptau(h)\big(\mu_{(1,1)}(h)-\mu^2\big)\Bigg)
	\\[1ex]&\quad+\tfrac{n(2\mu-n)}{\mu^2(n-\mu)^2}\Big(\tfrac{1}{\uptau}(\mu_{(1,2)}(0)-\mu\mu_{(2)}) +\tfrac{1}{\uptau^2}\sum_{h=1}^\infty\uptau(h)\big(\mu_{(2,1)}(h)+\mu_{(1,2)}(h)-2\mu\mu_{(2)}\big)\Big)\Bigg]\\
	&=\tfrac{n}{T\uptau\mu^3(n-\mu)^3}\Bigg[\big(\mu^3(1-n)+n^2\mu_{(2)}+3\mu\mu_{(2)}(\mu-n)\big)\big(\mu_{(2)}+\mu-\mu^2\big)
	\\[1ex]&\quad+\mu(n-\mu)(2\mu-n)(\mu_{(3)}+2\mu_{(2)}-\mu\mu_{(2)})
	\\[1ex]&\quad+\tfrac{1}{\uptau^2}\sum_{h=1}^\infty\uptau(h)\Bigg(2\big(\mu^3(1-n)+n^2\mu_{(2)}+3\mu\mu_{(2)}(\mu-n)\big)\big(\mu_{(1,1)}(h)-\mu^2\big)
	\\[1ex]&\quad+\mu(n-\mu)(2\mu-n)\big(\mu_{(2,1)}(h)+\mu_{(1,2)}(h)-2\mu\mu_{(2)}\big)\Bigg)\Bigg],
\end{align*}
where in the last step, we used Lemma \ref{Lebron} once again.
%
\subsubsection{Proof of Corollary \ref{Demir}}\label{A_PInd_CorDemir}
%
For the proof, we shall utilize our results from Section \ref{A_BinModel}, where we assumed that the missing data follow a Markov model, \ie $ \uptau(h)=\uptau^2+\uptau(1-\uptau)r^h $.\\

First, let us derive the asymptotic variance. As already mentioned, for a binomial distribution, one has $ \mu_{(k)}=n_{(k)}\pi^{k} $. Thus, we obtain from \eqref{Gavi1} that $ \D=\big(-\tfrac{2(n-1)}{n(1-\pi)}, \tfrac{1}{n\pi(1-\pi)}\big)$. Now, using Lemma \ref{Pedri} leads to
\begin{align*}
	\sigma_{\hat{I}^\bin}^2&=\tfrac{1}{T} \Big( d_1^2\sigma_{11} +d_2^2\sigma_{22} +2 d_1d_2\sigma_{12}\Big)=\tfrac{1}{T}\bigg[-\tfrac{4(n-1)^2}{n^2(1-\pi)^2}\cdot\sigma_{11}+\tfrac{1}{n^2\pi^2(1-\pi)^2}\cdot\sigma_{22}\bigg]
\end{align*}
From Lemma \ref{Pedri} $(iv) $, we get
\begin{align*}
	\sigma_{\hat{I}^\bin}^2&=\tfrac{1}{T}\Bigg[ -\tfrac{4(n-1)^2}{n^2(1-\pi)^2}\cdot\sigma_{11}+\tfrac{1}{n^2\pi^2(1-\pi)^2}\Bigg( 4(n-1)^2\pi^2\cdot\sigma_{11}
	\\[1ex]&\quad+2n_{(2)}(1-\pi)^2\pi^2\bigg( \frac{1}{\uptau}\frac{1+r\rho^2}{1-r\rho^2}+\frac{2(1-r)\rho^2}{(1-r\rho^2)(1-\rho^2)} \bigg)\Bigg)\Bigg]\\
	&=\tfrac{2}{T}\Big(1-\tfrac{1}{n}\Big)\Bigg[ \frac{1}{\uptau}\frac{1+r\rho^2}{1-r\rho^2}+\frac{2(1-r)\rho^2}{(1-r\rho^2)(1-\rho^2)}\Bigg].
\end{align*}
For the bias, we obtain from \eqref{Gavi2} that $ h_{11}=\tfrac{2(n-1)}{n(1-\pi)} $, $h_{12}=\tfrac{2\pi-1}{n^2\pi^2(1-\pi)^2} $. Again, using Lemma \ref{Pedri} leads to
\begin{align*}
	\mathbb{B}_{\hat{I}^\bin}&=-\tfrac{1}{T} \bigg[ \tfrac{(n-1)}{n^2\pi(1-\pi)}\cdot\sigma_{11}\bigg]=-\tfrac{1}{T}\Big(1-\tfrac{1}{n}\Big)\Bigg[ \frac{1}{\uptau}\frac{1+r\rho}{1-r\rho}+\frac{2(1-r)\rho}{(1-r\rho)(1-\rho)} \Bigg].
\end{align*}
%
\subsection{Skewness Index}\label{A_SkewInd}
%
Let us start by stating the sample counterpart to the skewness index, which is given by $ \hat{I}_\skw =\hat{\mu}_{(3)}/\hat{\mu}_{(2)}\hat{\mu}$. Then, we proceed from Theorem \ref{CLT_Fac} 
and define a new function $ g$ by  
\begin{align*}	
	g(x_1,x_2,x_3)=\frac{x_3}{x_1x_2}. 
\end{align*}
Thus, $ g$ has the partial derivatives
\begin{align*}
	&\frac{\partial}{\partial x_1}g=-\frac{x_3}{x_1^2x_2}, \qquad \frac{\partial}{\partial x_2}g=-\frac{x_3}{x_1x_2^2}, \qquad \frac{\partial}{\partial x_3}g=\frac{1}{x_1x_2},\\
	&\frac{\partial^2}{\partial x_1^2}g=\frac{2x_3}{x_1^3x_2}, \qquad \frac{\partial^2}{\partial x_2^2}g=\frac{2x_3}{x_1x_2^3}, \qquad \frac{\partial^2}{\partial x_3^2}g=0,\\
	&\frac{\partial^2}{\partial x_1\partial x_2}g=\frac{x_3}{x_1^2x_2^2}, \qquad \frac{\partial^2}{\partial x_1\partial x_3}g=-\frac{1}{x_1^2x_2}, \qquad \frac{\partial^2}{\partial x_2\partial x_3}g=-\frac{1}{x_1x_2^2}.
\end{align*}
So, we get the Jacobian $ \D $ and the Hessian $ \Hes $ by evaluating the partial derivatives in $ \bmu$, which leads to 
\begin{align*}
	\D=\frac{1}{\mu_{(2)}\mu}\bigg(-\frac{\mu_{(3)}}{\mu} ,\ -\frac{\mu_{(3)}}{\mu_{(2)}} ,\ 1\bigg),\\
	\\
	\Hes=\frac{1}{\mu_{(2)}\mu}\begin{pmatrix} 
		\frac{2\mu_{(3)}}{\mu^2} & \frac{\mu_{(3)}}{\mu\mu_{(2)}} & -\frac{1}{\mu} \\
		\frac{\mu_{(3)}}{\mu\mu_{(2)}} &  \frac{2\mu_{(3)}}{\mu_{(2)}^2} & -\frac{1}{\mu_{(2)}}\\
		-\frac{1}{\mu} & -\frac{1}{\mu_{(2)}} & 0
	\end{pmatrix}.
\end{align*}
This, however, leads to the Taylor approximation $ \hat{I}_{\text{Skew}}\approx I_{\text{Skew}} + \D(\hat{\bmu}-\bmu)+\tfrac{1}{2}(\hat{\bmu}-\bmu)^\top\Hes(\hat{\bmu}-\bmu) $, which can be used to conclude the asymptotic variance and bias of $ \hat{I}_{\text{Skew}} $, that is
\begin{align}\label{Araujo1}
	\sigma_{\hat{I}_\skw}^2&=\tfrac{1}{T} \Big( d_1^2\sigma_{11} +d_2^2\sigma_{22} +d_3^2\sigma_{33} +2 d_1d_2\sigma_{12} +2 d_1d_3\sigma_{13}  +2 d_2d_3\sigma_{23} \Big),\\
	\mathbb{B}_{\hat{I}_\skw}&=\tfrac{1}{T} \Big( \tfrac{1}{2}\big( h_{11}\sigma_{11} +h_{22}\sigma_{22} +h_{33}\sigma_{33}\big) +h_{12}\sigma_{12} +h_{13}\sigma_{13}  +h_{23}\sigma_{23} \Big).\label{Araujo2}
\end{align}
\begin{lemma}\label{Batman}
	We have for Jacobian $ \D $ and the Hessian $ \Hes $ of the Poisson and the binomial distribution:
	\begin{align*}
		&(i)\quad \D^{\text{Poi}}=\tfrac{1}{\mu^3}\big(-\mu^2,-\mu,1\big), \quad \Hes^{\text{Poi}}=\frac{1}{\mu^3}\begin{pmatrix} 
			2\mu & 1 & -\tfrac{1}{\mu} \\
			1 & \tfrac{2}{\mu} & -\tfrac{1}{\mu^2} \\
			-\tfrac{1}{\mu} & -\tfrac{1}{\mu^2} & 0
		\end{pmatrix};\\[0.4cm]
		&(ii)\quad \D^{\text{Bin}}=\tfrac{1}{nn_{(2)}\pi^3}\big(-(n-1)(n-2)\pi^2,-(n-2)\pi,1\big),\\[0.2cm] &\qquad \Hes^{\text{Bin}}=\frac{1}{nn_{(2)}\pi^3}\begin{pmatrix} 
			\tfrac{2n_{(3)}\pi}{n^2} & \tfrac{n-2}{n} & -\tfrac{1}{n\pi} \\
			\tfrac{n-2}{n} & \tfrac{2(n-2)}{n_{(2)}\pi} & -\tfrac{1}{n_{(2)}\pi^2} \\
			-\tfrac{1}{n\pi} & -\tfrac{1}{n_{(2)}\pi^2} & 0
		\end{pmatrix}.\\
	\end{align*}
\end{lemma}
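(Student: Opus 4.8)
The plan is to obtain both displays by direct substitution into the generic Jacobian $\D$ and Hessian $\Hes$ derived just above, using the explicit factorial-moment formulas of the two marginal distributions. No new analytic input is needed: the function $g(x_1,x_2,x_3)=x_3/(x_1x_2)$ and all of its first- and second-order partials have already been evaluated at $\bmu$ and expressed purely in terms of $\mu$, $\mu_{(2)}$, $\mu_{(3)}$. So the lemma reduces to plugging in the correct values of these three moments and simplifying.

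For the Poisson case I would recall from Section \ref{Ch_models} that $\mu_{(k)}=\mu^k$, so in particular $\mu_{(2)}=\mu^2$ and $\mu_{(3)}=\mu^3$. The common prefactor then collapses to $\tfrac{1}{\mu_{(2)}\mu}=\tfrac{1}{\mu^3}$, and substituting into the three components of $\D$ and the six independent entries of the symmetric matrix $\Hes$ yields $\D^{\text{Poi}}$ and $\Hes^{\text{Poi}}$ after cancelling powers of $\mu$; for instance $-\mu_{(3)}/\mu=-\mu^2$ and $\mu_{(3)}/(\mu\mu_{(2)})=1$ give the first entry of $\D^{\text{Poi}}$ and the $(1,2)$ entry of $\Hes^{\text{Poi}}$, respectively.

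For the binomial case I would use $\mu_{(k)}=n_{(k)}\pi^k$, hence $\mu=n\pi$, $\mu_{(2)}=n_{(2)}\pi^2$, $\mu_{(3)}=n_{(3)}\pi^3$, so the prefactor becomes $\tfrac{1}{\mu_{(2)}\mu}=\tfrac{1}{n\,n_{(2)}\pi^3}$. Substituting into each entry produces ratios of falling factorials that simplify via $n_{(2)}=n(n-1)$ and $n_{(3)}=n(n-1)(n-2)$: e.g. $\mu_{(3)}/\mu_{(2)}=(n-2)\pi$ gives the second component of $\D^{\text{Bin}}$, while $\mu_{(3)}/\mu_{(2)}^2=n_{(3)}/n_{(2)}^2=(n-2)/n_{(2)}$ and $\mu_{(3)}/(\mu\,\mu_{(2)})=n_{(3)}/(n\,n_{(2)})=(n-2)/n$ give the $(2,2)$ and $(1,2)$ entries of $\Hes^{\text{Bin}}$ up to the appropriate powers of $\pi$.

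The main, and essentially only, obstacle is clerical rather than conceptual: one must keep the cancellations among $n$, $n_{(2)}$, $n_{(3)}$ and the powers of $\pi$ straight, and exploit the symmetry of $\Hes$ so that only the upper-triangular entries require checking. As a closing consistency check I would note that letting $n\to\infty$ with $\mu=n\pi$ fixed sends $n_{(k)}\pi^k\to\mu^k$, whence $\D^{\text{Bin}}\to\D^{\text{Poi}}$ and $\Hes^{\text{Bin}}\to\Hes^{\text{Poi}}$; this matches the Poisson limit of the binomial and cross-validates both computations against each other.
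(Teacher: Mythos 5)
Your proposal is correct and matches the paper's (implicit) derivation exactly: the lemma is stated without separate proof precisely because it follows by substituting $\mu_{(k)}=\mu^k$ (Poisson) and $\mu_{(k)}=n_{(k)}\pi^k$ (binomial) into the general Jacobian and Hessian of $g(x_1,x_2,x_3)=x_3/(x_1x_2)$ displayed just before it in Appendix \ref{A_SkewInd}, which is what you do. Your simplifications (e.g.\ $\mu_{(3)}/\mu_{(2)}=(n-2)\pi$, $2\mu_{(3)}/\mu_{(2)}^2=2(n-2)/(n_{(2)}\pi)$) all check out, and the $n\to\infty$ consistency check is a sensible addition consistent with the limits noted in Section \ref{Ch_SkewInd}.
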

%
\subsubsection{Proof of Theorem \ref{Ronaldo1}}\label{A_SkewInd_ThRo1}
%
For the proof, we shall utilize our results from Section \ref{A_PoiModel}, where we assumed that the missing data follow a Markov model, \ie $ \uptau(h)=\uptau^2+\uptau(1-\uptau)r^h $.\\

First, let us derive the asymptotic variance. We plug-in our results from Lemma \ref{Batman} $ (i) $ as well as Lemma \ref{Pique} into \eqref{Araujo1}. We get
\begin{align*}
	\sigma_{\hat{I}_\skw^\poi}^2&=\tfrac{1}{T} \Big( d_1^2\sigma_{11} +d_2^2\sigma_{22} +d_3^2\sigma_{33} +2 d_1d_2\sigma_{12} +2 d_1d_3\sigma_{13}  +2 d_2d_3\sigma_{23} \Big)\\
	&=\tfrac{1}{T}\Bigg[\tfrac{1}{\mu^2}\cdot\sigma_{11}+\tfrac{1}{\mu^4}\cdot\sigma_{22}-2\bigg(\tfrac{2}{\mu^2}\cdot\sigma_{11}+\tfrac{1}{\mu^5}\big(3\mu\cdot\sigma_{22}-6\mu^3\sigma_{11} \big) \bigg)
	\\[1ex]&\quad+\tfrac{1}{\mu^6}\Bigg(9\mu^2\cdot\sigma_{22}-27\mu^4\cdot\sigma_{11}    +6\mu^3\bigg(\frac{1}{\uptau}\frac{1+r\rho^3}{1-r\rho^3}+\frac{2(1-r)\rho^3}{(1-r\rho^3)(1-\rho^3)} \bigg) \Bigg)
	\Bigg]\\
	&=\tfrac{1}{T}\Bigg[ \tfrac{-16}{\mu^2}\cdot\sigma_{11}+\tfrac{4}{\mu^4}\cdot\sigma_{22}+6\mu^3\bigg(\frac{1}{\uptau}\frac{1+r\rho^3}{1-r\rho^3}+\frac{2(1-r)\rho^3}{(1-r\rho^3)(1-\rho^3)} \bigg) \Bigg].
\end{align*}
Now, using Lemma \ref{Pique} $(iv) $ yields
\begin{align*}
	\sigma_{\hat{I}_\skw^\poi}^2&=\tfrac{1}{T}\Bigg[ -\tfrac{16}{\mu^2}\cdot\sigma_{11}+\tfrac{4}{\mu^4}\Bigg( 4\mu^2\cdot\sigma_{11}+2\mu^2\bigg( \frac{1}{\uptau}\frac{1+r\rho^2}{1-r\rho^2}+\frac{2(1-r)\rho^2}{(1-r\rho^2)(1-\rho^2)} \bigg)\Bigg)
	\\[1ex]&\quad+6\mu^3\bigg(\frac{1}{\uptau}\frac{1+r\rho^3}{1-r\rho^3}+\frac{2(1-r)\rho^3}{(1-r\rho^3)(1-\rho^3)} \bigg) \Bigg]\\
	&=\tfrac{1}{T\mu^3}\Bigg[8\mu\bigg( \frac{1}{\uptau}\frac{1+r\rho^2}{1-r\rho^2}+\frac{2(1-r)\rho^2}{(1-r\rho^2)(1-\rho^2)} \bigg)+6\bigg( \frac{1}{\uptau}\frac{1+r\rho^3}{1-r\rho^3}+\frac{2(1-r)\rho^3}{(1-r\rho^3)(1-\rho^3)} \bigg)\Bigg].
\end{align*}
For the bias, we use Lemma \ref{Batman} $ (i) $. Again, using  Lemma \ref{Pique} and plugging-in our results into \eqref{Araujo2}, leads to
\begin{align*}
	\mathbb{B}_{\hat{I}_\skw^\poi}&=\tfrac{1}{T} \Big( \tfrac{1}{2}\big( h_{11}\sigma_{11} +h_{22}\sigma_{22} +h_{33}\sigma_{33}\big) +h_{12}\sigma_{12} +h_{13}\sigma_{13}  +h_{23}\sigma_{23} \Big)\\
	&=\tfrac{1}{T}\Big[\tfrac{1}{2}\big( \tfrac{2}{\mu^2}\cdot\sigma_{11} +\tfrac{2}{\mu^4}\cdot\sigma_{22}\big)+\tfrac{1}{\mu^3}\cdot\sigma_{12}-\tfrac{1}{\mu^4}\cdot\sigma_{13}-\tfrac{1}{\mu^5}\cdot\sigma_{23}\Big]\\
	&=\tfrac{1}{T\mu^2}\Big[\sigma_{11} +\tfrac{1}{\mu^2}\cdot\sigma_{22}+\tfrac{1}{\mu}\cdot\sigma_{12}-\tfrac{1}{\mu^2}\cdot\sigma_{13}-\tfrac{1}{\mu^3}\cdot\sigma_{23}\Big]\\
	&=\tfrac{1}{T\mu^2}\Big[6\cdot\sigma_{11} -\tfrac{2}{\mu^2}\cdot\sigma_{22}\Big].
\end{align*}
Now, using Lemma \ref{Pique} $(i)$ and $ (iii) $ yields
\begin{align*}
	\mathbb{B}_{\hat{I}_\skw^\poi}&=\tfrac{1}{T\mu^2}\Bigg[ 6\cdot\sigma_{11}-\tfrac{2}{\mu^2}\Bigg( 4\mu^2\cdot\sigma_{11}+2\mu^2\bigg( \frac{1}{\uptau}\frac{1+r\rho^2}{1-r\rho^2}+\frac{2(1-r)\rho^2}{(1-r\rho^2)(1-\rho^2)} \bigg)\Bigg)		\Bigg]\\
	&=-\tfrac{2}{T\mu^2}\Bigg[ \sigma_{11}+2\bigg( \frac{1}{\uptau}\frac{1+r\rho^2}{1-r\rho^2}+\frac{2(1-r)\rho^2}{(1-r\rho^2)(1-\rho^2)} \bigg)		\Bigg]\\
	&=-\tfrac{2}{T\mu^2}\Bigg[
	\mu\bigg( \frac{1}{\uptau}\frac{1+r\rho}{1-r\rho}+\frac{2(1-r)\rho}{(1-r\rho)(1-\rho)} \bigg)+2\bigg( \frac{1}{\uptau}\frac{1+r\rho^2}{1-r\rho^2}+\frac{2(1-r)\rho^2}{(1-r\rho^2)(1-\rho^2)} \bigg)\Bigg].
\end{align*}
%
\subsubsection{Proof of Theorem \ref{Ronaldo2}}\label{A_SkewInd_ThRo2}
%
For the proof, we utilize our results from Section \ref{A_BinModel}, where we assumed that the missing data follow a Markov model, \ie $ \uptau(h)=\uptau^2+\uptau(1-\uptau)r^h $.\\

First, let us start with the asymptotic variance. Let us plug-in our results from Lemma \ref{Batman} $ (ii) $ as well as Lemma \ref{Pedri} into \eqref{Araujo1}, which leads to 
\begin{align*}
	\sigma_{\hat{I}_\skw^\bin}^2&=\tfrac{1}{T} \Big( d_1^2\sigma_{11} +d_2^2\sigma_{22} +d_3^2\sigma_{33} +2 d_1d_2\sigma_{12} +2 d_1d_3\sigma_{13}  +2 d_2d_3\sigma_{23} \Big)\\
	&=\tfrac{1}{T}\Bigg[\tfrac{(n-2)^2}{n^4\pi^2}\cdot\sigma_{11}+\tfrac{(n-2)^2}{n^4(n_{(2)})^2\pi^2}\cdot\sigma_{22}+\tfrac{1}{n^2(n_{(2)})^2\pi^6}\Bigg( 9(n-2)^2\pi^2\cdot\sigma_{22}
	\\[1ex]&\quad-27(n-1)^2(n-2)^2\pi^4\cdot\sigma_{11}+6\tfrac{(n-2)(1-\pi)^3}{n^2n_{(2)}\pi^3}\bigg(\frac{1}{\uptau}\frac{1+r\rho^3}{1-r\rho^3}+\frac{2(1-r)\rho^3}{(1-r\rho^3)(1-\rho^3)} \bigg)
	\Bigg)
	\\[1ex]&\quad+\tfrac{4(n-1)(n-2)^2}{n^3n_{(2)}\pi^2}\cdot\sigma_{11}-\tfrac{6(n-1)(n-2)^2}{n^3n_{(2)}\pi^2}\cdot\sigma_{11}
	\\[1ex]&\quad-\tfrac{2(n-2)}{n^3(n_{(2)})^2\pi^5}\bigg( 3(n-2)\pi\cdot\sigma_{22}-6(n-1)^2(n-2)\pi^3\cdot\sigma_{11}\bigg)
	\Bigg]\\
	&=\tfrac{1}{T}\Bigg[ \tfrac{-16(n-2)^2}{n^4\pi^2}\cdot\sigma_{11}+\tfrac{4(n-2)^2}{n^2(n_{(2)})^2\pi^4}\cdot\sigma_{22}+6\tfrac{(n-2)(1-\pi)^3}{n^2n_{(2)}\pi^3}\bigg(\frac{1}{\uptau}\frac{1+r\rho^3}{1-r\rho^3}+\frac{2(1-r)\rho^3}{(1-r\rho^3)(1-\rho^3)} \bigg) \Bigg].
\end{align*}
Now, using Lemma \ref{Pedri} $(iv) $ yields
\begin{align*}
	\sigma_{\hat{I}_\skw^\bin}^2&=\tfrac{1}{T}\Bigg[ \tfrac{8(n-2)^2(1-\pi)^2}{n^2n_{(2)}\pi^2} \bigg( \frac{1}{\uptau}\frac{1+r\rho^2}{1-r\rho^2}+\frac{2(1-r)\rho^2}{(1-r\rho^2)(1-\rho^2)} \bigg)\
	\\[1ex]&\quad+6\tfrac{(n-2)(1-\pi)^3}{n^2n_{(2)}\pi^3}\bigg(\frac{1}{\uptau}\frac{1+r\rho^3}{1-r\rho^3}+\frac{2(1-r)\rho^3}{(1-r\rho^3)(1-\rho^3)} \bigg) \Bigg]\\
	&=\tfrac{(n-2)(1-\pi)^3}{(n-1)Tn^3\pi^3}\Bigg[8\cdot\frac{(n-2)\pi}{1-\pi}\bigg( \frac{1}{\uptau}\frac{1+r\rho^2}{1-r\rho^2}+\frac{2(1-r)\rho^2}{(1-r\rho^2)(1-\rho^2)} \bigg)
	\\[1ex]&\quad+6\bigg( \frac{1}{\uptau}\frac{1+r\rho^3}{1-r\rho^3}+\frac{2(1-r)\rho^3}{(1-r\rho^3)(1-\rho^3)} \bigg)\Bigg]\\
	&=\Big(\tfrac{(n-2)(n-\mu)^3}{(n-1)n^3}\Big)\tfrac{1}{T\mu^3}\Bigg[\tfrac{n-2}{n-\mu}\cdot8\mu\bigg( \frac{1}{\uptau}\frac{1+r\rho^2}{1-r\rho^2}+\frac{2(1-r)\rho^2}{(1-r\rho^2)(1-\rho^2)} \bigg)
	\\[1ex]&\quad+6\bigg( \frac{1}{\uptau}\frac{1+r\rho^3}{1-r\rho^3}+\frac{2(1-r)\rho^3}{(1-r\rho^3)(1-\rho^3)} \bigg)\Bigg].
\end{align*}
In the last step, we used that $ \pi=\tfrac{\mu}{n}$. For the bias, we use Lemma \ref{Batman} $ (ii) $. Once again, using  Lemma \ref{Pedri} and plugging-in our results into \eqref{Araujo2},  leads to
\begin{align*}
	\mathbb{B}_{\hat{I}_\skw^\bin}&=\tfrac{1}{T} \Big( \tfrac{1}{2}\big( h_{11}\sigma_{11} +h_{22}\sigma_{22} +h_{33}\sigma_{33}\big) +h_{12}\sigma_{12} +h_{13}\sigma_{13}  +h_{23}\sigma_{23} \Big)\\
	&=\tfrac{1}{T}\Big[\tfrac{1}{2}\bigg( \tfrac{2(n-2)}{n^3\pi^2}\cdot\sigma_{11} +\tfrac{2(n-2)}{n(n_{(2)})^2\pi^4}\cdot\sigma_{22}\bigg)+\tfrac{2(n-2)}{n^3\pi^2}\cdot\sigma_{11}-\tfrac{3(n-2)}{n^3\pi^2}\cdot\sigma_{11}
	\\[1ex]&\quad-\tfrac{1}{n(n_{(2)})^2\pi^5}\bigg(3(n-2)\pi\sigma_{22}-6(n-1)^2(n-2)^2\pi^4\sigma_{11} \bigg)\Big]\\
	&=\tfrac{1}{T}\Big[\tfrac{6(n-2)}{n^3\pi^2}\cdot\sigma_{11} -\tfrac{2(n-2)}{n(n_{(2)})^2\pi^4}\cdot\sigma_{22}\Big].
\end{align*}
Now, using Lemma \ref{Pedri} $(i)$ and $ (iv) $ yields
\begin{align*}
	\mathbb{B}_{\hat{I}_\skw^\bin}&=-\tfrac{2}{T}\Bigg[
	\tfrac{(n-2)(1-\pi)}{n^2\pi}\bigg( \frac{1}{\uptau}\frac{1+r\rho}{1-r\rho}+\frac{2(1-r)\rho}{(1-r\rho)(1-\rho)} \bigg)
	\\[1ex]&\quad+\tfrac{2(n-2)(1-\pi)^2}{n^2n_{(2)}\pi^2}\bigg( \frac{1}{\uptau}\frac{1+r\rho^2}{1-r\rho^2}+\frac{2(1-r)\rho^2}{(1-r\rho^2)(1-\rho^2)} \bigg)\Bigg]\\
	&=-\tfrac{2(n-2)(1-\pi)^2}{(n-1)Tn^2\pi^2}\Bigg[\frac{(n-1)\pi}{1-\pi}\bigg( \frac{1}{\uptau}\frac{1+r\rho}{1-r\rho}+\frac{2(1-r)\rho}{(1-r\rho)(1-\rho)} \bigg)
	\\[1ex]&\quad+2\bigg( \frac{1}{\uptau}\frac{1+r\rho^2}{1-r\rho^2}+\frac{2(1-r)\rho^2}{(1-r\rho^2)(1-\rho^2)} \bigg)\Bigg]\\
	&=-\Big(\tfrac{(n-2)(n-\mu)^2}{(n-1)n^2}\Big)\tfrac{2}{T\mu^2}\Bigg[\tfrac{n-1}{n-\mu}\cdot\mu\bigg( \frac{1}{\uptau}\frac{1+r\rho}{1-r\rho}+\frac{2(1-r)\rho}{(1-r\rho)(1-\rho)} \bigg)
	\\[1ex]&\quad+2\bigg( \frac{1}{\uptau}\frac{1+r\rho^2}{1-r\rho^2}+\frac{2(1-r)\rho^2}{(1-r\rho^2)(1-\rho^2)} \bigg)\Bigg].
\end{align*}
%
\subsection{Autocorrelation and Missing Data}\label{A_AppData}
%
In this section, we briefly describe our method for estimating the autocorrelation function for cloud from data with missing values. We use the method provided by \cite{DunRob81}. Let us recall that the amplitude modulation of $ (\X_t) $ is $ (O_t\cdot\X_t) $, implying that we estimate $ \bmu $ by 
\begin{align*}
	\hat{\bmu} = \frac{\overline{O\,\X}}{\overline{O}}.
\end{align*}
In \citet[p. 260f]{DunRob81}, they first consider a known mean of zero. When the mean is unknown a mean correction is necessary, see \citet[p. 277f]{DunRob81}.
As a result, we can estimate the autocorrelation function as follows:
\begin{align}\label{rhodunsmuir}
	\hat{\rho}_{DR}(l)=\frac{\hat{C}_{DR}(l)}{\hat{C}_{DR}(0)},\qquad 0\leq l<T,
\end{align}
where $ \hat{C}_{DR}(l)=\frac{1}{T}\sum_{t=1}^{T-l}O_tO_{t+l}(X_t -\hat{\mu}) ( X_{t+l} -\hat{\mu}) $. If we now assume that we have \iid data, the asymptotic variance, according to \cite[p. 274]{DunRob81}, is
\begin{align*}
	\frac{1}{\uptau(l)},\quad \text{for } l=0,\ldots,T-l,
\end{align*}
where $ \uptau(l)=\tfrac{1}{T}\sum_{t=1}^{T-l}\e[O_tO_{t+l}] =\gamma_{O}(l)+\uptau^2$. 
If using $\hat{\rho}_{DR}(l)$ to test the null hypothesis of serial independence at lag~$l$ on level~$\alpha$, then the critical value is  $ \pm z_{1-\frac{\alpha}{2}}/\sqrt{\uptau(l)} $, and is therefore dependent on the lag~$l$.
%
\subsection{Tables}\label{A_Tables}
%
In this section, we present the full tables from our simulation study with 10,000 replications per scenario. For unbounded counts, the Poi-INAR(1) model is assumed, which has the Poisson marginal distribution \poi($\mu$) with $\mu\in(0,\infty)$. For bounded counts, we assume the BAR(1) model, which has the binomial marginal distribution \bin($n,\pi$) with $\pi\in(0,1)$ and $ n\in\bbn $. 
\begin{table}[H]
	\centering
	\caption{Asymptotic vs. simulated mean and standard deviation (SD) of $\hat{I}^\poi$ and $\hat{I}^\poi_{\skw}$ data; time series of length $ T $ is generated by 
		Poi-INAR(1)  counts with fixed $\mu=3$, $ \rho=0.5 $.}
	\label{A_tabPoi}
	\smallskip
	\scalebox{0.56}{
		\begin{tabular}{llr|rlrl|rlrl}
			\toprule
			&&& \multicolumn{2}{c}{mean of $\hat{I}^\poi$} & \multicolumn{2}{c|}{SD of $\hat{I}^\poi$}& \multicolumn{2}{c}{mean of $\hat{I}^\poi_{\skw}$} & \multicolumn{2}{c}{SD of $\hat{I}^\poi_{\skw}$} \\
			$\uptau$ & $r$ & $T$  & $ \quad $sim$ \quad $ & asym$ \quad $ & $ \quad $sim$ \quad $ & asym$ \quad $  & $ \quad $sim$ \quad $ & asym$ \quad $ & $ \quad $sim$ \quad $ & asym$ \quad $ \\
			\midrule
			1    & 0    & 100 & 0.971 & 0.970 & 0.177 & 0.183 & 0.974 & 0.973 & 0.124 & 0.133 \\ 
			0.8 & 0    & 100 & 0.967 & 0.968 & 0.189 & 0.196 & 0.971 & 0.970 & 0.132 & 0.143 \\ 
			0.6 & 0    & 100 & 0.965 & 0.963 & 0.210 & 0.216 & 0.966 & 0.965 & 0.146 & 0.158 \\ 
			0.4 & 0    & 100 & 0.955 & 0.955 & 0.238 & 0.252 & 0.956 & 0.956 & 0.164 & 0.185 \\ \midrule
			1    & 0.3 & 100 & 0.972 & 0.970 & 0.177 & 0.183 & 0.974 & 0.973 & 0.123 & 0.133 \\ 
			0.8 & 0.3 & 100 & 0.967 & 0.967 & 0.190 & 0.198 & 0.971 & 0.969 & 0.133 & 0.144 \\ 
			0.6 & 0.3 & 100 & 0.962 & 0.961 & 0.213 & 0.221 & 0.965 & 0.963 & 0.148 & 0.162 \\ 
			0.4 & 0.3 & 100 & 0.951 & 0.950 & 0.249 & 0.261 & 0.954 & 0.951 & 0.172 & 0.192 \\ \midrule
			1    & 0.6 & 100 & 0.971 & 0.970 & 0.177 & 0.183 & 0.974 & 0.973 & 0.124 & 0.133 \\ 
			0.8 & 0.6 & 100 & 0.968 & 0.965 & 0.191 & 0.200 & 0.971 & 0.968 & 0.132 & 0.146 \\ 
			0.6 & 0.6 & 100 & 0.956 & 0.958 & 0.219 & 0.227 & 0.961 & 0.960 & 0.151 & 0.166 \\ 
			0.4 & 0.6 & 100 & 0.942 & 0.942 & 0.257 & 0.272 & 0.947 & 0.945 & 0.175 & 0.199 \\ 
			\midrule
			1    & 0    & 250 & 0.988 & 0.988 & 0.114 & 0.115 & 0.989 & 0.989 & 0.081 & 0.084 \\ 
			0.8 & 0    & 250 & 0.987 & 0.987 & 0.124 & 0.124 & 0.989 & 0.988 & 0.089 & 0.090 \\ 
			0.6 & 0    & 250 & 0.987 & 0.985 & 0.134 & 0.137 & 0.987 & 0.986 & 0.096 & 0.100 \\ 
			0.4 & 0    & 250 & 0.980 & 0.982 & 0.157 & 0.159 & 0.981 & 0.982 & 0.112 & 0.117 \\ \midrule
			1    & 0.3 & 250 & 0.990 & 0.988 & 0.114 & 0.115 & 0.990 & 0.989 & 0.082 & 0.084 \\ 
			0.8 & 0.3 & 250 & 0.988 & 0.987 & 0.122 & 0.125 & 0.988 & 0.988 & 0.087 & 0.091 \\ 
			0.6 & 0.3 & 250 & 0.984 & 0.984 & 0.139 & 0.140 & 0.985 & 0.985 & 0.098 & 0.102 \\ 
			0.4 & 0.3 & 250 & 0.981 & 0.980 & 0.162 & 0.165 & 0.982 & 0.981 & 0.115 & 0.121 \\ \midrule
			1    & 0.6 & 250 & 0.989 & 0.988 & 0.113 & 0.115 & 0.989 & 0.989 & 0.081 & 0.084 \\ 
			0.8 & 0.6 & 250 & 0.987 & 0.986 & 0.126 & 0.127 & 0.987 & 0.987 & 0.089 & 0.092 \\ 
			0.6 & 0.6 & 250 & 0.984 & 0.983 & 0.140 & 0.143 & 0.985 & 0.984 & 0.100 & 0.105 \\ 
			0.4 & 0.6 & 250 & 0.975 & 0.977 & 0.169 & 0.172 & 0.978 & 0.978 & 0.118 & 0.126 \\ \midrule
			1    & 0    & 500 & 0.995 & 0.994 & 0.081 & 0.082 & 0.995 & 0.995 & 0.059 & 0.059 \\ 
			0.8 & 0    & 500 & 0.994 & 0.994 & 0.088 & 0.088 & 0.995 & 0.994 & 0.063 & 0.064 \\ 
			0.6 & 0    & 500 & 0.992 & 0.993 & 0.096 & 0.097 & 0.992 & 0.993 & 0.069 & 0.071 \\ 
			0.4 & 0    & 500 & 0.988 & 0.991 & 0.109 & 0.113 & 0.989 & 0.991 & 0.080 & 0.083 \\ \midrule
			1    & 0.3 & 500 & 0.995 & 0.994 & 0.081 & 0.082 & 0.995 & 0.995 & 0.059 & 0.059 \\ 
			0.8 & 0.3 & 500 & 0.993 & 0.993 & 0.088 & 0.088 & 0.993 & 0.994 & 0.064 & 0.065 \\ 
			0.6 & 0.3 & 500 & 0.991 & 0.992 & 0.097 & 0.099 & 0.991 & 0.993 & 0.071 & 0.072 \\ 
			0.4 & 0.3 & 500 & 0.988 & 0.990 & 0.116 & 0.117 & 0.989 & 0.990 & 0.084 & 0.086 \\ \midrule
			1    & 0.6 & 500 & 0.995 & 0.994 & 0.081 & 0.082 & 0.994 & 0.995 & 0.058 & 0.059 \\ 
			0.8 & 0.6 & 500 & 0.994 & 0.993 & 0.088 & 0.090 & 0.993 & 0.994 & 0.063 & 0.065 \\ 
			0.6 & 0.6 & 500 & 0.991 & 0.992 & 0.100 & 0.101 & 0.991 & 0.992 & 0.072 & 0.074 \\ 
			0.4 & 0.6 & 500 & 0.987 & 0.988 & 0.121 & 0.122 & 0.988 & 0.989 & 0.087 & 0.089 \\ \midrule
			1    & 0    & 1000 & 0.997 & 0.997 & 0.058 & 0.058 & 0.997 & 0.997 & 0.042 & 0.042 \\ 
			0.8 & 0    & 1000 & 0.997 & 0.997 & 0.061 & 0.062 & 0.997 & 0.997 & 0.045 & 0.045 \\ 
			0.6 & 0    & 1000 & 0.996 & 0.996 & 0.068 & 0.068 & 0.996 & 0.997 & 0.050 & 0.050 \\ 
			0.4 & 0    & 1000 & 0.995 & 0.996 & 0.080 & 0.080 & 0.996 & 0.996 & 0.059 & 0.059 \\ \midrule
			1    & 0.3 & 1000 & 0.997 & 0.997 & 0.058 & 0.058 & 0.998 & 0.997 & 0.042 & 0.042 \\ 
			0.8 & 0.3 & 1000 & 0.998 & 0.997 & 0.063 & 0.063 & 0.998 & 0.997 & 0.045 & 0.046 \\ 
			0.6 & 0.3 & 1000 & 0.996 & 0.996 & 0.070 & 0.070 & 0.997 & 0.996 & 0.051 & 0.051 \\ 
			0.4 & 0.3 & 1000 & 0.993 & 0.995 & 0.083 & 0.083 & 0.994 & 0.995 & 0.060 & 0.061 \\ \midrule
			1    & 0.6 & 1000 & 0.998 & 0.997 & 0.058 & 0.058 & 0.998 & 0.997 & 0.042 & 0.042 \\ 
			0.8 & 0.6 & 1000 & 0.997 & 0.997 & 0.064 & 0.063 & 0.997 & 0.997 & 0.046 & 0.046 \\ 
			0.6 & 0.6 & 1000 & 0.997 & 0.996 & 0.071 & 0.072 & 0.996 & 0.996 & 0.052 & 0.052 \\ 
			0.4 & 0.6 & 1000 & 0.994 & 0.994 & 0.086 & 0.086 & 0.994 & 0.995 & 0.062 & 0.063 \\ 
			\bottomrule
	\end{tabular}}
\end{table}

\begin{table}[H]
	\centering
	\caption{Asymptotic vs. simulated mean and SD of $\hat{I}^\bin$ and $\hat{I}^\bin_{\skw}$ data; time series of length $ T $ is generated by 
		BAR(1)  counts with  fixed $\mu=3$, $ \rho=0.5 $ and $ n=10 $.}
	\label{tabBin10}
	\smallskip
	\scalebox{0.56}{
		\begin{tabular}{llr|rlrl|rlrl}
			\toprule
			&&& \multicolumn{2}{c}{mean of $\hat{I}^\bin$} & \multicolumn{2}{c|}{SD of $\hat{I}^\bin$}& \multicolumn{2}{c}{mean of $\hat{I}^\bin_{\skw}$} & \multicolumn{2}{c}{SD of $\hat{I}^\bin_{\skw}$} \\
			$\uptau$ & $r$ & $T$  & $ \quad $sim$ \quad $ & asym$ \quad $ & $ \quad $sim$ \quad $ & asym$ \quad $  & $ \quad $sim$ \quad $ & asym$ \quad $ & $ \quad $sim$ \quad $ & asym$ \quad $ \\
			\midrule
			1    & 0    & 100 & 0.972 & 0.973 & 0.169 & 0.173 & 0.785 & 0.786 & 0.075 & 0.078 \\ 
			0.8 & 0    & 100 & 0.971 & 0.971 & 0.181 & 0.186 & 0.784 & 0.784 & 0.081 & 0.084 \\ 
			0.6 & 0    & 100 & 0.964 & 0.967 & 0.201 & 0.205 & 0.781 & 0.782 & 0.091 & 0.092 \\ 
			0.4 & 0    & 100 & 0.960 & 0.960 & 0.230 & 0.239 & 0.777 & 0.777 & 0.103 & 0.108 \\ \midrule
			1    & 0.3 & 100 & 0.973 & 0.973 & 0.170 & 0.173 & 0.786 & 0.786 & 0.076 & 0.078 \\ 
			0.8 & 0.3 & 100 & 0.973 & 0.970 & 0.180 & 0.188 & 0.786 & 0.784 & 0.081 & 0.084 \\ 
			0.6 & 0.3 & 100 & 0.963 & 0.965 & 0.203 & 0.210 & 0.780 & 0.781 & 0.091 & 0.094 \\ 
			0.4 & 0.3 & 100 & 0.961 & 0.955 & 0.241 & 0.248 & 0.777 & 0.775 & 0.109 & 0.112 \\ \midrule
			1    & 0.6 & 100 & 0.972 & 0.973 & 0.169 & 0.173 & 0.785 & 0.786 & 0.075 & 0.078 \\ 
			0.8 & 0.6 & 100 & 0.970 & 0.969 & 0.186 & 0.190 & 0.783 & 0.783 & 0.084 & 0.085 \\ 
			0.6 & 0.6 & 100 & 0.961 & 0.962 & 0.210 & 0.215 & 0.779 & 0.779 & 0.094 & 0.097 \\ 
			0.4 & 0.6 & 100 & 0.946 & 0.948 & 0.254 & 0.258 & 0.771 & 0.771 & 0.115 & 0.116 \\ \midrule
			1    & 0    & 250 & 0.990 & 0.989 & 0.108 & 0.110 & 0.795 & 0.794 & 0.049 & 0.049 \\ 
			0.8 & 0    & 250 & 0.987 & 0.988 & 0.118 & 0.117 & 0.793 & 0.794 & 0.053 & 0.053 \\ 
			0.6 & 0    & 250 & 0.986 & 0.987 & 0.128 & 0.130 & 0.792 & 0.793 & 0.057 & 0.058 \\ 
			0.4 & 0    & 250 & 0.986 & 0.984 & 0.148 & 0.151 & 0.792 & 0.791 & 0.067 & 0.068 \\ \midrule
			1    & 0.3 & 250 & 0.988 & 0.989 & 0.107 & 0.110 & 0.794 & 0.794 & 0.048 & 0.049 \\ 
			0.8 & 0.3 & 250 & 0.984 & 0.988 & 0.117 & 0.119 & 0.792 & 0.793 & 0.053 & 0.053 \\ 
			0.6 & 0.3 & 250 & 0.985 & 0.986 & 0.131 & 0.133 & 0.792 & 0.792 & 0.058 & 0.060 \\ 
			0.4 & 0.3 & 250 & 0.982 & 0.982 & 0.153 & 0.157 & 0.790 & 0.790 & 0.069 & 0.071 \\ \midrule
			1    & 0.6 & 250 & 0.988 & 0.989 & 0.109 & 0.110 & 0.794 & 0.794 & 0.049 & 0.049 \\ 
			0.8 & 0.6 & 250 & 0.986 & 0.988 & 0.119 & 0.120 & 0.793 & 0.793 & 0.054 & 0.054 \\ 
			0.6 & 0.6 & 250 & 0.986 & 0.985 & 0.135 & 0.136 & 0.792 & 0.792 & 0.061 & 0.061 \\ 
			0.4 & 0.6 & 250 & 0.979 & 0.979 & 0.162 & 0.163 & 0.789 & 0.788 & 0.072 & 0.073 \\ \midrule
			1    & 0    & 500 & 0.995 & 0.995 & 0.077 & 0.077 & 0.797 & 0.797 & 0.034 & 0.035 \\ 
			0.8 & 0    & 500 & 0.994 & 0.994 & 0.082 & 0.083 & 0.797 & 0.797 & 0.037 & 0.037 \\ 
			0.6 & 0    & 500 & 0.992 & 0.993 & 0.090 & 0.092 & 0.796 & 0.796 & 0.040 & 0.041 \\ 
			0.4 & 0    & 500 & 0.993 & 0.992 & 0.106 & 0.107 & 0.796 & 0.795 & 0.048 & 0.048 \\ \midrule
			1    & 0.3 & 500 & 0.995 & 0.995 & 0.076 & 0.077 & 0.797 & 0.797 & 0.034 & 0.035 \\ 
			0.8 & 0.3 & 500 & 0.993 & 0.994 & 0.083 & 0.084 & 0.797 & 0.797 & 0.038 & 0.038 \\ 
			0.6 & 0.3 & 500 & 0.992 & 0.993 & 0.093 & 0.094 & 0.796 & 0.796 & 0.042 & 0.042 \\ 
			0.4 & 0.3 & 500 & 0.990 & 0.991 & 0.108 & 0.111 & 0.795 & 0.795 & 0.049 & 0.050 \\ \midrule
			1    & 0.6 & 500 & 0.994 & 0.995 & 0.077 & 0.077 & 0.797 & 0.797 & 0.034 & 0.035 \\ 
			0.8 & 0.6 & 500 & 0.993 & 0.994 & 0.085 & 0.085 & 0.796 & 0.797 & 0.038 & 0.038 \\ 
			0.6 & 0.6 & 500 & 0.991 & 0.992 & 0.096 & 0.096 & 0.795 & 0.796 & 0.043 & 0.043 \\ 
			0.4 & 0.6 & 500 & 0.990 & 0.990 & 0.115 & 0.115 & 0.795 & 0.794 & 0.051 & 0.052 \\ \midrule
			1    & 0    & 1000 & 0.997 & 0.997 & 0.055 & 0.055 & 0.799 & 0.799 & 0.025 & 0.025 \\ 
			0.8 & 0    & 1000 & 0.997 & 0.997 & 0.059 & 0.059 & 0.799 & 0.798 & 0.026 & 0.026 \\ 
			0.6 & 0    & 1000 & 0.997 & 0.997 & 0.065 & 0.065 & 0.798 & 0.798 & 0.029 & 0.029 \\ 
			0.4 & 0    & 1000 & 0.995 & 0.996 & 0.075 & 0.075 & 0.797 & 0.798 & 0.034 & 0.034 \\ \midrule
			1    & 0.3 & 1000 & 0.997 & 0.997 & 0.054 & 0.055 & 0.798 & 0.799 & 0.024 & 0.025 \\ 
			0.8 & 0.3 & 1000 & 0.996 & 0.997 & 0.059 & 0.059 & 0.798 & 0.798 & 0.026 & 0.027 \\ 
			0.6 & 0.3 & 1000 & 0.996 & 0.996 & 0.066 & 0.066 & 0.798 & 0.798 & 0.030 & 0.030 \\ 
			0.4 & 0.3 & 1000 & 0.996 & 0.995 & 0.079 & 0.078 & 0.798 & 0.797 & 0.036 & 0.035 \\ \midrule
			1    & 0.6 & 1000 & 0.997 & 0.997 & 0.054 & 0.055 & 0.798 & 0.799 & 0.024 & 0.025 \\ 
			0.8 & 0.6 & 1000 & 0.997 & 0.997 & 0.060 & 0.060 & 0.798 & 0.798 & 0.027 & 0.027 \\ 
			0.6 & 0.6 & 1000 & 0.997 & 0.996 & 0.068 & 0.068 & 0.798 & 0.798 & 0.031 & 0.031 \\ 
			0.4 & 0.6 & 1000 & 0.995 & 0.995 & 0.081 & 0.082 & 0.797 & 0.797 & 0.037 & 0.037 \\ 
			\bottomrule
	\end{tabular}}
\end{table}

\begin{table}[H]
	\centering
	\caption{Asymptotic vs. simulated mean and SD of $\hat{I}^\bin$ and $\hat{I}^\bin_{\skw}$ data; time series of length $ T $ is generated by 
		BAR(1)  counts with  fixed $\mu=3$, $ \rho=0.5 $ and $ n=25 $.}
	\label{tabBin25}
	\smallskip
	\scalebox{0.56}{
		\begin{tabular}{llr|rlrl|rlrl}
			\toprule
			&&& \multicolumn{2}{c}{mean of $\hat{I}^\bin$} & \multicolumn{2}{c|}{SD of $\hat{I}^\bin$}& \multicolumn{2}{c}{mean of $\hat{I}^\bin_{\skw}$} & \multicolumn{2}{c}{SD of $\hat{I}^\bin_{\skw}$} \\
			$\uptau$ & $r$ & $T$  & $ \quad $sim$ \quad $ & asym$ \quad $ & $ \quad $sim$ \quad $ & asym$ \quad $  & $ \quad $sim$ \quad $ & asym$ \quad $ & $ \quad $sim$ \quad $ & asym$ \quad $ \\
			\midrule
			1    & 0    & 100 & 0.975 & 0.971 & 0.173 & 0.179 & 0.901 & 0.898 & 0.103 & 0.109 \\ 
			0.8 & 0    & 100 & 0.968 & 0.969 & 0.185 & 0.192 & 0.896 & 0.896 & 0.110 & 0.118 \\ 
			0.6 & 0    & 100 & 0.968 & 0.965 & 0.206 & 0.212 & 0.895 & 0.893 & 0.123 & 0.130 \\ 
			0.4 & 0    & 100 & 0.955 & 0.957 & 0.238 & 0.247 & 0.885 & 0.885 & 0.142 & 0.153 \\ \midrule
			1    & 0.3 & 100 & 0.975 & 0.971 & 0.174 & 0.179 & 0.902 & 0.898 & 0.104 & 0.109 \\ 
			0.8 & 0.3 & 100 & 0.971 & 0.968 & 0.190 & 0.194 & 0.897 & 0.896 & 0.113 & 0.119 \\ 
			0.6 & 0.3 & 100 & 0.963 & 0.963 & 0.209 & 0.217 & 0.891 & 0.891 & 0.124 & 0.133 \\ 
			0.4 & 0.3 & 100 & 0.952 & 0.952 & 0.252 & 0.256 & 0.883 & 0.882 & 0.147 & 0.158 \\ \midrule
			1    & 0.6 & 100 & 0.972 & 0.971 & 0.173 & 0.179 & 0.900 & 0.898 & 0.103 & 0.109 \\ 
			0.8 & 0.6 & 100 & 0.968 & 0.967 & 0.190 & 0.196 & 0.896 & 0.895 & 0.112 & 0.120 \\ 
			0.6 & 0.6 & 100 & 0.965 & 0.959 & 0.214 & 0.222 & 0.892 & 0.889 & 0.127 & 0.136 \\ 
			0.4 & 0.6 & 100 & 0.943 & 0.944 & 0.259 & 0.266 & 0.877 & 0.877 & 0.152 & 0.164 \\ \midrule
			1    & 0    & 250 & 0.988 & 0.988 & 0.112 & 0.113 & 0.911 & 0.911 & 0.068 & 0.069 \\ 
			0.8 & 0    & 250 & 0.987 & 0.988 & 0.119 & 0.121 & 0.910 & 0.910 & 0.072 & 0.074 \\ 
			0.6 & 0    & 250 & 0.986 & 0.986 & 0.132 & 0.134 & 0.909 & 0.909 & 0.080 & 0.082 \\ 
			0.4 & 0    & 250 & 0.982 & 0.983 & 0.153 & 0.156 & 0.905 & 0.906 & 0.093 & 0.096 \\ \midrule
			1    & 0.3 & 250 & 0.989 & 0.988 & 0.112 & 0.113 & 0.912 & 0.911 & 0.068 & 0.069 \\ 
			0.8 & 0.3 & 250 & 0.986 & 0.987 & 0.121 & 0.123 & 0.909 & 0.910 & 0.072 & 0.075 \\ 
			0.6 & 0.3 & 250 & 0.984 & 0.985 & 0.135 & 0.137 & 0.908 & 0.908 & 0.082 & 0.084 \\ 
			0.4 & 0.3 & 250 & 0.982 & 0.981 & 0.160 & 0.162 & 0.905 & 0.905 & 0.096 & 0.100 \\ \midrule
			1    & 0.6 & 250 & 0.988 & 0.988 & 0.110 & 0.113 & 0.911 & 0.911 & 0.067 & 0.069 \\ 
			0.8 & 0.6 & 250 & 0.987 & 0.987 & 0.123 & 0.124 & 0.909 & 0.910 & 0.074 & 0.076 \\ 
			0.6 & 0.6 & 250 & 0.982 & 0.984 & 0.137 & 0.140 & 0.906 & 0.907 & 0.083 & 0.086 \\ 
			0.4 & 0.6 & 250 & 0.979 & 0.978 & 0.167 & 0.168 & 0.904 & 0.903 & 0.100 & 0.104 \\ \midrule
			1    & 0    & 500 & 0.995 & 0.994 & 0.079 & 0.080 & 0.916 & 0.916 & 0.048 & 0.049 \\ 
			0.8 & 0    & 500 & 0.995 & 0.994 & 0.085 & 0.086 & 0.916 & 0.915 & 0.052 & 0.053 \\ 
			0.6 & 0    & 500 & 0.991 & 0.993 & 0.093 & 0.095 & 0.914 & 0.915 & 0.057 & 0.058 \\ 
			0.4 & 0    & 500 & 0.991 & 0.991 & 0.109 & 0.110 & 0.913 & 0.913 & 0.067 & 0.068 \\ \midrule
			1    & 0.3 & 500 & 0.994 & 0.994 & 0.080 & 0.080 & 0.916 & 0.916 & 0.048 & 0.049 \\ 
			0.8 & 0.3 & 500 & 0.991 & 0.994 & 0.087 & 0.087 & 0.914 & 0.915 & 0.053 & 0.053 \\ 
			0.6 & 0.3 & 500 & 0.993 & 0.993 & 0.096 & 0.097 & 0.914 & 0.914 & 0.059 & 0.060 \\ 
			0.4 & 0.3 & 500 & 0.990 & 0.990 & 0.114 & 0.114 & 0.912 & 0.912 & 0.069 & 0.071 \\ \midrule
			1    & 0.6 & 500 & 0.993 & 0.994 & 0.080 & 0.080 & 0.915 & 0.916 & 0.049 & 0.049 \\ 
			0.8 & 0.6 & 500 & 0.995 & 0.993 & 0.089 & 0.088 & 0.916 & 0.915 & 0.054 & 0.054 \\ 
			0.6 & 0.6 & 500 & 0.992 & 0.992 & 0.099 & 0.099 & 0.913 & 0.914 & 0.060 & 0.061 \\ 
			0.4 & 0.6 & 500 & 0.988 & 0.989 & 0.117 & 0.119 & 0.911 & 0.911 & 0.072 & 0.073 \\ \midrule
			1    & 0    & 1000 & 0.998 & 0.997 & 0.057 & 0.057 & 0.918 & 0.918 & 0.035 & 0.035 \\ 
			0.8 & 0    & 1000 & 0.997 & 0.997 & 0.061 & 0.061 & 0.917 & 0.918 & 0.037 & 0.037 \\ 
			0.6 & 0    & 1000 & 0.996 & 0.996 & 0.067 & 0.067 & 0.916 & 0.917 & 0.041 & 0.041 \\ 
			0.4 & 0    & 1000 & 0.995 & 0.996 & 0.078 & 0.078 & 0.916 & 0.917 & 0.048 & 0.048 \\ \midrule
			1    & 0.3 & 1000 & 0.997 & 0.997 & 0.056 & 0.057 & 0.917 & 0.918 & 0.034 & 0.035 \\ 
			0.8 & 0.3 & 1000 & 0.997 & 0.997 & 0.061 & 0.061 & 0.917 & 0.918 & 0.037 & 0.038 \\ 
			0.6 & 0.3 & 1000 & 0.997 & 0.996 & 0.068 & 0.068 & 0.918 & 0.917 & 0.042 & 0.042 \\ 
			0.4 & 0.3 & 1000 & 0.995 & 0.995 & 0.080 & 0.081 & 0.916 & 0.916 & 0.049 & 0.050 \\ \midrule
			1    & 0.6 & 1000 & 0.997 & 0.997 & 0.056 & 0.057 & 0.918 & 0.918 & 0.034 & 0.035 \\ 
			0.8 & 0.6 & 1000 & 0.998 & 0.997 & 0.062 & 0.062 & 0.918 & 0.917 & 0.038 & 0.038 \\ 
			0.6 & 0.6 & 1000 & 0.996 & 0.996 & 0.071 & 0.070 & 0.917 & 0.917 & 0.043 & 0.043 \\ 
			0.4 & 0.6 & 1000 & 0.995 & 0.994 & 0.084 & 0.084 & 0.916 & 0.916 & 0.051 & 0.052 \\ 
			\bottomrule
	\end{tabular}}
\end{table}
%
\subsection{Raw moments}\label{A_RawMom}
%
In this section, we present alternative results for the CLT and the Poisson index of dispersion if we are concerned with raw moments. Therefore, let us consider the vector $ (\Y_t )$ which is given by
\begin{equation}
	\Y_t\coloneqq(Y_{t,1},\ldots,Y_{t,m})^\top=(X_t,X_t^2,\ldots,X_t^m)^\top.
\end{equation}
Furthermore, we define the amplitude modulation of $ (\Y_t) $ as $ (O_t\cdot\Y_t) $. Let us define $ \overline{O\,\Y}\coloneqq \tfrac{1}{n}\sum_{t=1}^{n}O_t\Y_t $.  Then, the mean of $\overline{O\,\Y}  $ is given by 
\begin{align}\label{Johnson}
	\e[\overline{O\,\Y}] = \tfrac{1}{n}\sum_{t=1}^{n}\e[O_t]\e[\Y_t]=\Big(\tfrac{1}{n}\sum_{t=1}^{n}\e[O_t]\Big)\bnu,
\end{align}
with $\bnu\coloneqq(\mu,\ldots,\mu_m)^\top=(\e[X_t],\ldots,\e[X_t^m])^\top  $. This implies to estimate $ \bnu $ by
\begin{align}\label{Ewing}
	\hat{\bnu}\coloneqq \frac{\tfrac{1}{n}\sum_{t=1}^{n}O_t\Y_t}{\tfrac{1}{n}\sum_{t=1}^{n}O_t}\eqqcolon \frac{\overline{O\,\Y}}{\overline{O}}.
\end{align}
\begin{definition}
	Let us denote the mixed raw moment as 
	\begin{align}\label{West}
		\mu_{k,l}(h)&\coloneqq\e[X_t^k\cdot X_{t-h}^l],
	\end{align} 
	respectively with $ 0\leq k\leq l $, $ k,l\in\bbn $ and $ h\in \bbz $.  In addition, we declare  

	\vspace{0.3cm} 
	\begin{tabular}{llll}
		$(i)$ &$ \mu_{0,0}(h) = 1, $ &$(iii)$ &$  \mu_{0,l}(h) = \mu_l,  $\\
		$(ii)$& $ \mu_{k,0}(h) = \mu_k, $&$(iv)$ &$  \mu_{k,l}(h) = \mu_{l,k}(-h)$.
	\end{tabular} 
	\vspace{0.3cm} 
	
	Note that for $ h=0 $, one can simplify $ \mu_{k,l}(0)=\mu_{k+l} $.
\end{definition}
%
\subsubsection{Central Limit Theorem}\label{A_RawCLT}
%
The following proposition can be seen as the analogon to Theorem \ref{CLT_Fac}.
\begin{proposition}\label{CLT_Raw}
	Let $ \overline{\Y^\ast}=(\overline{O},\overline{O\,\Y}^\top)^\top $, and define the function $ \f:[0,1]\times[0,\infty)^m\rightarrow[0,\infty)^m $. Then, $ \hat{\bnu}=\f(\overline{\Y^\ast}) $, $ \bnu=\f(\bnu^\ast) $. Thus, the Delta-Method implies 
	\begin{equation}\label{Iniesta}
		\sqrt{T}\Big(\hat{\bnu}-\bnu\Big) \ \xrightarrow{\text{d}}\ \norm\Big(\mathbf{0}, \bold{\Sigma}\Big)\quad \text{with}\quad \bold{\Sigma}=(\sigma_{ij})_{i,j=1,\ldots,m},
	\end{equation}
	where
	\begin{equation}
		\label{Maradona}
		\sigma_{ij}=\tfrac{1}{\uptau}(\mu_{i+j}-\mu_i\mu_j)+\tfrac{1}{\uptau^2}\sum_{h=1}^\infty \uptau(h)\Big(\mu_{j,i}(h)+\mu_{i,j}(h)-2\mu_i\mu_j\Big).
	\end{equation}
	Here, the bias satisfies $ \e[\hat{\bnu}-\bnu]=0+\landau (T^{-1}) $.
\end{proposition}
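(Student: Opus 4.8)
The plan is to replay the argument of Appendix \ref{A_CLT} almost verbatim, simply replacing the factorial quantities $\mu_{(k)}$ and $\mu_{(k,l)}(h)$ by the raw moments $\mu_k$ and the mixed raw moments $\mu_{k,l}(h)$. First I would introduce the augmented vector
\begin{equation*}
	\Y_t^\ast\coloneqq(Y_{t,0}^\ast,\ldots,Y_{t,m}^\ast)^\top=O_t(1,X_t,X_t^2,\ldots,X_t^m)^\top,
\end{equation*}
so that $\overline{\Y^\ast}=(\overline{O},\overline{O\,\Y}^\top)^\top$ and $\bnu^\ast\coloneqq\e[\Y_t^\ast]=\uptau(1,\bnu^\top)^\top$. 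The first task is a multivariate CLT $\sqrt{T}(\overline{\Y^\ast}-\bnu^\ast)\xrightarrow{\text{d}}\norm(\mathbf{0},\bold{\Sigma}^\ast)$ with long-run covariances $\sigma_{ij}^\ast=CoV[Y_{0,i}^\ast,Y_{0,j}^\ast]+\sum_{h=1}^\infty\big(CoV[Y_{0,i}^\ast,Y_{h,j}^\ast]+CoV[Y_{h,i}^\ast,Y_{0,j}^\ast]\big)$, which is exactly the setting of \citet[Theorem 1.7]{Ibram62} under assumption \eqref{A}. The Delta method applied to the ratio map $\f$ then transfers this to $\hat{\bnu}=\f(\overline{\Y^\ast})$.

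Next I would evaluate $\bold{\Sigma}^\ast$ by the same case split as in the factorial proof, exploiting that $(O_t)$ is independent of $(X_t)$. For $i=j=0$ one recovers $\sigma_{00}^\ast=\uptau(1-\uptau)+2\sum_{h=1}^\infty\gamma_O(h)$; for $i=0,\ j>0$ the factorisation $\e[O_tO_sX_s^j]=\e[O_tO_s]\e[X_s^j]$ gives $\sigma_{0j}^\ast=\sigma_{00}^\ast\mu_j$; and for $i,j>0$ the identity $\e[O_tX_t^i\,O_sX_s^j]=\uptau(|t-s|)\mu_{i,j}(t-s)$ yields
\begin{equation*}
	\sigma_{ij}^\ast=\uptau(\mu_{i+j}-\mu_i\mu_j)+\sigma_{00}^\ast\mu_i\mu_j+\sum_{h=1}^\infty\uptau(h)\big(\mu_{j,i}(h)+\mu_{i,j}(h)-2\mu_i\mu_j\big).
\end{equation*}
Here the lag-zero term simplifies at once via $\mu_{i,j}(0)=\mu_{i+j}$, producing the clean first summand $\mu_{i+j}-\mu_i\mu_j$ of \eqref{Maradona}.

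Finally, since $f_j(\x)=x_j/x_0$, the Jacobian $\D$ of $\f$ at $\bnu^\ast$ has the same bidiagonal form as before (with $\mu_j$ in place of $\mu_{(j)}$), so $\bold{\Sigma}=\D\bold{\Sigma}^\ast\D^\top$ collapses to $\sigma_{ij}=\tfrac{1}{\uptau^2}(\sigma_{ij}^\ast-\mu_i\mu_j\sigma_{00}^\ast)$; inserting the displayed $\sigma_{ij}^\ast$ and using $\sigma_{0j}^\ast=\sigma_{00}^\ast\mu_j$ reproduces the claimed \eqref{Maradona}. For the bias I would expand $f_j$ to second order: its only non-vanishing second derivatives at $\bnu^\ast$ are $h_{00}^{(j)}=2\mu_j/\uptau^2$ and $h_{0j}^{(j)}=-1/\uptau^2$, so $T\,\e[\hat{\mu}_j-\mu_j]\to\tfrac12 h_{00}^{(j)}\sigma_{00}^\ast+h_{0j}^{(j)}\sigma_{0j}^\ast=0$ by $\sigma_{0j}^\ast=\sigma_{00}^\ast\mu_j$, i.e. the bias is $\landau(T^{-1})$.

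I expect no genuinely new analytic obstacle: since both $(X_t)_{(m)}$ and $X_t^m$ are degree-$m$ polynomials in $X_t$, assumption \eqref{A} (finiteness of $\e|X_t|^{2m+\delta}$ together with exponentially decaying $\alpha$-mixing weights) supplies exactly the moment and dependence control required by \citet[Theorem 1.7]{Ibram62}, just as in the factorial case. The proof is therefore essentially a line-by-line transcription of Appendix \ref{A_CLT}, the only difference being cosmetic, namely that the lag-zero mixed raw moment collapses to the single raw moment $\mu_{i+j}$, so the first summand of \eqref{Maradona} appears in closed form without any auxiliary identity. The step most worth double-checking is the covariance case-analysis of $\bold{\Sigma}^\ast$, ensuring that the independence factorisations and the relation $\sigma_{0j}^\ast=\sigma_{00}^\ast\mu_j$ transfer verbatim, so that the Delta-method contraction and the bias cancellation go through unchanged.
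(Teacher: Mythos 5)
Your proposal is correct and follows essentially the same route as the paper's own proof in Appendix \ref{A_RawCLT}: augmenting with $\Y_t^\ast$, the same case-split computation of $\bold{\Sigma}^\ast$ using the independence of $(O_t)$ and $(X_t)$, the Delta method with the identical bidiagonal Jacobian giving $\sigma_{ij}=\tfrac{1}{\uptau^2}(\sigma_{ij}^\ast-\mu_i\mu_j\sigma_{00}^\ast)$, and the second-order bias cancellation via $\sigma_{0j}^\ast=\sigma_{00}^\ast\mu_j$. Your observation that the lag-zero simplification $\mu_{i,j}(0)=\mu_{i+j}$ removes the need for any analogue of Lemma \ref{Lebron}, and that assumption \eqref{A} covers $X_t^m$ just as it covers $(X_t)_{(m)}$, matches the paper's reasoning exactly.
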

\begin{proof}
	For the raw moments, the CLT is derived analogously to the case of the factorial moments. Hence, let us define the vector $ (\Y_t^\ast )$ as
	\begin{equation}
		\Y_t^\ast\coloneqq(Y_{t,0}^\ast,\ldots,Y_{t,m}^\ast)^\top=O_t(1,X_t,X_t^2;\ldots,X_t^m)^\top.
	\end{equation}
	For the mean of  $ (\Y_t^\ast) $, we obtain 
	\begin{equation}
		\bnu^\ast\coloneqq \e[\Y_t^\ast]= \uptau(1,\bnu^\top)^\top.
	\end{equation}
	Now, we assume that appropriate mixing assumptions on $ (X_t) $ and $ (O_t) $ hold according to \eqref{A}, \ie {\boldmath $\alpha$}-mixing with exponentially decreasing weights, which then hand over to  $ (\Y_t^\ast) $. Then, we have the CLT
	\begin{equation}\label{Iverson}
		\sqrt{T}\Big(\overline{\Y^\ast}-\bnu^\ast\Big) \ \xrightarrow{\text{d}}\ \norm\Big(\mathbf{0}, \bold{\Sigma}^\ast\Big)\quad \text{with}\quad \mathbf{\Sigma^\ast}=(\sigma_{ij}^\ast)_{i,j=0,\ldots,m},
	\end{equation}
	where
	\begin{equation}\label{Barca} 
		\sigma_{ij}^\ast=CoV[Y_{0,i}^\ast,Y_{0,j}^\ast]+\sum_{h=1}^\infty \Big(CoV[Y_{0,i}^\ast,Y_{h,j}^\ast]+CoV[Y_{h,i}^\ast,Y_{0,j}^\ast]\Big).
	\end{equation}
	These covariances compute as 
	\begin{equation}\label{Lakers}
		\sigma_{ij}^\ast=
		\begin{cases}
			\uptau(1-\uptau)+2\sum_{h=1}^\infty \gamma_O(h) & \text{if }i=j=0,\\
			\sigma_{00}^\ast\mu_j         &  \text{if }i=0,\ j>0,\\
			\uptau(\mu_{i+j}-\mu_i\mu_j)+\sigma_{00}^\ast\mu_i\mu_j& \text{if }i,j>0.\\
			\quad+\sum_{h=1}^\infty \uptau(h)\Big(\mu_{j,i}(h)+\mu_{i,j}(h)-2\mu_i\mu_j\Big) \\
		\end{cases}
	\end{equation}
	Here, $ \overline{\Y^\ast}=(\overline{O},\overline{O\,\Y}^\top)^\top $ are the required components for the calculation of $ \hat{\bnu} $. The derivations of the covariances follow the same steps as for the factorial moments. Thus, we use the same function $ \f $ as defined in Section \ref{A_CLT}. Then, $ \hat{\bnu}=\f(\overline{\Y^\ast}) $, $ \bnu=\f(\bnu^\ast) $, and the 
	Jacobian of $ \f $ evaluated in $ \bnu^\ast $ equals
	\begin{align*}
		\D= \frac{1}{\uptau}
		\begin{pmatrix} 
			-\mu & 1 & 0 & \cdots & 0\\
			-\mu_2 & 0 & 1 & \ddots  & \vdots\\
			\vdots & \vdots & \ddots & \ddots & 0\\
			-\mu_m & 0 & \cdots & 0  & 1\\
		\end{pmatrix}.
	\end{align*}
	So the linear Taylor approximation $ \hat{\bnu}\approx \bnu + \D(\overline{\Y^\ast}-\bnu^\ast)$ together with the CLT implies 
	\begin{align*}
		\sqrt{T}\Big(\hat{\bnu}-\bnu\Big) \ \xrightarrow{\text{d}}\ \norm\Big(\bold{0}, \bold{\Sigma}\Big)\quad \text{with}\quad \bold{\Sigma}=\D\bold{\Sigma}^\ast\D^\top.
	\end{align*}
	We can compute the entries $ (\sigma_{ij})_{i,j=1,\ldots,m} $ as 
	\begin{align*}
		\sigma_{ij}&=\sum_{k,l=0}^{m}d_{ik}d_{jl}\sigma_{kl}^\ast
		= d_{i0}d_{j0}\sigma_{00}^\ast + d_{i0}d_{jj}\sigma_{0j}^\ast + d_{ii}d_{j0}\sigma_{i0}^\ast + d_{ii}d_{jj}\sigma_{ij}^\ast\\
		&=\tfrac{1}{\uptau^2}\Big( \mu_i\mu_j\sigma_{00}^\ast - \mu_i\sigma_{0j}^\ast - \mu_j\sigma_{i0}^\ast + \sigma_{ij}^\ast\Big)=\tfrac{1}{\uptau^2}\Big(\sigma_{ij}^\ast- \mu_i\mu_j\sigma_{00}^\ast \Big)\\
		&=\tfrac{1}{\uptau}(\mu_{i+j}-\mu_i\mu_j)+\tfrac{1}{\uptau^2}\sum_{h=1}^\infty \uptau(h)\Big(\mu_{j,i}(h)+\mu_{i,j}(h)-2\mu_i\mu_j\Big),
	\end{align*}
	where in the last step, we utilized that $ \sigma_{0j}^\ast=\sigma_{00}^\ast\cdot \mu_j $. Analogously to the factorial moments, the bias for the raw moments is also negligible for practice. This is due to the quadratic Taylor approximation and follows the same steps as for the factorial moments.
\end{proof}
%
\subsubsection{Poisson Index of Dispersion}\label{A_RawPIndex}
%
Similarly to Section \ref{Ch_PInd}, we start from \eqref{Iniesta} and define a function $ g $ by
\begin{align*}
	g(x_1,x_2)=\frac{x_2}{x_1}-x_1.
\end{align*}
Since the difference between $ g $ and the function chosen in Section \ref{Ch_PInd} 
is only a constant 1, the partial derivatives coincide. Therefore, we get  the Jacobian $ \D $ and the Hessian $ \Hes $ by evaluating the partial derivatives in $ \bnu $, which leads to
\begin{align*}
	\D=\Big(-\frac{\mu_2}{\mu^2}-1,\frac{1}{\mu}\Big), \qquad \Hes=\frac{1}{\mu^3}
	\begin{pmatrix} 
		2\mu_2 & -\mu \\
		-\mu & 0 \\
	\end{pmatrix}.
\end{align*}

\vspace{0.5cm} 
Hence, the asymptotic variance and bias of $ \hat{I}^\poi $ is
\begin{align}
	\sigma_{\hat{I}^\poi}^2&=\tfrac{1}{T} \Big( d_1^2\sigma_{11}  +d_2^2\sigma_{22}+2 d_1d_2\sigma_{12}  \Big),\label{Davis}\\
	\mathbb{B}_{\hat{I}^\poi}&=\tfrac{1}{T} \Big( \tfrac{1}{2}h_{11}\sigma_{11}  +h_{12}\sigma_{12}  \Big).
\end{align}
\begin{proposition}
	The asymptotic variance of $\hat{I}^\poi$  for any Poisson model is given by 
	\begin{align*}
		\sigma_{\hat{I}^\poi}^2&=\tfrac{1}{T\uptau} \Bigg[ \big(\tfrac{\mu_2}{\mu^2}+1\big)^2(\mu_2-\mu^2)-\tfrac{2}{\mu}\big(\tfrac{\mu_2}{\mu^2}+1\big)(\mu_3-\mu\mu_2)+\tfrac{1}{\mu^2}(\mu_4-\mu_2^2)
		\\[1ex]&\quad +\tfrac{2}{\uptau}\sum_{h=1}^\infty\uptau(h)\Big( \big(\tfrac{\mu_2}{\mu^2}+1\big)^2\big(\mu_{1,1}(h)-\mu^2\big)+\tfrac{1}{\mu^2}\big(\mu_{2,2}(h)-\mu_2^2\big)
		\\[1ex]&\quad -\tfrac{1}{\mu}\big(\tfrac{\mu_2}{\mu^2}+1\big)\big(\mu_{2,1}(h)+\mu_{1,2}(h)-2\mu\mu_2\big) \Big) \Bigg].
	\end{align*}
\end{proposition}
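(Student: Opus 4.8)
The plan is to obtain $\sigma_{\hat{I}^\poi}^2$ by direct substitution into the variance formula \eqref{Davis}, feeding in the covariance structure \eqref{Maradona} supplied by Proposition \ref{CLT_Raw}. First I would read off the Jacobian entries from $\D$, namely $d_1=-\bigl(\tfrac{\mu_2}{\mu^2}+1\bigr)$ and $d_2=\tfrac{1}{\mu}$, so that $d_1^2=\bigl(\tfrac{\mu_2}{\mu^2}+1\bigr)^2$, $d_2^2=\tfrac{1}{\mu^2}$, and $2d_1d_2=-\tfrac{2}{\mu}\bigl(\tfrac{\mu_2}{\mu^2}+1\bigr)$. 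These are exactly the three coefficients that will multiply $\sigma_{11}$, $\sigma_{22}$, and $\sigma_{12}$ in \eqref{Davis}.

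Next I would instantiate \eqref{Maradona} at the three index pairs $(1,1)$, $(1,2)$, $(2,2)$. Using the relation $\mu_{k,l}(0)=\mu_{k+l}$ noted in the Definition of Section \ref{A_RawMom}, the lag-zero parts reduce to $\mu_2-\mu^2$, $\mu_3-\mu\mu_2$, and $\mu_4-\mu_2^2$, respectively; these will become the first three summands of the claimed expression once weighted by the coefficients above. The serial-dependence parts of the diagonal covariances $\sigma_{11}$ and $\sigma_{22}$ carry the series with prefactor $\tfrac{2}{\uptau^2}$, because $\mu_{i,i}(h)+\mu_{i,i}(h)=2\mu_{i,i}(h)$, whereas the off-diagonal $\sigma_{12}$ retains the genuinely asymmetric pair $\mu_{2,1}(h)+\mu_{1,2}(h)$ with the smaller prefactor $\tfrac{1}{\uptau^2}$.

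Substituting and collecting, the lag-zero contributions assemble immediately into $\bigl(\tfrac{\mu_2}{\mu^2}+1\bigr)^2(\mu_2-\mu^2)-\tfrac{2}{\mu}\bigl(\tfrac{\mu_2}{\mu^2}+1\bigr)(\mu_3-\mu\mu_2)+\tfrac{1}{\mu^2}(\mu_4-\mu_2^2)$. The one point requiring care is the factor of two: the extra $2$ already present in $2d_1d_2$ compensates exactly for the smaller $\tfrac{1}{\uptau^2}$ prefactor in the $\sigma_{12}$ series, so that after pulling $\tfrac{2}{\uptau^2}$ out of all three series terms one is left with a single series weighted by $\bigl(\tfrac{\mu_2}{\mu^2}+1\bigr)^2$, $\tfrac{1}{\mu^2}$, and $-\tfrac{1}{\mu}\bigl(\tfrac{\mu_2}{\mu^2}+1\bigr)$ against $\mu_{1,1}(h)-\mu^2$, $\mu_{2,2}(h)-\mu_2^2$, and $\mu_{2,1}(h)+\mu_{1,2}(h)-2\mu\mu_2$. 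Factoring the common $\tfrac{1}{T\uptau}$ and writing the series with its $\tfrac{2}{\uptau}$ prefactor then yields the stated formula verbatim.

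I expect this factor-of-two bookkeeping to be the only genuine obstacle; there is no analytic difficulty, the result being an algebraic identity once \eqref{Maradona} is inserted. The computation mirrors step for step the factorial-moment derivation in Appendix \ref{A_PInd_ThDeJong}, the sole changes being $\mu_{(k)}\rightsquigarrow\mu_k$ and the replacement of the mixed factorial moments by their raw counterparts. Unlike the factorial case, where Lemma \ref{Lebron} permitted a further reduction of the lag-zero terms, no analogous identity reduces raw mixed moments in general, so the displayed form is already the natural end state and no simplification step is attempted.
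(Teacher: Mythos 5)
Your proposal is correct and follows essentially the same route as the paper's own proof in Appendix \ref{A_RawPIndex}: both substitute the covariances \eqref{Maradona} into \eqref{Davis} with $d_1=-\bigl(\tfrac{\mu_2}{\mu^2}+1\bigr)$, $d_2=\tfrac{1}{\mu}$, and your factor-of-two bookkeeping (the $2$ in $2d_1d_2$ compensating the $\tfrac{1}{\uptau^2}$ prefactor of the asymmetric $\sigma_{12}$ series, so that all three series terms share the common $\tfrac{2}{\uptau}$ after factoring out $\tfrac{1}{T\uptau}$) is exactly the collection step the paper performs. Your closing observation that no analogue of Lemma \ref{Lebron} is available for raw mixed moments, so the displayed form is the natural end state, also matches the paper.
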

\begin{proof}
	Let us start by applying \eqref{Maradona} to \eqref{Davis}. Then, we get 
	\begin{align*}
		\sigma_{\hat{I}^\poi}^2&=\tfrac{1}{T}\Bigg[\big(-\tfrac{\mu_2}{\mu^2}-1\big)^2\Big(\tfrac{1}{\uptau}(\mu_2-\mu^2)+\tfrac{2}{\uptau^2}\sum_{h=1}^\infty\uptau(h)\big(\mu_{1,1}(h)-\mu^2\big)\Big)
		\\[1ex]&\quad+\tfrac{2}{\mu}\big(-\tfrac{\mu_2}{\mu^2}-1\big)\Big(\tfrac{1}{\uptau}(\mu_3-\mu\mu_2)+\tfrac{1}{\uptau^2}\sum_{h=1}^\infty\uptau(h)\big(\mu_{2,1}(h)+\mu_{1,2}(h)-2\mu\mu_2\big)\Big)
		\\[1ex]&\quad+\tfrac{1}{\mu^2}\Big(\tfrac{1}{\uptau}(\mu_4-\mu_2^2)+\tfrac{2}{\uptau^2}\sum_{h=1}^\infty\uptau(h)\big(\mu_{2,2}(h)-\mu_2^2\big)\Big) \Bigg] \\
		&=\tfrac{1}{T\uptau} \Bigg[\big(\tfrac{\mu_2}{\mu^2}+1\big)^2(\mu_2-\mu^2)-\tfrac{2}{\mu}\big(\tfrac{\mu_2}{\mu^2}+1\big)(\mu_3-\mu\mu_2)+\tfrac{1}{\mu^2}(\mu_4-\mu_2^2)
		\\[1ex]&\quad+\tfrac{2}{\uptau}\sum_{h=1}^\infty\uptau(h)\Big(\big(\tfrac{\mu_2}{\mu^2}+1\big)^2\big(\mu_{1,1}(h)-\mu^2\big)+\tfrac{1}{\mu^2}\big(\mu_{2,2}(h)-\mu_2^2\big)
		\\[1ex]&\quad-\tfrac{1}{\mu}\big(\tfrac{\mu_2}{\mu^2}+1\big)\big(\mu_{2,1}(h)+\mu_{1,2}(h)-2\mu\mu_2\big) \Big) \Bigg].
	\end{align*}
\end{proof}
\begin{proposition}
	The asymptotic bias of $ \hat{I}^\poi $ for any Poisson model is given by 
	\begin{align*}
		\mathbb{B}_{\hat{I}^\poi}=\tfrac{1}{T\uptau\mu^3} \Bigg[\mu_2^2-\mu_3\mu+\tfrac{2}{\uptau}\sum_{h=1}^\infty\uptau(h)\Big(\mu_2\mu_{1,1}(h)-\tfrac{\mu}{2}\big(\mu_{2,1}(h)+\mu_{1,2}(h)\big)\Big)\Bigg].
	\end{align*}
\end{proposition}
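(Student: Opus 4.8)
The plan is to evaluate the two-term bias expression $\mathbb{B}_{\hat{I}^\poi}=\tfrac{1}{T}\big(\tfrac{1}{2}h_{11}\sigma_{11}+h_{12}\sigma_{12}\big)$ directly, by substituting the Hessian entries read off from the displayed raw-moment Jacobian and Hessian (namely $h_{11}=2\mu_2/\mu^3$ and $h_{12}=-1/\mu^2$, while $h_{22}=0$ contributes nothing) together with the covariances $\sigma_{11}$ and $\sigma_{12}$ supplied by \eqref{Maradona}. This mirrors the factorial-moment computation proving Theorem \ref{DeJong}(ii) in Appendix \ref{A_PInd_ThDeJong}, with $\mu_{(k)}$ and $\mu_{(k,l)}(h)$ replaced throughout by the raw analogues $\mu_k$ and $\mu_{k,l}(h)$; in particular $\mu_{1+1}=\mu_2$ and $\mu_{1+2}=\mu_3$ use the convention $\mu_{i+j}=\mu_{i,j}(0)$ stated after \eqref{West}.

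First I would specialize \eqref{Maradona} to the two needed entries, obtaining $\sigma_{11}=\tfrac{1}{\uptau}(\mu_2-\mu^2)+\tfrac{2}{\uptau^2}\sum_{h=1}^\infty\uptau(h)(\mu_{1,1}(h)-\mu^2)$ and $\sigma_{12}=\tfrac{1}{\uptau}(\mu_3-\mu\mu_2)+\tfrac{1}{\uptau^2}\sum_{h=1}^\infty\uptau(h)(\mu_{2,1}(h)+\mu_{1,2}(h)-2\mu\mu_2)$. Plugging $\tfrac{1}{2}h_{11}=\mu_2/\mu^3$ and $h_{12}=-1/\mu^2$ into the bias expression then gives $\mathbb{B}_{\hat{I}^\poi}=\tfrac{1}{T}\big(\tfrac{\mu_2}{\mu^3}\sigma_{11}-\tfrac{1}{\mu^2}\sigma_{12}\big)$, and pulling out the common prefactor $\tfrac{1}{T\uptau\mu^3}$ leaves the lag-zero contribution $\mu_2(\mu_2-\mu^2)-\mu(\mu_3-\mu\mu_2)$ together with a series part weighted by $\tfrac{1}{\uptau}\sum_{h=1}^\infty\uptau(h)(\cdots)$.

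The only real work, which is entirely routine, is the bookkeeping of two algebraic cancellations. In the lag-zero part the cross term $\mu_2\mu^2$ cancels, leaving precisely $\mu_2^2-\mu_3\mu$. In the series part, the $h$-independent pieces $-\tfrac{2\mu_2\mu^2}{\uptau}\sum_{h=1}^\infty\uptau(h)$ coming from the $-\mu^2$ in $\sigma_{11}$ and $+\tfrac{2\mu^2\mu_2}{\uptau}\sum_{h=1}^\infty\uptau(h)$ coming from the $-2\mu\mu_2$ in $\sigma_{12}$ are equal and opposite and therefore vanish, so that what remains collapses to $\tfrac{2}{\uptau}\sum_{h=1}^\infty\uptau(h)\big(\mu_2\mu_{1,1}(h)-\tfrac{\mu}{2}(\mu_{2,1}(h)+\mu_{1,2}(h))\big)$. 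Assembling the two parts under the prefactor $\tfrac{1}{T\uptau\mu^3}$ yields the claimed formula. No step requires more than the definitions and \eqref{Maradona}, so I do not expect any genuine obstacle beyond careful algebra; the structural observation to keep in mind is simply that the constant correction distinguishing the raw-moment $g$ from its factorial counterpart does not affect any derivative, hence the identical derivation pattern applies.
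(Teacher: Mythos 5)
Your proposal is correct and follows essentially the same route as the paper's own proof: substituting $h_{11}=2\mu_2/\mu^3$, $h_{12}=-1/\mu^2$ and the covariances $\sigma_{11},\sigma_{12}$ from \eqref{Maradona} into $\mathbb{B}_{\hat{I}^\poi}=\tfrac{1}{T}\big(\tfrac{1}{2}h_{11}\sigma_{11}+h_{12}\sigma_{12}\big)$, then simplifying. The two cancellations you identify (the $\mu_2\mu^2$ terms at lag zero and the $h$-independent pieces in the series) are exactly the ``further simplification'' the paper carries out, so no gap remains.
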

\begin{proof}
	The asymptotic bias becomes
	\begin{align*}
		\mathbb{B}_{\hat{I}^\poi}&=\tfrac{1}{T} \Big(\tfrac{1}{2}h_{11}\sigma_{11}+ h_{12}\sigma_{12}\Big)\\
		&=\tfrac{1}{T}\Bigg[\tfrac{\mu_2}{\mu^3}\Big(\tfrac{1}{\uptau}(\mu_2-\mu^2)+\tfrac{2}{\uptau^2}\sum_{h=1}^\infty\uptau(h)\big(\mu_{1,1}(h)-\mu^2\big)\Big)
		\\[1ex]&\quad-\tfrac{1}{\mu^2}\Big(\tfrac{1}{\uptau}(\mu_3-\mu\mu_2)+\tfrac{1}{\uptau^2}\sum_{h=1}^\infty\uptau(h)\big(\mu_{2,1}(h)+\mu_{1,2}(h)-2\mu\mu_2\big)\Big)\Bigg]\\
		&=\tfrac{1}{T\uptau\mu^2}\Bigg[\tfrac{\mu_2}{\mu}(\mu_2-\mu^2)-(\mu_3-\mu\mu_2)
		\\[1ex]&\quad+\tfrac{1}{\uptau}\sum_{h=1}^\infty\uptau(h)\Big(    \tfrac{2\mu_2}{\mu}\big(\mu_{1,1}(h)-\mu^2\big)-\big(\mu_{2,1}(h)+\mu_{1,2}(h)-2\mu\mu_2\big)\Big)\Bigg].
	\end{align*}
	Further simplification leads to
	\begin{align*}
		\mathbb{B}_{\hat{I}^\poi}=\tfrac{1}{T\uptau\mu^3} \Bigg[\mu_2^2-\mu_3\mu+\tfrac{2}{\uptau}\sum_{h=1}^\infty\uptau(h)\Big(\mu_2\cdot\mu_{1,1}(h)-\tfrac{\mu}{2}\big(\mu_{2,1}(h)+\mu_{1,2}(h)\big)\Big)\Bigg].
	\end{align*}
\end{proof}

\end{document}